\documentclass[journal,10pt,twoside,twocolumn]{IEEEtran}
\usepackage{cite}
\usepackage[T1]{fontenc}
\usepackage{graphicx}
\usepackage{amssymb}
\usepackage{amsmath}
\usepackage{paralist}
\usepackage{subfigure}
\usepackage{booktabs} 
\usepackage{algorithm}
\usepackage{algorithmic}
\usepackage{amsthm}
\usepackage{multirow}
\usepackage{microtype}
\usepackage{tablefootnote}
\usepackage{fancyref}
\usepackage{balance}

\usepackage{framed}
\usepackage{adjustbox}

%
%
%
%
%

\usepackage{amssymb}
\usepackage{amsfonts}
\usepackage{mathrsfs}
\usepackage{xspace}
\usepackage{bm}
\usepackage{upgreek}

\newcommand{\safemath}[2]{\newcommand{#1}{\ensuremath{#2}\xspace}}



\safemath{\bma}{\mathbf{a}}
\safemath{\bmb}{\mathbf{b}}
\safemath{\bmc}{\mathbf{c}}
\safemath{\bmd}{\mathbf{d}}
\safemath{\bme}{\mathbf{e}}
\safemath{\bmf}{\mathbf{f}}
\safemath{\bmg}{\mathbf{g}}
\safemath{\bmh}{\mathbf{h}}
\safemath{\bmi}{\mathbf{i}}
\safemath{\bmj}{\mathbf{j}}
\safemath{\bmk}{\mathbf{k}}
\safemath{\bml}{\mathbf{l}}
\safemath{\bmm}{\mathbf{m}}
\safemath{\bmn}{\mathbf{n}}
\safemath{\bmo}{\mathbf{o}}
\safemath{\bmp}{\mathbf{p}}
\safemath{\bmq}{\mathbf{q}}
\safemath{\bmr}{\mathbf{r}}
\safemath{\bms}{\mathbf{s}}
\safemath{\bmt}{\mathbf{t}}
\safemath{\bmu}{\mathbf{u}}
\safemath{\bmv}{\mathbf{v}}
\safemath{\bmw}{\mathbf{w}}
\safemath{\bmx}{\mathbf{x}}
\safemath{\bmy}{\mathbf{y}}
\safemath{\bmz}{\mathbf{z}}
\safemath{\bmzero}{\mathbf{0}}
\safemath{\bmone}{\mathbf{1}}

\bmdefine{\biad}{a}
\bmdefine{\bibd}{b}
\bmdefine{\bicd}{c}
\bmdefine{\bidd}{d}
\bmdefine{\bied}{e}
\bmdefine{\bifd}{f}
\bmdefine{\bigd}{g}
\bmdefine{\bihd}{h}
\bmdefine{\biid}{i}
\bmdefine{\bijd}{j}
\bmdefine{\bikd}{k}
\bmdefine{\bild}{l}
\bmdefine{\bimd}{m}
\bmdefine{\bind}{n}
\bmdefine{\biod}{o}
\bmdefine{\bipd}{p}
\bmdefine{\biqd}{q}
\bmdefine{\bird}{r}
\bmdefine{\bisd}{s}
\bmdefine{\bitd}{t}
\bmdefine{\biud}{u}
\bmdefine{\bivd}{v}
\bmdefine{\biwd}{w}
\bmdefine{\bixd}{x}
\bmdefine{\biyd}{y}
\bmdefine{\bizd}{z}

\bmdefine{\bixid}{\xi}
\bmdefine{\bilambdad}{\lambda}
\bmdefine{\bimud}{\mu}
\bmdefine{\bithetad}{\theta}
\bmdefine{\biphid}{\phi}
\bmdefine{\bideltad}{\delta}

\safemath{\bmia}{\biad}
\safemath{\bmib}{\bibd}
\safemath{\bmic}{\bicd}
\safemath{\bmid}{\bidd}
\safemath{\bmie}{\bied}
\safemath{\bmif}{\bifd}
\safemath{\bmig}{\bigd}
\safemath{\bmih}{\bihd}
\safemath{\bmii}{\biid}
\safemath{\bmij}{\bijd}
\safemath{\bmik}{\bikd}
\safemath{\bmil}{\bild}
\safemath{\bmim}{\bimd}
\safemath{\bmin}{\bind}
\safemath{\bmio}{\biod}
\safemath{\bmip}{\bipd}
\safemath{\bmiq}{\biqd}
\safemath{\bmir}{\bird}
\safemath{\bmis}{\bisd}
\safemath{\bmit}{\bitd}
\safemath{\bmiu}{\biud}
\safemath{\bmiv}{\bivd}
\safemath{\bmiw}{\biwd}
\safemath{\bmix}{\bixd}
\safemath{\bmiy}{\biyd}
\safemath{\bmiz}{\bizd}

\safemath{\bmxi}{\bixid}
\safemath{\bmlambda}{\bilambdad}
\safemath{\bmmu}{\bimud}
\safemath{\bmtheta}{\bithetad}
\safemath{\bmphi}{\biphid}
\safemath{\bmdelta}{\bideltad}

\safemath{\bA}{\mathbf{A}}
\safemath{\bB}{\mathbf{B}}
\safemath{\bC}{\mathbf{C}}
\safemath{\bD}{\mathbf{D}}
\safemath{\bE}{\mathbf{E}}
\safemath{\bF}{\mathbf{F}}
\safemath{\bG}{\mathbf{G}}
\safemath{\bH}{\mathbf{H}}
\safemath{\bI}{\mathbf{I}}
\safemath{\bJ}{\mathbf{J}}
\safemath{\bK}{\mathbf{K}}
\safemath{\bL}{\mathbf{L}}
\safemath{\bM}{\mathbf{M}}
\safemath{\bN}{\mathbf{N}}
\safemath{\bO}{\mathbf{O}}
\safemath{\bP}{\mathbf{P}}
\safemath{\bQ}{\mathbf{Q}}
\safemath{\bR}{\mathbf{R}}
\safemath{\bS}{\mathbf{S}}
\safemath{\bT}{\mathbf{T}}
\safemath{\bU}{\mathbf{U}}
\safemath{\bV}{\mathbf{V}}
\safemath{\bW}{\mathbf{W}}
\safemath{\bX}{\mathbf{X}}
\safemath{\bY}{\mathbf{Y}}
\safemath{\bZ}{\mathbf{Z}}

\safemath{\bZero}{\mathbf{0}}
\safemath{\bOne}{\mathbf{1}}
\safemath{\bDelta}{\mathbf{\Delta}}
\safemath{\bLambda}{\mathbf{\UpLambda}}
\safemath{\bPhi}{\mathbf{\Upphi}}
\safemath{\bSigma}{\mathbf{\Upsigma}}
\safemath{\bOmega}{\mathbf{\Upomega}}
\safemath{\bTheta}{\mathbf{\Uptheta}}

\bmdefine{\biAd}{A}
\bmdefine{\biBd}{B}
\bmdefine{\biCd}{C}
\bmdefine{\biDd}{D}
\bmdefine{\biEd}{E}
\bmdefine{\biFd}{F}
\bmdefine{\biGd}{G}
\bmdefine{\biHd}{H}
\bmdefine{\biId}{I}
\bmdefine{\biJd}{J}
\bmdefine{\biKd}{K}
\bmdefine{\biLd}{L}
\bmdefine{\biMd}{M}
\bmdefine{\biOd}{N}
\bmdefine{\biPd}{O}
\bmdefine{\biQd}{P}
\bmdefine{\biRd}{R}
\bmdefine{\biSd}{S}
\bmdefine{\biTd}{T}
\bmdefine{\biUd}{U}
\bmdefine{\biVd}{V}
\bmdefine{\biWd}{W}
\bmdefine{\biXd}{X}
\bmdefine{\biYd}{Y}
\bmdefine{\biZd}{Z}

\bmdefine{\biDelta}{\Delta}
\bmdefine{\biLambda}{\Lambda}
\bmdefine{\biPhi}{\Phi}
\bmdefine{\biSigma}{\Sigma}
\bmdefine{\biOmega}{\Omega}
\bmdefine{\biTheta}{\Theta}

\safemath{\bimA}{\biAd}
\safemath{\bimB}{\biBd}
\safemath{\bimC}{\biCd}
\safemath{\bimD}{\biDd}
\safemath{\bimE}{\biEd}
\safemath{\bimF}{\biFd}
\safemath{\bimG}{\biGd}
\safemath{\bimH}{\biHd}
\safemath{\bimI}{\biId}
\safemath{\bimJ}{\biJd}
\safemath{\bimK}{\biKd}
\safemath{\bimL}{\biLd}
\safemath{\bimM}{\biMd}
\safemath{\bimN}{\biNd}
\safemath{\bimO}{\biOd}
\safemath{\bimP}{\biPd}
\safemath{\bimQ}{\biQd}
\safemath{\bimR}{\biRd}
\safemath{\bimS}{\biSd}
\safemath{\bimT}{\biTd}
\safemath{\bimU}{\biUd}
\safemath{\bimV}{\biVd}
\safemath{\bimW}{\biWd}
\safemath{\bimX}{\biXd}
\safemath{\bimY}{\biYd}
\safemath{\bimZ}{\biZd}

\safemath{\bimDelta}{\biDelta}
\safemath{\bimLambda}{\biLambda}
\safemath{\bimPhi}{\biPhi}
\safemath{\bimSigma}{\biSigma}
\safemath{\bimOmega}{\biOmega}
\safemath{\bimTheta}{\biTheta}

\safemath{\setA}{\mathcal{A}}
\safemath{\setB}{\mathcal{B}}
\safemath{\setC}{\mathcal{C}}
\safemath{\setD}{\mathcal{D}}
\safemath{\setE}{\mathcal{E}}
\safemath{\setF}{\mathcal{F}}
\safemath{\setG}{\mathcal{G}}
\safemath{\setH}{\mathcal{H}}
\safemath{\setI}{\mathcal{I}}
\safemath{\setJ}{\mathcal{J}}
\safemath{\setK}{\mathcal{K}}
\safemath{\setL}{\mathcal{L}}
\safemath{\setM}{\mathcal{M}}
\safemath{\setN}{\mathcal{N}}
\safemath{\setO}{\mathcal{O}}
\safemath{\setP}{\mathcal{P}}
\safemath{\setQ}{\mathcal{Q}}
\safemath{\setR}{\mathcal{R}}
\safemath{\setS}{\mathcal{S}}
\safemath{\setT}{\mathcal{T}}
\safemath{\setU}{\mathcal{U}}
\safemath{\setV}{\mathcal{V}}
\safemath{\setW}{\mathcal{W}}
\safemath{\setX}{\mathcal{X}}
\safemath{\setY}{\mathcal{Y}}
\safemath{\setZ}{\mathcal{Z}}
\safemath{\emptySet}{\varnothing}

\safemath{\colA}{\mathscr{A}}
\safemath{\colB}{\mathscr{B}}
\safemath{\colC}{\mathscr{C}}
\safemath{\colD}{\mathscr{D}}
\safemath{\colE}{\mathscr{E}}
\safemath{\colF}{\mathscr{F}}
\safemath{\colG}{\mathscr{G}}
\safemath{\colH}{\mathscr{H}}
\safemath{\colI}{\mathscr{I}}
\safemath{\colJ}{\mathscr{J}}
\safemath{\colK}{\mathscr{K}}
\safemath{\colL}{\mathscr{L}}
\safemath{\colM}{\mathscr{M}}
\safemath{\colN}{\mathscr{N}}
\safemath{\colO}{\mathscr{O}}
\safemath{\colP}{\mathscr{P}}
\safemath{\colQ}{\mathscr{Q}}
\safemath{\colR}{\mathscr{R}}
\safemath{\colS}{\mathscr{S}}
\safemath{\colT}{\mathscr{T}}
\safemath{\colU}{\mathscr{U}}
\safemath{\colV}{\mathscr{V}}
\safemath{\colW}{\mathscr{W}}
\safemath{\colX}{\mathscr{X}}
\safemath{\colY}{\mathscr{Y}}
\safemath{\colZ}{\mathscr{Z}}

\safemath{\opA}{\mathbb{A}}
\safemath{\opB}{\mathbb{B}}
\safemath{\opC}{\mathbb{C}}
\safemath{\opD}{\mathbb{D}}
\safemath{\opE}{\mathbb{E}}
\safemath{\opF}{\mathbb{F}}
\safemath{\opG}{\mathbb{G}}
\safemath{\opH}{\mathbb{H}}
\safemath{\opI}{\mathbb{I}}
\safemath{\opJ}{\mathbb{J}}
\safemath{\opK}{\mathbb{K}}
\safemath{\opL}{\mathbb{L}}
\safemath{\opM}{\mathbb{M}}
\safemath{\opN}{\mathbb{N}}
\safemath{\opO}{\mathbb{O}}
\safemath{\opP}{\mathbb{P}}
\safemath{\opQ}{\mathbb{Q}}
\safemath{\opR}{\mathbb{R}}
\safemath{\opS}{\mathbb{S}}
\safemath{\opT}{\mathbb{T}}
\safemath{\opU}{\mathbb{U}}
\safemath{\opV}{\mathbb{V}}
\safemath{\opW}{\mathbb{W}}
\safemath{\opX}{\mathbb{X}}
\safemath{\opY}{\mathbb{Y}}
\safemath{\opZ}{\mathbb{Z}}
\safemath{\opZero}{\mathbb{O}}
\safemath{\identityop}{\opI}


\safemath{\veca}{\bma}
\safemath{\vecb}{\bmb}
\safemath{\vecc}{\bmc}
\safemath{\vecd}{\bmd}
\safemath{\vece}{\bme}
\safemath{\vecf}{\bmf}
\safemath{\vecg}{\bmg}
\safemath{\vech}{\bmh}
\safemath{\veci}{\bmi}
\safemath{\vecj}{\bmj}
\safemath{\veck}{\bmk}
\safemath{\vecl}{\bml}
\safemath{\vecm}{\bmm}
\safemath{\vecn}{\bmn}
\safemath{\veco}{\bmo}
\safemath{\vecp}{\bmp}
\safemath{\vecq}{\bmq}
\safemath{\vecr}{\bmr}
\safemath{\vecs}{\bms}
\safemath{\vect}{\bmt}
\safemath{\vecu}{\bmu}
\safemath{\vecv}{\bmv}
\safemath{\vecw}{\bmw}
\safemath{\vecx}{\bmx}
\safemath{\vecy}{\bmy}
\safemath{\vecz}{\bmz}

\safemath{\veczero}{\bmzero}
\safemath{\vecone}{\bmone}
\safemath{\vecxi}{\bmxi}
\safemath{\veclambda}{\bmlambda}
\safemath{\vecmu}{\bmmu}
\safemath{\vectheta}{\bmtheta}
\safemath{\vecphi}{\bmphi}
\safemath{\vecdelta}{\bmdelta}

\safemath{\matA}{\bA}
\safemath{\matB}{\bB}
\safemath{\matC}{\bC}
\safemath{\matD}{\bD}
\safemath{\matE}{\bE}
\safemath{\matF}{\bF}
\safemath{\matG}{\bG}
\safemath{\matH}{\bH}
\safemath{\matI}{\bI}
\safemath{\matJ}{\bJ}
\safemath{\matK}{\bK}
\safemath{\matL}{\bL}
\safemath{\matM}{\bM}
\safemath{\matN}{\bN}
\safemath{\matO}{\bO}
\safemath{\matP}{\bP}
\safemath{\matQ}{\bQ}
\safemath{\matR}{\bR}
\safemath{\matS}{\bS}
\safemath{\matT}{\bT}
\safemath{\matU}{\bU}
\safemath{\matV}{\bV}
\safemath{\matW}{\bW}
\safemath{\matX}{\bX}
\safemath{\matY}{\bY}
\safemath{\matZ}{\bZ}
\safemath{\matzero}{\bmzero}

\safemath{\matDelta}{\bDelta}
\safemath{\matLambda}{\bLambda}
\safemath{\matPhi}{\bPhi}
\safemath{\matSigma}{\bSigma}
\safemath{\matOmega}{\bOmega}
\safemath{\matTheta}{\bTheta}

\safemath{\matidentity}{\matI}
\safemath{\matone}{\matO}


\safemath{\rnda}{A}
\safemath{\rndb}{B}
\safemath{\rndc}{C}
\safemath{\rndd}{D}
\safemath{\rnde}{E}
\safemath{\rndf}{F}
\safemath{\rndg}{G}
\safemath{\rndh}{H}
\safemath{\rndi}{I}
\safemath{\rndj}{J}
\safemath{\rndk}{K}
\safemath{\rndl}{L}
\safemath{\rndm}{M}
\safemath{\rndn}{N}
\safemath{\rndo}{O}
\safemath{\rndp}{P}
\safemath{\rndq}{Q}
\safemath{\rndr}{R}
\safemath{\rnds}{S}
\safemath{\rndt}{T}
\safemath{\rndu}{U}
\safemath{\rndv}{V}
\safemath{\rndw}{W}
\safemath{\rndx}{X}
\safemath{\rndy}{Y}
\safemath{\rndz}{Z}

\safemath{\rveca}{\bimA}
\safemath{\rvecb}{\bimB}
\safemath{\rvecc}{\bimC}
\safemath{\rvecd}{\bimD}
\safemath{\rvece}{\bimE}
\safemath{\rvecf}{\bimF}
\safemath{\rvecg}{\bimG}
\safemath{\rvech}{\bimH}
\safemath{\rveci}{\bimI}
\safemath{\rvecj}{\bimJ}
\safemath{\rveck}{\bimK}
\safemath{\rvecl}{\bimL}
\safemath{\rvecm}{\bimM}
\safemath{\rvecn}{\bimN}
\safemath{\rveco}{\bomO}
\safemath{\rvecp}{\bimP}
\safemath{\rvecq}{\bimQ}
\safemath{\rvecr}{\bimR}
\safemath{\rvecs}{\bimS}
\safemath{\rvect}{\bimT}
\safemath{\rvecu}{\bimU}
\safemath{\rvecv}{\bimV}
\safemath{\rvecw}{\bimW}
\safemath{\rvecx}{\bimX}
\safemath{\rvecy}{\bimY}
\safemath{\rvecz}{\bimZ}

\safemath{\rvecxi}{\bmxi}
\safemath{\rveclambda}{\bmlambda}
\safemath{\rvecmu}{\bmmu}
\safemath{\rvectheta}{\bmtheta}
\safemath{\rvecphi}{\bmphi}

\safemath{\rmatA}{\bimA}
\safemath{\rmatB}{\bimB}
\safemath{\rmatC}{\bimC}
\safemath{\rmatD}{\bimD}
\safemath{\rmatE}{\bimE}
\safemath{\rmatF}{\bimF}
\safemath{\rmatG}{\bimG}
\safemath{\rmatH}{\bimH}
\safemath{\rmatI}{\bimI}
\safemath{\rmatJ}{\bimJ}
\safemath{\rmatK}{\bimK}
\safemath{\rmatL}{\bimL}
\safemath{\rmatM}{\bimM}
\safemath{\rmatN}{\bimN}
\safemath{\rmatO}{\bimO}
\safemath{\rmatP}{\bimP}
\safemath{\rmatQ}{\bimQ}
\safemath{\rmatR}{\bimR}
\safemath{\rmatS}{\bimS}
\safemath{\rmatT}{\bimT}
\safemath{\rmatU}{\bimU}
\safemath{\rmatV}{\bimV}
\safemath{\rmatW}{\bimW}
\safemath{\rmatX}{\bimX}
\safemath{\rmatY}{\bimY}
\safemath{\rmatZ}{\bimZ}

\safemath{\rmatDelta}{\bimDelta}
\safemath{\rmatLambda}{\bimLambda}
\safemath{\rmatPhi}{\bimPhi}
\safemath{\rmatSigma}{\bimSigma}
\safemath{\rmatOmega}{\bimOmega}
\safemath{\rmatTheta}{\bimTheta}
 
%
%

\usepackage{amssymb}
\usepackage{amsfonts}
\usepackage{mathrsfs}
\usepackage{xspace}
\usepackage{bm}
\usepackage{fancyref}
\usepackage{textcomp}

\usepackage{multirow}
\usepackage{stmaryrd}


\newenvironment{textbmatrix}{	\setlength{\arraycolsep}{2.5pt}%
								\big[\begin{matrix}}{\end{matrix}\big]%
								\raisebox{0.08ex}{\vphantom{M}}}


\def\be{\begin{equation}}
\def\ee{\end{equation}}
\def\een{\nonumber \end{equation}}
\def\mat{\begin{bmatrix}}
\def\emat{\end{bmatrix}}
\def\btm{\begin{textbmatrix}}
\def\etm{\end{textbmatrix}}

\def\ba#1\ea{\begin{align}#1\end{align}}
\def\bas#1\eas{\begin{align*}#1\end{align*}}
\def\bs#1\es{\begin{split}#1\end{split}}
\def\bg#1\eg{\begin{gather}#1\end{gather}}
\def\bml#1\eml{\begin{multline}#1\end{multline}}
\def\bi#1\ei{\begin{itemize}#1\end{itemize}}



\newcommand{\lefto}{\mathopen{}\left}



\DeclareMathOperator{\sign}{sign}			
\DeclareMathOperator*{\argmin}{arg\;min}		
\DeclareMathOperator*{\argmax}{arg\;max}		
\DeclareMathOperator{\Exop}{\opE}			
\DeclareMathOperator{\Varop}{\opV\!\mathrm{ar}} 


\newcommand{\abs}[1]{\lefto\lvert#1\right\rvert}		



\newcommand{\est}[1]{\ensuremath{\hat{#1}}} 	

\safemath{\dirac}{\delta}					
\safemath{\krond}{\dirac}					

\safemath{\upto}{\uparrow}
\safemath{\downto}{\downarrow}
\safemath{\iu}{j}							
\safemath{\ev}{\lambda}						
\safemath{\hilseqspace}{l^{2}}				
\newcommand{\banachfunspace}[1]{\setL^{#1}}	
\safemath{\hilfunspace}{\banachfunspace{2}}	

\safemath{\SNR}{\textit{SNR}} 				
\safemath{\PAR}{\textit{PAR}} 				
\safemath{\No}{N_0}							
\safemath{\Es}{E_s}							
\safemath{\Eb}{E_b}							
\safemath{\EbNo}{\frac{\Eb}{\No}}
\safemath{\EsNo}{\frac{\Es}{\No}}

\DeclareMathOperator{\CHop}{\ensuremath{\opH}} 
\safemath{\tvir}{\rndh_{\CHop}}				
\safemath{\tvtf}{\rndl_{\CHop}}				
\safemath{\spf}{\rnds_{\CHop}}				
\safemath{\bff}{H_{\CHop}}					

\safemath{\ircf}{r_{h}}						
\safemath{\tftvcf}{r_{s}}					
\safemath{\tfcf}{r_{l}}						
\safemath{\bfcf}{r_{H}}						

\safemath{\tcorr}{c_h}						
\safemath{\scf}{c_{s}}						
\safemath{\tfcorr}{c_{l}}					
\safemath{\fcorr}{c_{H}}						

\safemath{\mi}{I}							
\safemath{\capacity}{C}						

\safemath{\normal}{\mathcal{N}}			
\safemath{\jpg}{\mathcal{CN}}			
\safemath{\mchain}{\leftrightarrow}		

\safemath{\dB}{\,\mathrm{dB}}
\safemath{\dBm}{\,\mathrm{dBm}}
\safemath{\Hz}{\,\mathrm{Hz}}
\safemath{\kHz}{\,\mathrm{kHz}}
\safemath{\MHz}{\,\mathrm{MHz}}
\safemath{\GHz}{\,\mathrm{GHz}}
\safemath{\s}{\,\mathrm{s}}
\safemath{\ms}{\,\mathrm{ms}}
\safemath{\mus}{\,\mathrm{\text{\textmu}s}}
\safemath{\ns}{\,\mathrm{ns}}
\safemath{\ps}{\,\mathrm{ps}}
\safemath{\meter}{\,\mathrm{m}}
\safemath{\mm}{\,\mathrm{mm}}
\safemath{\cm}{\,\mathrm{cm}}
\safemath{\m}{\,\mathrm{m}}
\safemath{\W}{\,\mathrm{W}}
\safemath{\mW}{\, \mathrm{mW}}
\safemath{\J}{\,\mathrm{J}}
\safemath{\K}{\,\mathrm{K}}
\safemath{\bit}{\,\mathrm{bit}}
\safemath{\nat}{\,\mathrm{nat}}


\safemath{\define}{\triangleq}			

\safemath{\equivalent}{\sim}
\safemath{\distas}{\sim}					
\safemath{\sdiff}{\Delta}				

\safemath{\reals}{\mathbb{R}}
\safemath{\positivereals}{\reals_{+}}
\safemath{\integers}{\mathbb{Z}}
\safemath{\posint}{\integers_{+}}
\safemath{\naturals}{\mathbb{N}}
\safemath{\posnaturals}{\naturals_{+}}
\safemath{\complexset}{\mathbb{C}}
\safemath{\rationals}{\mathbb{Q}}

\newcommand*{\fancyrefapplabelprefix}{app}		
\newcommand*{\fancyrefthmlabelprefix}{thm}		
\newcommand*{\fancyreflemlabelprefix}{lem}		
\newcommand*{\fancyrefcorlabelprefix}{cor}		
\newcommand*{\fancyrefdeflabelprefix}{def}		
\newcommand*{\fancyrefproplabelprefix}{prop}		
\newcommand*{\fancyrefexmpllabelprefix}{exmpl}
\newcommand*{\fancyrefalglabelprefix}{alg}		
\newcommand*{\fancyreftbllabelprefix}{tbl}		
\newcommand*{\fancyrefremlabelprefix}{rem}		

\frefformat{vario}{\fancyrefseclabelprefix}{Section~#1}
\frefformat{vario}{\fancyrefthmlabelprefix}{Theorem~#1}
\frefformat{vario}{\fancyreftbllabelprefix}{Table~#1}
\frefformat{vario}{\fancyreflemlabelprefix}{Lemma~#1}
\frefformat{vario}{\fancyrefcorlabelprefix}{Corollary~#1}
\frefformat{vario}{\fancyrefdeflabelprefix}{Definition~#1}
\frefformat{vario}{\fancyreffiglabelprefix}{Fig.~#1}
\frefformat{vario}{\fancyrefapplabelprefix}{Appendix~#1}
\frefformat{vario}{\fancyrefeqlabelprefix}{(#1)}
\frefformat{vario}{\fancyrefproplabelprefix}{Proposition~#1}
\frefformat{vario}{\fancyrefexmpllabelprefix}{Example~#1}
\frefformat{vario}{\fancyrefalglabelprefix}{Algorithm~#1}
\frefformat{vario}{\fancyrefremlabelprefix}{Remark~#1}

\newtheorem{thm}{Theorem}

\newtheorem{defi}{Definition}
\newtheorem{lem}[thm]{Lemma}

 \newtheorem{rem}{Remark}

\safemath{\dictab}{[\,\dicta\,\,\dictb\,]}

\safemath{\ysig}{\bmy}
\safemath{\ysighat}{\hat{\ysig}}
\safemath{\ysigdim}{M}
\safemath{\xsig}{\bmx}
\safemath{\xsigdim}{N}
\safemath{\nx}{n_x}
\safemath{\zsig}{\bmz}
\safemath{\zsigdim}{\ysigdim}
\safemath{\rsig}{\bmr}
\safemath{\Adict}{\bA}
\safemath{\Adicttilde}{\widetilde{\Adict}}
\safemath{\Adictdim}{\outputdim\times\xsigdim}
\safemath{\avec}{\bma}
\safemath{\avectilde}{\tilde{\avec}}
\safemath{\Bdict}{\bB}
\safemath{\Bdicttilde}{\widetilde{\Bdict}}
\safemath{\Cdict}{\bC}
\safemath{\cvec}{\bmc}
\safemath{\Ddict}{\bD}
\safemath{\Ddictdim}{\ysigdim\times\xsigdim}
\safemath{\dvec}{\bmd}
\safemath{\Ddicttilde}{\widetilde{\bD}}
\safemath{\Bonb}{\bB}
\safemath{\bvec}{\bmb}
\safemath{\Bonbdim}{\ysigdim\times\ysigdim}
\safemath{\noise}{\bmn}
\safemath{\noisedim}{\ysigim}
\safemath{\err}{\bme}
\safemath{\errdim}{\ysigdim}
\safemath{\errset}{\setE}
\safemath{\nerr}{n_e}
\safemath{\delop}{\bP_\errset}
\safemath{\delopc}{\bP_{{\errset}^c}}

%

\safemath{\cplxi}{\imath}
\safemath{\cplxj}{\jmath}

\safemath{\dict}{\matD}
\safemath{\inputdim}{N}		
\safemath{\outputdim}{M}		
\safemath{\sparsity}{S}	
\safemath{\inputdimA}{{N_a}}	
\safemath{\inputdimB}{{N_b}}	
\safemath{\elemA}{{n_a}}	
\safemath{\elemB}{{n_b}}	
\safemath{\resA}{\matR_a}	
\safemath{\resB}{\matR_b}	
\safemath{\subD}{\matS} 
\safemath{\subA}{\matS_a} 
\safemath{\subB}{\matS_b} 
\safemath{\dicta}{\matA} 	
\safemath{\dictb}{\matB} 	
\safemath{\hollowS}{H}
\safemath{\hollowA}{H_a}
\safemath{\hollowB}{H_b}
\safemath{\cross}{Z}
\safemath{\coh}{\mu_d}			
\safemath{\coha}{\mu_a}			
\safemath{\cohb}{\mu_b}			
\safemath{\mubs}{\nu}	
\safemath{\cohm}{\mu_m} 
\safemath{\dictset}{\setD}	
\safemath{\dictsetp}{\dictset(\coh,\coha,\cohb)}	
\safemath{\dictsetgen}{\dictset_\text{gen}}
\safemath{\dictsetgenp}{\dictsetgen(\coh)}
\safemath{\dictsetonb}{\dictset_\text{onb}}
\safemath{\dictsetonbp}{\dictsetonb(\coh)}

\safemath{\leftside}{U}
\safemath{\rightsideA}{R_a}
\safemath{\rightsideB}{R_b}

\safemath{\indexS}{\setI_S} 

\safemath{\na}{n_a}			
\safemath{\nb}{n_b}			
\safemath{\coeffa}{p_i}	
\safemath{\coeffb}{q_j}	
\safemath{\seta}{\setP}		
\safemath{\setb}{\setQ}     
\safemath{\setw}{\setW}	
\safemath{\setz}{\setZ}	
\safemath{\cola}{\veca}		
\safemath{\colb}{\vecb}		
\safemath{\cold}{\vecd}		
\safemath{\inputvec}{\vecx} 	
\safemath{\error}{\vece}	
\safemath{\noiseout}{\vecz} 	
\safemath{\inputvecel}{x}
\safemath{\inputveca}{\vecx_a}
\safemath{\inputvecb}{\vecx_b}
\safemath{\outputvec}{\vecy}	
\safemath{\lambdamin}{\lambda_{\mathrm{min}}}


\safemath{\elltwo}{\ell_2}
\safemath{\ellone}{\ell_1}
\safemath{\ellzero}{\ell_0}
\safemath{\ellinf}{\ell_\infty}
\safemath{\ellinftilde}{\ell_{\widetilde\infty}}
\safemath{\licard}{Z(\coh,\coha,\cohb)}
\safemath{\xsol}{\hat{x}}
\safemath{\xbord}{x_b}		
\safemath{\xstat}{x_s}		
\safemath{\xstatLone}{\tilde{x}_s}
\safemath{\order}{\mathcal{O}} 
\safemath{\scales}{\Theta} 
\safemath{\ones}{\mathbf{1}} 
\safemath{\zeroes}{\mathbf{0}} 
\safemath{\thlone}{\kappa(\coh,\cohb)} 
\safemath{\constoneA}{\delta} 
\safemath{\constoneB}{\epsilon} 
\safemath{\nlarge}{L}				   
\safemath{\sumlarge}{S_\nlarge}
\safemath{\maxlarger}{P_\nlarge}	   
\safemath{\Pzero}{\textrm{P0}}	
\safemath{\Pone}{\textrm{P1}}
\safemath{\vecfir}{\vecw}			 
\safemath{\vecsec}{\vecz}
\safemath{\elvecfir}{w}              
\safemath{\elvecsec}{z}				 
\safemath{\nlargefir}{n}
\safemath{\normout}{\gamma}
\safemath{\auxfun}{h}
\safemath{\supp}{\textrm{supp}}

\safemath{\indexa}{\ell}
\safemath{\indexb}{r}
\safemath{\indexc}{i}
\safemath{\indexd}{j}

\safemath{\project}{P}

\linespread{0.985}

\usepackage[usenames,dvipsnames]{color}

\allowdisplaybreaks 

\newtheorem{alg}{Algorithm} 
	


\newcommand{\xrv}[1]{X#1}

\newcommand{\resid}[1]{\bmr#1}



\setlength{\textfloatsep}{8pt}


\safemath{\NT}{N_\textnormal{T}}
\safemath{\tmax}{{t_\textnormal{max}}}
\safemath{\LAMA}{\textrm{LAMA}}
\safemath{\MRT}{\textrm{MRT}}
\safemath{\betamax}{\beta^\textnormal{max}}
\safemath{\betamin}{\beta^\textnormal{min}}

\safemath{\Nomin}{\No^\textnormal{min}(\beta)}
\safemath{\Nomax}{\No^\textnormal{max}(\beta)}

\safemath{\MAP}{\textnormal{MAP}}
\safemath{\IO}{\textnormal{IO}}
\safemath{\Opt}{\textnormal{O}}
\safemath{\JO}{\textnormal{JO}}
\safemath{\Nopost}{N_{0}^\textnormal{post}}
\safemath{\MT}{N}
\safemath{\MR}{M}
\safemath{\Tran}{\textnormal{T}}
\safemath{\Herm}{\textnormal{H}}
\safemath{\row}{\textnormal{r}}
\safemath{\col}{\textnormal{c}}

\safemath{\dd}{\textnormal{d}}

\begin{document}

\title{Optimal Data Detection and Signal Estimation \\  in Systems with Input Noise}

\author{Ramina Ghods, Charles Jeon, Arian Maleki, and Christoph Studer
\thanks{R.~Ghods, C. Jeon and C.~Studer were with the School of ECE, Cornell University, Ithaca, NY; e-mail:  RG is now with Carnegie Mellon University, Pittsburgh, PA; {rghods@cs.cmu.edu}; CJ is with Apple Inc., San Diego, CA; CS is with ETH Zurich, Zurich, Switzerland; {studer@ethz.ch}.}
\thanks{A. Maleki is with Department of Statistics at Columbia University, New York City, NY; {arian@stat.columbia.edu}.}
\thanks{Part of this paper on massive MIMO detection has been presented at the 53rd Annual Allerton Conference on Communication, Control, and Computing~\cite{GJMS_2015_conf}. The present paper provides algorithm details and theoretical results for AMPI, and extends the method to sparse signal recovery in compressive sensing. }
}
\maketitle

\begin{abstract}
Practical systems often suffer from hardware impairments that already appear during signal generation.
Despite the limiting effect of such input-noise impairments on signal processing systems, they are routinely ignored in the literature. In this paper, we propose an algorithm for data detection and signal estimation, referred to as Approximate Message Passing  with Input noise (AMPI), which takes into account  input-noise impairments. 
To demonstrate the efficacy of AMPI, we investigate two applications: Data detection in large multiple-input multiple-output (MIMO) wireless systems and sparse signal recovery in compressive sensing. 
For both applications, we provide precise conditions in the large-system limit for which AMPI achieves optimal performance.
We furthermore use simulations to demonstrate that  AMPI achieves near-optimal performance at low complexity in realistic, finite-dimensional systems.
\end{abstract}


\begin{IEEEkeywords}
 Approximate message passing (AMP), compressive sensing, data detection, hardware impairments,  input noise,    massive MIMO  systems, noise folding,  sparsity, state evolution.
\end{IEEEkeywords}


\section{Introduction}
\label{sec:intro}
We consider a general class of data detection and signal estimation problems in a noisy linear channel affected by input noise. As illustrated in \fref{fig:sysmodel}, we are interested in recovering the $N$-dimensional \emph{input signal} $\vecs\in\complexset^N$ observed from the following model. The input signal $\vecs$ with prior distribution $p(\vecs)=\prod_{\ell=1}^{N}p(s_\ell)$ is affected by input-noise characterized by the statistical relation $p(\bmx|\bms)=\prod_{\ell=1}^{N}p(x_\ell|s_\ell)$. 
The generality of this  input-noise model captures a wide range of hardware and system impairments, including hardware non-idealities that exhibit statistical dependence between impairments and the input signal (e.g., phase noise) as well as deterministic effects (e.g., non-linearities).
The  impaired signal $\bmx\in\complexset^N$, which we refer to as the \emph{effective input signal}, is then passed to a noisy linear transform that is modeled as 
\begin{align}\label{eq:sysmodel}
\bmy = \bH\bmx + \bmn.
\end{align}
Here, the vector $\vecy\in\complexset^M$ is the \emph{measured signal} and $M$ denotes the number of measurements, the \emph{system matrix} $\bH\in\complexset^{M\times N}$ represents the measurement process, and the vector $\bmn\in\complexset^M$ models measurement noise.
We assume that the entries of the noise vector $\bmn$ are i.i.d.\ circularly-symmetric complex Gaussian with variance~$\No$.
In what follows, we make use of the \emph{system ratio} defined as $\beta={N}/{M}$ and the following definitions:
\begin{defi} 
	We define the \emph{large system limit} by fixing the system ratio $\beta=\MT/\MR$ and by letting $\MT\to\infty$.
\end{defi}
\begin{defi} 
	A matrix $\bH$ describes  \emph{uniform linear measurements} if the entries of $\bH$ are i.i.d.\ circularly-symmetric complex Gaussian with variance $1/\MR$, i.e., $H_{i,j}\sim\setC\setN(0,1/M)$. 
\end{defi}

\begin{figure}[tp]
\centering
\includegraphics[width=0.8\columnwidth]{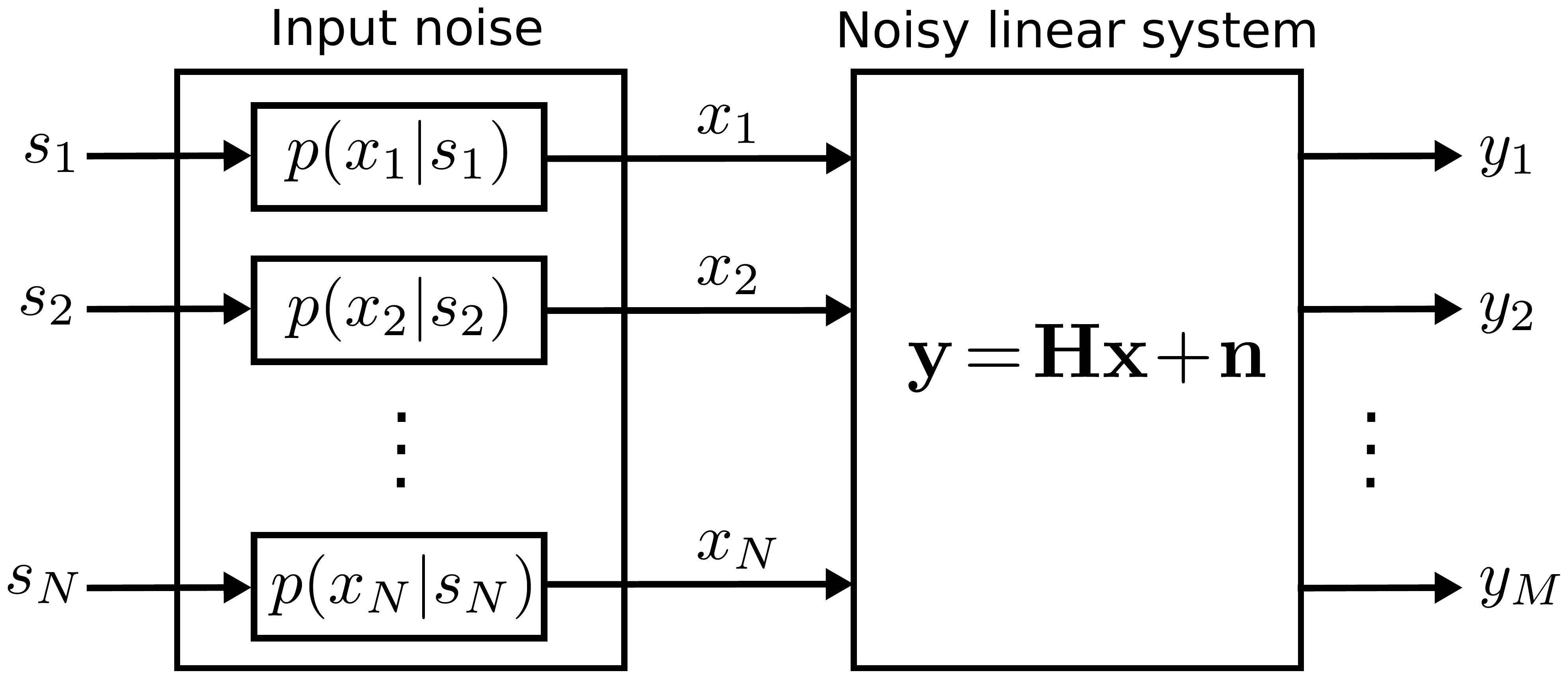}
\vspace{-0.1cm}
\caption{Illustration of a noisy linear system affected by input noise. The input signal $\bms$ is corrupted by input noise, resulting in the effective input signal~$\bmx$ that is observed through a noisy linear system. The goal is to recover the input signal~$\bms$ from the noisy observations $\bmy$.}
\label{fig:sysmodel}
\end{figure}
\subsection{Two Application Examples}
While numerous real-world systems suffer from input noise, we focus on two prominent scenarios.

\subsubsection{Massive MIMO Data Detection}
Massive multiple-input multiple-output (MIMO) is one of the core technologies in fifth-generation (5G) wireless systems~\cite{ABCHLAZ2014}. The idea is to equip the infrastructure base-stations with hundreds of antenna elements while simultaneously serving a smaller number of users.
One critical challenge in the realization of this technology is the computational complexity of data detection at the base-station~\cite{WBVSCJD2013}. While recent results have shown that the large dimensionality of massive MIMO can be exploited to  design near-optimal data-detection algorithms~\cite{JGMS2015,JGMS2015conf,wu2014low} using approximate message passing (AMP)~\cite{donoho2009message}, these methods ignore the fact that realistic communication systems are affected by impairments that already arise at the transmitter~\cite{Studer_Tx_OFDM,Tx_Replica}. 
In this paper, we introduce AMPI (short for AMP with input noise), which mitigates transmit-side RF impairments during data detection.
AMPI outperforms existing data-detection methods, e.g.,~\cite{JGMS2015conf}, that ignore the presence of input-noise impairments at virtually no additional computational complexity.

\fref{fig:MIMO} illustrates the symbol error-rate (SER) performance of AMPI in a symmetric massive MIMO system with $128$ user equipments (UEs) transmitting QPSK and $128$ base-station (BS) antennas. As in~\cite{Studer_Tx_OFDM,schenk2008rf,Tx_Replica},  the input noise is modeled as complex Gaussian noise.
We see that AMPI outperforms the LAMA algorithm~\cite{JGMS2015,JGMS2015conf}, which achieves---under certain conditions on the MIMO system---the error-rate performance of the individually-optimal (IO) data detector in absence of input noise. 
AMPI entails virtually no complexity increase over LAMA and  achieves comparable SER to whitening-based methods~\cite{Studer_Tx_OFDM}, which are optimal for Gaussian input noise but result in prohibitively high complexity in massive MIMO.

\begin{figure}[tp]
\centering
\subfigure[Symbol-error rate (SER) versus average SNR in a $128$ user equipment (UE), $128$ base-station antenna massive MIMO system with QPSK.]{\includegraphics[width=0.47\columnwidth]{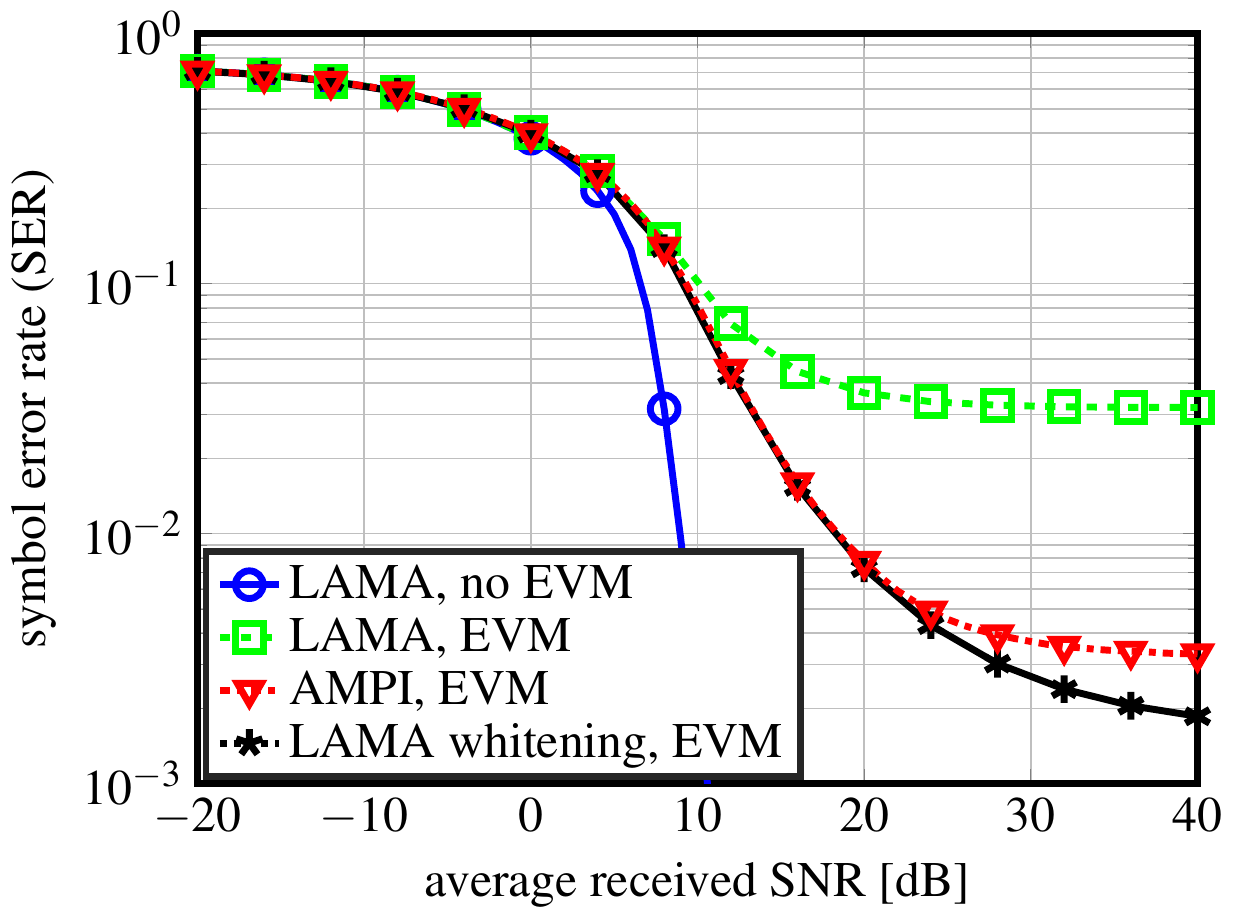}\label{fig:MIMO}}
\hspace{0.2cm}
\subfigure[Reconstruction SNR of a sparse signal recovery task for a  $5\%$ sparse signal of dimension $N=1000$ and 20\,dB SNR.]{\includegraphics[width=0.47\columnwidth]{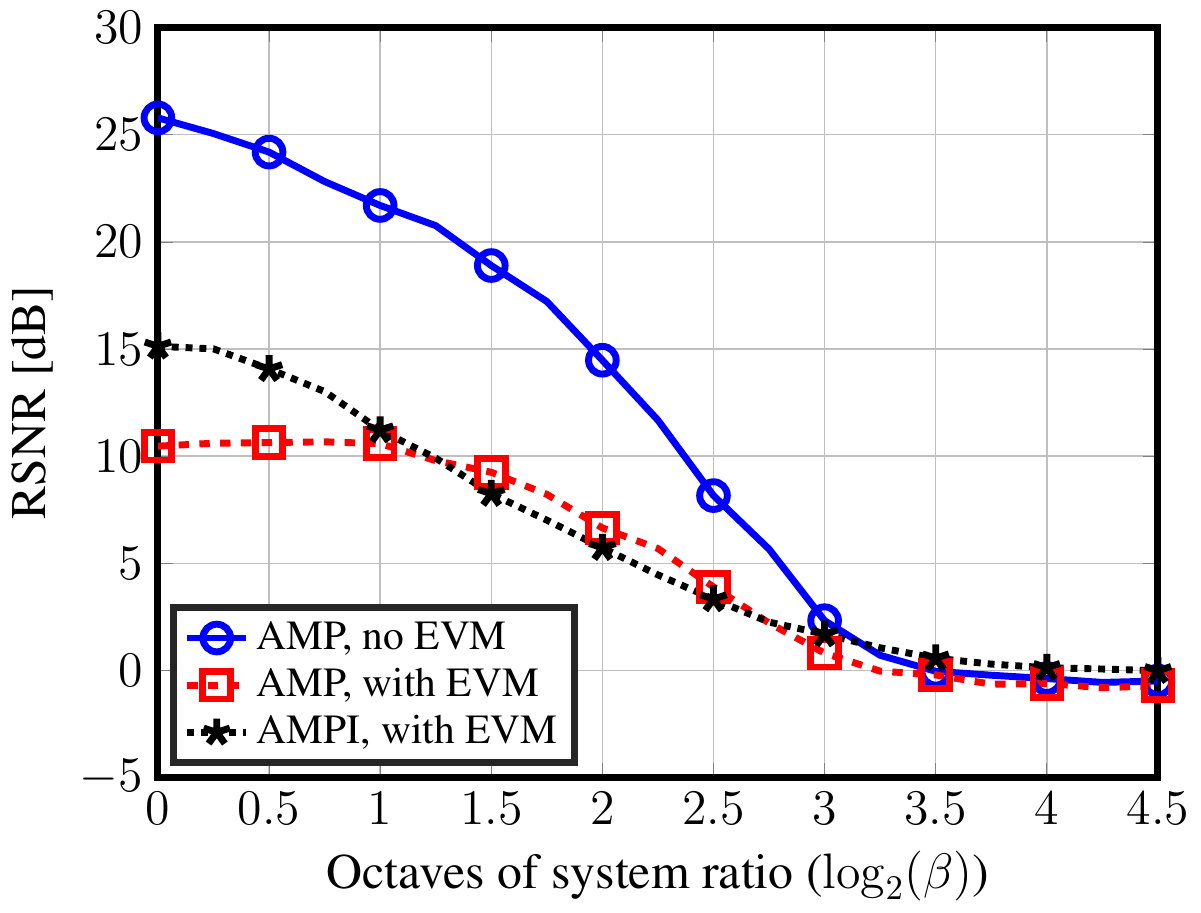}\label{fig:NF}}
\caption{Simulation results of two applications of the proposed AMPI algorithm in massive MIMO and compressive sensing with $\textit{EVM}=-10$\,dB input noise. AMPI yields significant improvements compared to methods that ignore input noise and achieves comparable performance to whitening-based methods that entail prohibitive complexity for the considered system dimensions.}
\end{figure}

\subsubsection{Compressive Sensing Signal Recovery} 

Compressive sensing (CS) enables  sampling and recovery of sparse signals at sub-Nyquist rates~\cite{donoho2006,candes2006c}. 
While the CS  literature extensively focuses on systems with measurement noise, numerous practical applications already contain noise on the sparse signal to be recovered; see~\cite{davenport2012proscons,treichler2009application,arias2011noise} and the references therein.
We will use AMPI to take input-noise into account directly during sparse signal recovery and show substantial improvements compared to that of existing sparse recovery methods for systems with input noise~\cite{arias2011noise} at no additional expense in complexity.

\fref{fig:NF}  shows the recovery signal-to-noise-ratio (RSNR) of compressive sensing scenario in which we recover a $N=1000$ dimensional sparse signal in the presence of input noise with an error-vector magnitude (EVM) of $-10$\,dB. The RSNR is plotted for different system ratios~$\beta$. The signal is assumed to have $5\%$ sparsity and an SNR of $20$\,dB; the non-zero entries are  i.i.d.\ standard normal.  AMPI significantly improves the reconstruction SNR over AMP which ignores input noise.

\subsection{Contributions}\label{sec:contributions}

We propose AMPI and provide precise conditions for which it yields optimal performance in data detection and signal estimation applications. 
Depending on the application, we use the following optimality criteria. For massive MIMO data detection, optimality is  achieving the same error-rate performance as the individually-optimal (IO) data detector~\cite{JGMS2015conf,V1998}, which solves the following minimization problems:
\begin{align}\label{eq:IOproblem}
\hat{s}_\ell^\IO=\argmin_{\tilde s_\ell \in \setO} \, \mathbb{P} \!\left( \tilde s_\ell \neq s_\ell \right)\!, \quad \ell=1,\ldots,N.
\end{align}
Here, $\setO$ is a finite set containing possible transmit symbols---in wireless systems this set corresponds to the transmit constellation, e.g., quadrature amplitude modulation (QAM).
For signal estimation, optimality is defined by minimizing the mean-squared error (MSE) 
\begin{align}\label{eq:MMSEproblem}
\hat{\bms}^\Opt=\argmin_{\tilde \bms \in \complexset^N} \, \textstyle  \frac{1}{N} \left\| \tilde \bms - \bms \right\|^2\!.
\end{align}
Here, the superscript $\Opt$ in $\hat{s}_\ell^\Opt$, $\ell=1,\ldots,N$, stands for optimal. To solve~\fref{eq:IOproblem} or~\fref{eq:MMSEproblem}, we need to compute the MAP or minimum MSE (MMSE) estimate of the marginal posterior distribution $p(s_\ell|\bmy,\bH)$ for all $\ell=1,\dots,N$. Computing the marginal distribution for large-dimensional systems is one of the key challenges in data detection and signal estimation problems as its requires prohibitive complexity \cite{cooper1990computational}.
We propose AMPI, which achieves optimal performance in the large-system limit and for uniform linear measurements.
Our optimality conditions are derived via the state-evolution (SE) framework \cite{donoho2009,andreaGMCS} of approximate message passing (AMP)~\cite{Maleki2010phd,DMM10a,DMM10b}.
 For both applications, we demonstrate  the efficacy  and low-complexity of AMPI in more realistic, finite-dimensional systems.




\subsection{Related Results}
The effect of input-noise (often called transmit-side impairments) on the performance of communication systems has been studied in \cite{schenk2008rf,Studer_Tx_OFDM,Tx_Replica,studer2011system,schenk2008rf,schenk2005performance,goransson2008effect,suzuki2008transmitter,suzuki2009practical,gonzalez2011impact,gonzalez2011transmit,bjornson2013capacity,zhang2014mimo}.
%
%
%
Most of these papers use a Gaussian input-noise model, which assumes that the input noise is i.i.d.\ additive Gaussian noise and independent of the transmit signal $\vecs$.
While the accuracy of this model has been confirmed via real-world measurements \cite{Studer_Tx_OFDM} for MIMO systems that use orthogonal frequency-division multiplexing (OFDM), it may not be accurate in other scenarios. 
AMPI is a practical data detection method that allows us to study the fundamental performance of more general input-noise models, which may exhibit statistical dependence with the transmit signal and even include deterministic nonlinearities.
For the well-established Gaussian transmit-noise model, we will show in \fref{sec:Opt} that the SE equations of AMPI coincide to the ``coupled fixed point equations'' provided in \cite{Tx_Replica}, which demonstrates that AMPI is a practical algorithm that achieves the performance predicted by replica-based channel  capacity expressions.

In the compressive sensing literature, input noise causes an effect known as ``noise folding'' \cite{davenport2012proscons,arias2011noise,NFalg2015}.
Reference~\cite{arias2011noise} shows that in the large-system limit, the received SNR is increased by a factor of ${N}/{M}$ due to input noise. Reference~\cite{davenport2012proscons} shows that an oracle-based recovery procedure that knows the signal support exhibits a $3$\,dB loss of reconstructed SNR per octave of sub-sampling. 
More recently, reference \cite{NFalg2015} has introduced an $\ell_1$-norm based algorithm that reduces the effect of input noise.
In contrast to these results, AMPI is a practical signal estimation method that enables one to study the fundamental performance limits of sparse signal recovery in the presence of input noise. 
We also note that reference~\cite{GAMP2011} investigates signal recovery for a similar model as in \fref{eq:sysmodel}. The key difference is that AMPI computes an estimate of the original input signal~$\bms$, whereas the generalized AMP (GAMP) algorithm in~\cite{GAMP2011} recovers the effective input signal $\bmx$.


\subsection{Notation}

Lowercase and uppercase boldface letters represent column vectors and matrices, respectively. For a matrix $\bH$, the conjugate transpose is $\bH^\Herm$. 
The entry on the $i$th row and $j$th column of the matrix $\bH$ is $H_{i,j}$; the $k$th entry of a vector $\vecx$ is $x_k$. 
The $M\times M$ identity matrix is denoted by $\bI_M$ and the $M\times N$ all-zeros matrix by $\mathbf{0}_{M\times N}$. 
%
We define $\left\langle \bmx \right\rangle = \frac{1}{N}\sum_{k=1}^N x_k$.  The quantities $\|\bmx\|_1$ and $\|\bmx\|_2$ represent the $\ell_1$ and $\ell_2$ norms of the vector $\bmx$, respectively. Multivariate real-valued and complex-valued Gaussian probability density functions (pdfs) are denoted by $\setN(\bmm,\bK)$ and $\setC\setN(\bmm,\bK)$, respectively, where~$\bmm$ is the mean vector and $\bK$ the covariance matrix. The operator $\Exop_X\!\left[\cdot\right]$ denotes expectation with respect to the probability density function (PDF) of the random variable (RV)~$X$; $p(x)$ represents the probability distribution of RV $x$. The notation $a\overset{d}{\rightarrow} b$ represents convergence in distribution of $p(a)$ to $p(b)$. The function $\mathsf{1}(\cdot)$ returns $1$ if its argument is true and $0$ otherwise.

\subsection{Paper Outline}
The rest of the paper is organized as follows. 
\fref{sec:AMPI} introduces the AMPI algorithm along with its state evolution (SE) framework. 
\fref{sec:Opt} analyzes optimality conditions for AMPI. 
\fref{sec:app1MIMO} and \fref{sec:app2CS} investigate AMPI for data detection in massive MIMO systems and for sparse signal recovery, respectively. 
We conclude in \fref{sec:conclusion}. 


\section{AMPI: Approximate Message Passing with~Input~Noise}
\label{sec:AMPI}

We now introduce the message passing algorithm used to derive AMPI. We then develop the complex-valued state-evolution (cSE) framework for AMPI, which will be used in \fref{sec:Opt} to establish optimality conditions.

%
%

\subsection{Sum-Product Message Passing}

As discussed in \fref{sec:contributions}, the most challenging step in data detection and signal estimation is calculating the marginal posterior distribution. While the problem of marginalizing a distribution is in general NP-hard \cite{cooper1990computational}, there exist  heuristics that have been successful in certain applications---one of the most prominent marginalizing schemes is message passing.

\begin{figure}[tp]
\centering
\includegraphics[width=0.65\columnwidth]{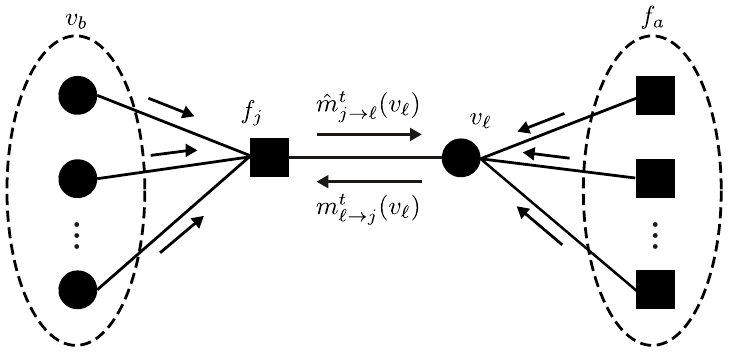}
\vspace{-0.25cm}
\caption{A factor graph illustrating the sum-product message-passing algorithm.}
\label{fig:generalFG}
\end{figure}

Sum-product message passing is a well-established method to compute the marginal distributions~\cite{FG1,donoho2011design}. Consider a joint probability distribution $p(v_1,\dots,v_I)$ with random variables taken from the set $\{v_1,\dots,v_I\}$. Suppose that $p(v_1,\dots,v_I)$ factors into a product of $J$ functions with subsets of the variable set $\{v_1,\dots,v_I\}$ as their argument: $p(v_1,\dots,v_I)=\prod_{j=1}^{J}f_j(V_j)$, where each $V_j$ is a subset of the variable set $\{v_1,\dots,v_I\}$ and $\cup_{j=1}^J V_j=\{v_1,\dots,v_n\}$. 
Such distributions  can be expressed as a factor graphs, which are bipartite graphs consisting of variable nodes to represent each random variable $v_\ell$, factor nodes for each factor $f_j$, and edges connecting them if the factor node is an argument of the variable node.  \fref{fig:generalFG} illustrates a factor graph with variable nodes (circles) and factor nodes (squares).

For the sum-product message passing algorithm, we consider messages $m_{\ell\rightarrow j}^t\!\left(v_i\right) $ (from every variable node $v_i$ to every factor node $f_j$) and $\est m_{j\rightarrow i}^t\!\left(v_i\right)$ (from every factor node $f_j$ to every variable node $v_i$) at iteration $t=1,\ldots,\tmax$ with the equations
\begin{align}
\label{eq:sumprod}
m_{i\rightarrow j}^t\!\left(v_i\right) 
&= \prod_{a\neq j} \est m_{a\rightarrow i}^{t-1}
\!\left(v_i\right)\,\\
\label{eq:sumprod2}
\est m_{j\rightarrow i}^t\!\left(v_i\right) &= \int_\complexset f_j\!\left(\partial f_j\right)
\prod_{b\neq i} m_{b\rightarrow j}^t \!\left(v_b\right)\text{d}(\partial f_j\neq v_i).
\end{align}
Here, $\tmax$ is the maximum number of iterations and $\partial f_j$ is the set of neighbors of node $f_j$ in the graph. After iteratively passing messages between variable nodes and factor nodes, the marginal function of the random variable $v_i$ is approximated by the product of all messages directed toward that variable node. 
If a factor graph is cycle-free, then message-passing converges to the exact marginals. 
If the factor graph has cycles, then general convergence conditions are unknown \cite{weiss2000correctness}.

\subsection{AMPI Derivation}\label{sec:path-AMPI}
%
%
 Consider the system model in \fref{sec:intro}. We are interested in recovering the input signal $\bms$ by computing the MAP or MMSE estimate of the marginal posterior distributions $p(s_\ell|\bmy,\bH)$. 
%
%
The marginal distribution $p(s_\ell | \bmy,\bH)$ can be derived from the joint probability distribution $p(\bms,\bmx,\bmy|\bH)$ as follows:
\begin{align}
& p(s_\ell | \mathbf{y,H})=\int_{\complexset^{N-1}} p(\bms|\bmy,\bH) \,\, \text{d}(s_1, \dots ,s_N \neq s_\ell) \\
&\quad \propto \int_{\complexset^{N-1}} \left(\int_{\complexset^N} p(\bms,\bmx,\bmy|\bH) \dd \bmx \right) \text{d}(s_1, \dots ,s_N \neq s_\ell).
\end{align}
Here, the notation $\text{d}(s_1, \dots ,s_N \neq s_\ell)$ indicates integration over all entries $s_1, \dots ,s_N$ except for~$s_\ell$. Instead of an exhaustive integration for each entry $s_\ell$, $\ell=1,\ldots,N$, we perform sum-product message passing on the factor graph given by the joint PDF $p(\bms,\bmx,\bmy|\bH)=p(\bmy|\bmx,\bH)p(\bmx|\bms)p(\bms).$
The factor graph for this distribution is illustrated in \fref{fig:factorgraph} and consists of the factors $p(\bmy|\bmx,\bH)$, $p(\bmx|\bms)$, and $p(\bms)$. 
The following observations allow us to simplify message passing: (i) The message from variable node $s_\ell$ to the factor node $p(x_\ell | s_\ell)$ is equal to $p(s_\ell)$ and remains constant over all iterations. (ii)  The message from factor node $p(x_\ell | s_\ell)$ to variable node $x_\ell$ is $\int_{\complexset} p(x_\ell | s_\ell) p(s_\ell) \dd s_\ell$. 
From these observations, we see that the factor graph can be divided into two parts. Furthermore, as shown in \fref{fig:factorgraph}, let $v^t_{a \rightarrow \ell} (x_\ell)$ denote the message from the factor node $p(y_a|\bmx)$ to variable node $x_\ell$, and  $\hat{v}^t_{\ell \rightarrow a} (x_\ell)$ the message from variable node $x_\ell$ to factor node $p(y_a|\bmx)$. 
To calculate the messages $v^t_{a \rightarrow \ell} (x_\ell)$ and $\hat{v}^t_{\ell \rightarrow a} (x_\ell)$ on the right side of the graph (marked with a dashed box in \fref{fig:factorgraph}), we can ignore the left part of the graph (on the left of the variable nodes~$x_\ell$) and identify them as messages $p(x_\ell)$ connected to $x_\ell$ that are computed by
\begin{align}\label{eq:effectiveprior}
p(x_\ell)=\int_{\complexset} p(x_\ell | s_\ell) p(s_\ell) \dd s_\ell.
\end{align}
This implies that we can perform message passing on the effective system model $\bmy=\bH \bmx+\bmn$ with effective input signal prior $p(\bmx)$ given in \fref{eq:effectiveprior}. Since we are interested in the estimate of the marginal distribution $p(s_\ell | \mathbf{y,H})$, we can return to the left side of the factor graph and calculate the corresponding messages once $v^t_{a \rightarrow \ell} (x_\ell)$ and $\hat{v}^t_{\ell \rightarrow a} (x_\ell)$ have been computed.
Then, the estimated marginal distribution of $p(s_\ell | \mathbf{y,H})$ is 
\begin{align}\label{eq:s_hatofv}
\hat{p}(s_\ell | \bmy,\bH) = \int_{x_\ell} \prod_{b =1}^{N} \hat{v}_{b \rightarrow \ell} (x_\ell) p(x_\ell |s_\ell) p(s_\ell)dx_\ell, 
\end{align}
where we use the notation $\hat{p}(s_\ell | \bmy,\bH)$ to emphasize that this marginalization is an estimate. In the next section, we will provide conditions for which this estimate is exact.

\begin{figure}[tp]
\centering
		\includegraphics[width=0.8\columnwidth]{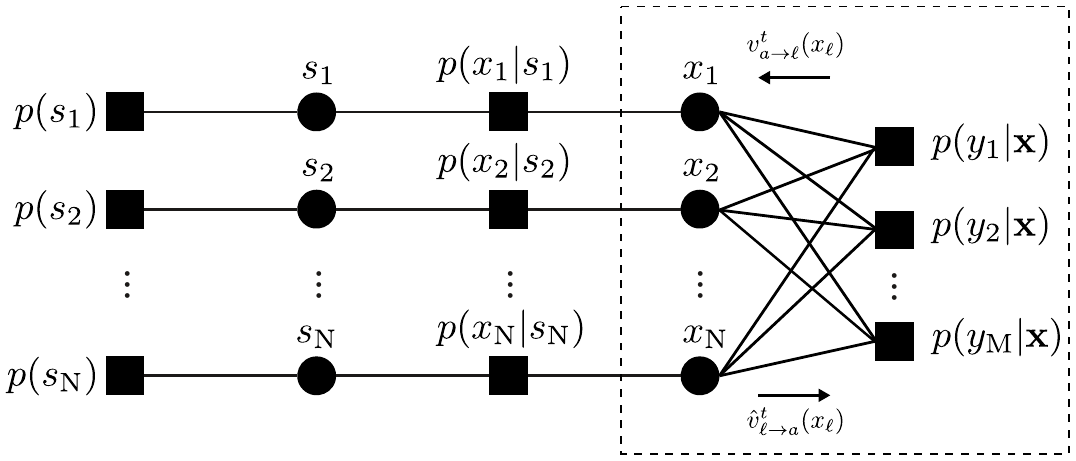}
		\vspace{-0.2cm}
		\caption{The factor graph of the joint distribution $p(\bms,\bmx,\bmy|\bH$). Performing sum-product message passing  on this factor graph yields the marginal posterior distributions $p(s_\ell | \bmy,\bH)$, $\ell=1,\ldots,N$.}
		\label{fig:factorgraph}
\end{figure}

Note that \fref{eq:s_hatofv} is a one-dimensional integral that can either be evaluated in closed form or via  numerical integration as long as the distribution of $\prod_{b =1}^{N} \hat{v}_{b \rightarrow \ell} (x_\ell)$ is known. 
%
To compute the messages $\hat{v}_{b \rightarrow \ell}(x_\ell)$, we perform message passing on  the right side of the factor graph in \fref{fig:factorgraph}.
However, an exact message passing algorithm can quickly result in high complexity; this is mainly because the right side of factor graph in \fref{fig:factorgraph} is fully connected and we need to compute $2 M N$ messages in each iteration.
%
To reduce complexity, we use AMP introduced in~\cite{donoho2009,Maleki2010phd,donoho2011design}. AMP uses the bipartite structure of the graph and the high dimensionality of the problem to approximate the messages with Gaussian distributions. Passing Gaussian messages only requires the message mean and variance instead of PDFs. Furthermore, the structure and dimensionality also allows AMP to approximate all the messages emerging from or going toward one factor node. Specifically, \cite{Maleki2010phd,donoho2011design} show that all the messages emerging from one factor node have approximately the same value---similarly, all messages going toward the same factor node share approximately the same value. Hence. $v^t_{a \rightarrow \ell} \approx v^t_{a}, \forall \ell=1,\dots,N$ and $\hat{v}^t_{i \rightarrow a} \approx \hat{v}^t_{i}, \forall i=1,\dots,M$. This key observation    reduces the number of messages that need to be computed in each iteration from $2MN$ to $M+N$.
Reference \cite{JGMS2015} performs approximate message passing on the system model $\bmy=\bH \bmx+\bmn$ with complex entries called cB-AMP (short for complex Bayesian AMP), which calculates an estimate for the effective input signal~$\hat{x}_\ell$,~$\forall \ell$ using the following algorithm.

\begin{oframed}	
\vspace{-0.3cm}
		\begin{alg}[\bf{cB-AMP}]\label{alg:AMP} 
				Initialize $\hat{x}^1_\ell=\Exop_X[X]$, \mbox{$\resid^1= \bmy$}, and $\gamma_1^2=\No+\beta\Varop_X[X]$ with $\xrv \sim p(x_\ell)$ as defined in~\fref{eq:effectiveprior}. For $t=1,\ldots,\tmax$, compute
		\begin{align}
		\label{eq:GaussianZ}
		\bmz^{t}&=\hat\bmx^t+\bH^\Herm\bmr^t\\
		\nonumber
		\hat{\bmx}^{t+1} &=  \mathsf{F}(\bmz^{t},\gamma_t^2)\\
		\label{eq:postulated}
		\gamma_{t+1}^2 &=  \No+\beta\!\left\langle\mathsf{G}(\bmz^{t},\gamma_t^2)\right\rangle\\
		\nonumber
		\resid^{t+1}  &=   \textstyle \bmy-\bH\hat{\bmx}^{t+1}+\beta \frac{\resid^{t}}{\gamma_t^2} \left\langle\mathsf{G}(\bmz^{t},\gamma_t^2)\right\rangle.
		\end{align}
		The scalar functions $\mathsf{F}(z^t_{\ell},\sigma_t^2)$ and $\mathsf{G}(z^t_{\ell},\sigma_t^2)$ operate element-wise on vectors, correspond to the posterior mean and variance, respectively, and are defined as follows:
		\begin{align}\label{eq:Ffunc}
		\!\!\!\! \mathsf{F}(z^t_{\ell},\sigma_t^2) &= 
		\int_\complexset x_{\ell}p(x_{\ell}\vert z^t_{\ell},\sigma_t^2)\mathrm{d}x_{\ell},\\\label{eq:Gfunc}
		\!\!\!\!  \mathsf{G}(z^t_\ell,\sigma_t^2)&=  \int_\complexset \abs{x_{\ell}}^2\!p(x_{\ell}\vert z^t_{\ell},\sigma_t^2)\mathrm{d}x_{\ell} \!-\! \abs{\mathsf{F}(z^t_{\ell},\sigma_t^2)}^2\!.
		\end{align}	
		The message posterior distribution is $p(x_{\ell}\vert z^t_{\ell},\sigma_t^2)=\frac{1}{Z}p(z^t_{\ell}\vert x_{\ell},\sigma_t^2)p(x_\ell)$, where $p(z^t_{\ell}\vert x_{\ell},\sigma_t^2)\sim\setC\setN(x_{\ell},\sigma_t^2)$ and $Z$ is a normalization constant.
		 \end{alg}
\vspace{-0.3cm}
\end{oframed}

As detailed in \cite[Sec.~\uppercase\expandafter{\romannumeral3}.C]{JGMS2015}, by performing \fref{alg:AMP}, the so-called Gaussian output $\bmz^t$ of cB-AMP at iteration $t$ in~\fref{eq:GaussianZ} can be modelled in the large system limit as
\begin{align}\label{eq:decoupling}
z_\ell^t=x_\ell+w_\ell^t,
\end{align} 
with $w_\ell^t\sim\setC\setN(0,\sigma^2_t)$,
being independent from $x_\ell$ (see \cite[Sec.~6.4]{andreaGMCS} for details in the real domain). This property is known as the decoupling property as cB-AMP  effectively decouples the system into a set of~$N$ parallel and independent additive white Gaussian noise (AWGN) channels. Here, $\sigma^2_t$ is the effective noise variance that can be computed using state evolution, which we  introduce in \fref{sec:SE}. While the quantity $\sigma^2_t$ cannot be tracked directly within cB-AMP,  it can be estimated using the  threshold parameter $\gamma^2_t$ in \fref{eq:postulated} as shown in~\cite{donoho2011design}. 
\fref{alg:AMP} and its Gaussian output $\bmz^t$ are the results of performing AMP on the right side of the factor graph in \fref{fig:factorgraph}. Next, we will use $\bmz^t$ to perform sum-product message passing on the left side of this factor graph.

\subsection{AMP with Input Noise (AMPI)}
\begin{figure}
	\centering
	\subfigure[Impaired linear system with AMPI as the estimator.]{\includegraphics[width=0.95\columnwidth]{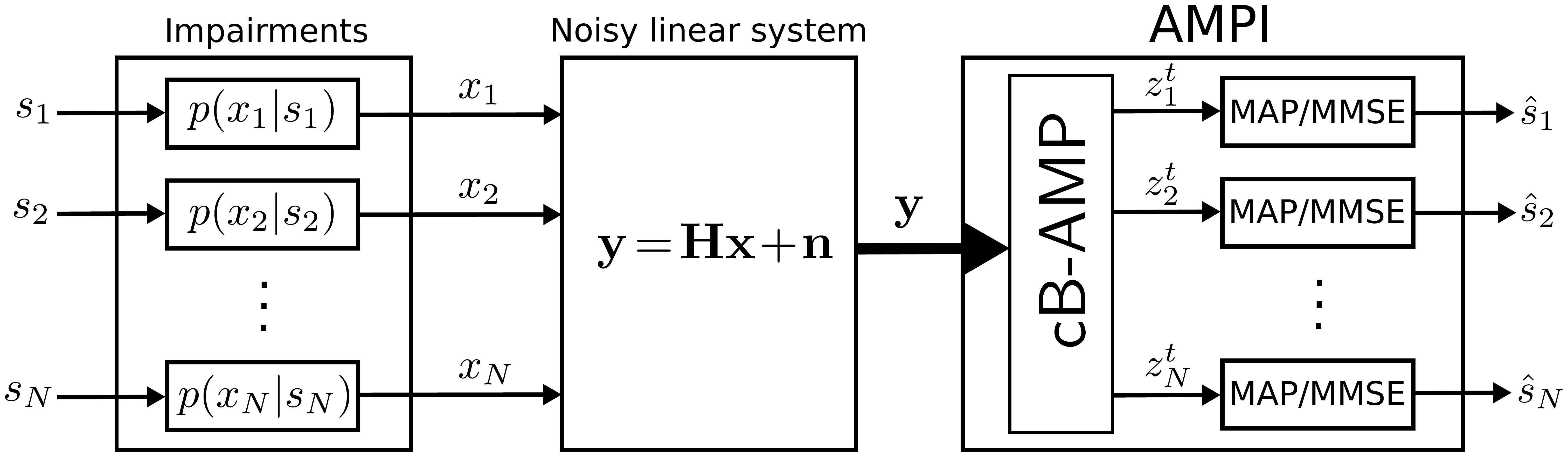}\label{fig:decouple_a}}
	\subfigure[Equivalent decoupled system.]{\includegraphics[width=0.95\columnwidth]{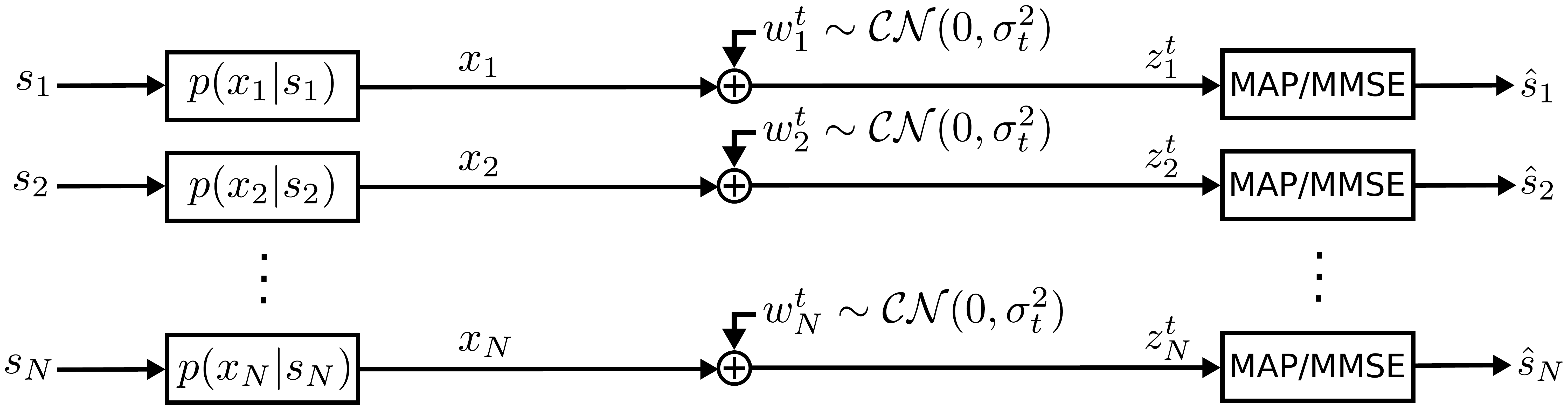}\label{fig:decouple_b}}
	\vspace{-0.2cm}
	\caption{In the large-system limit, AMPI decouples the impaired system into~$N$ parallel and independent AWGN channels, which allows us to perform impairment-aware MAP data detection or MMSE estimation.} \label{fig:decouple}
\end{figure}
%
%
%
We now introduce AMPI, the 2-step procedure to recover the input signal $\bms$ from the input-output relation 
illustrated in \fref{fig:sysmodel}.
First, as illustrated in \fref{fig:decouple_a}, we  use cB-AMP in \fref{alg:AMP} on the effective system model \fref{eq:sysmodel} to compute the Gaussian output~$\vecz^t$ and the effective noise variance $\sigma^2_t$ at iteration $t$, where the Gaussian output $\bmz^t$ is modelled as in~\fref{eq:decoupling}. \fref{fig:decouple_b} shows the equivalent input-output relation of the decoupled system. As detailed in the previous section, this is the result of running AMP on the right side of the factor graph in \fref{fig:factorgraph}.
Second, we use sum-product message passing on the left side of factor graph in \fref{fig:factorgraph} to compute the estimated marginal distribution of $p(s_\ell | \mathbf{y,H})$.
To compute the marginal, we use the Gaussian output $\bmz^t$ in~\fref{eq:decoupling}, i.e., $p(z_\ell^t | x_\ell^t) \sim \setC\setN( x_\ell^t,\sigma^2_t)$; this allows us to compute the marginal posterior distribution for each input signal element $s_\ell$ as follows:
\begin{align}\nonumber
 p(s_\ell | \bmy,\bH) & = p(s_\ell | z_\ell^t) \propto p(s_\ell) p(z_\ell^t | s_\ell) \\
&  = p(s_\ell) \!\int_\complexset p(z_\ell^t | x_\ell^t)p(x_\ell^t | s_\ell)  \dd x_\ell^t. \label{eq:MAP_posterior}
\end{align}
%
Finally,  we can compute individually optimal MAP data detection or MMSE signal estimation for each entry $s_\ell$, $\ell=1,\ldots,N$, independently using the marginal distribution.
Note that \fref{eq:MAP_posterior} can be  obtained from \fref{eq:s_hatofv} by computing $\prod_{b =1}^{N} \hat{v}_{b \rightarrow \ell}$. As shown in~\cite{Maleki2010phd}, we have $\prod_{b =1}^{N} \hat{v}_{b \rightarrow \ell} \sim \setC \setN(z_\ell^t,\sigma^2_t)$. Even though this approach appears to be more straightforward, it lacks the 2-step intuition behind AMPI.
The resulting AMPI algorithm is summarized as follows.
%


\begin{oframed}	
\vspace{-0.3cm}
\begin{alg}[\bf{AMPI}]\label{alg:AMPI} 
	Initialize $\hat{x}^1_\ell=\Exop_X[X]$, \mbox{$\resid^1= \bmy$}, and $\gamma_1^2=\No+\beta\Varop_X[X]$ with $\xrv \sim p(x_\ell)$ as in \fref{eq:effectiveprior}.
	\begin{enumerate}
		\item Run cB-AMP as in \fref{alg:AMP} for $\tmax$ iterations.
		%
		\item Compute $\hat{s}_\ell^{\tmax}=D(z_\ell^\tmax,\sigma^2_\tmax)$, where the function $D$  either computes the  MAP or MMSE estimate of $s_\ell$ using the posterior PDF $p(s_\ell | z_\ell^\tmax)$ defined in \fref{eq:MAP_posterior}. The effective noise variance $\sigma^2_\tmax$ is estimated using the threshold parameter $\gamma_{\tmax}^2$  from cB-AMP.
	\end{enumerate}
\end{alg}
\vspace{-0.3cm}
\end{oframed}	

As shown in Sections~\ref{sec:app1MIMO} and \ref{sec:app2CS}, the function $D$ is chosen to satisfy the optimality conditions in \fref{eq:IOproblem} or \fref{eq:MMSEproblem}.

\subsection{Theoretical Analysis of AMPI via State Evolution}\label{sec:SE}
Analyzing message passing methods operating on dense graphs is generally difficult. However,  the normality of the messages in our application enables us to study theoretical properties in the large-system limit and for uniform linear measurements.
As detailed in \cite{donoho2011design}, the effective noise variance~$\sigma_t^2$ of AMP can be calculated analytically for every iteration $t=1,2,\ldots,\tmax$, using the state evolution recursion.The following theorem repeats the complex state evolution (cSE)  for complex AMP (cB-AMP)~\cite{JGMS2015}.
In \fref{sec:Opt}, we will use the cSE framework to derive optimality conditions for AMPI.
%
\begin{thm}\label{thm:CSE} Assume the model in \fref{eq:sysmodel} with uniform linear measurements. 
	Run cB-AMP using the function $\mathsf{F}$, where $\mathsf{F}$ is a pseudo-Lipschitz function \cite[Sec.~1.1, Eq.~1.5]{bayatimontanari}. Then, in the large-system limit the effective noise variance $\sigma^2_{t+1}$ 
	of cB-AMP at iteration $t$  is given by the following cSE recursion:
	\begin{align}
	\sigma_{t+1}^2 = \No +\beta\Psi(\sigma_t^2,\sigma_t^2), \label{eq:SErecursion} 
	\end{align}
	Here, the MSE function $\Psi$ 
	is defined by
	\begin{align}\label{eq:Psi}
	\Psi(\sigma_t^2,\gamma_t^2) = \Exop_{\xrv,Z}\!\left[\abs{ \mathsf{F}\!\left(\xrv + \sigma_t Z,\gamma_t^2\right) - \xrv}^2 \right]
	\end{align}
	with $\xrv\sim p(x_\ell)$, $Z \sim\setC\setN(0,1)$, and $\mathsf{F}$ and $\mathsf{G}$ are the posterior mean and variance functions from \fref{alg:AMP}. The cSE recursion is initialized at $t=1$ by $\sigma_1^2 = \No + \beta \Varop_X[X]$.
\end{thm}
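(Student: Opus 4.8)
The plan is to obtain the recursion \fref{eq:SErecursion} by specializing the general state-evolution machinery for AMP to the complex Bayesian iteration of \fref{alg:AMP}. First I would note that cB-AMP is an instance of the standard AMP iteration in which the ``Gaussian output'' $\bmz^t=\hat{\bmx}^t+\bH^\Herm\bmr^t$ plays the role of the effective observation, and the last term $\beta\frac{\bmr^t}{\gamma_t^2}\langle \mathsf{G}(\bmz^t,\gamma_t^2)\rangle$ in the residual update is precisely the Onsager correction. The purpose of this correction is that, in the large-system limit, it cancels the cross-terms that would otherwise couple $\bmr^{t+1}$ to past iterates; once these are removed, $\bmz^t$ is coordinate-wise distributed as $x_\ell$ plus an independent $\setC\setN(0,\sigma_t^2)$ perturbation, which is exactly the decoupling property \fref{eq:decoupling} used in \fref{sec:path-AMPI}.

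For the technical core I would reduce the complex-valued model \fref{eq:sysmodel} to a real one by stacking real and imaginary parts, obtaining a $2M\times2N$ real linear system, and then invoke the conditioning technique of Bayati and Montanari \cite{bayatimontanari}. The pseudo-Lipschitz hypothesis on $\mathsf{F}$ is exactly what that argument requires: it ensures the induced real-valued denoiser is pseudo-Lipschitz and that empirical averages of the relevant test functions concentrate. The output of the machinery is that for every pseudo-Lipschitz $\phi:\complexset\times\complexset\to\reals$,
\[
\frac{1}{N}\sum_{\ell=1}^N \phi(z_\ell^t, x_\ell) \longrightarrow \Exop_{X,Z}\!\big[\phi(X+\sigma_t Z,\,X)\big]
\]
in the large-system limit, with $X\sim p(x_\ell)$, $Z\sim\setC\setN(0,1)$, and $\sigma_t^2$ defined by the SE trajectory. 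Taking $\phi(u,x)=\abs{\mathsf{F}(u,\gamma_t^2)-x}^2$ and comparing with the definition \fref{eq:Psi} of $\Psi$ identifies the per-iteration mean-squared error of the denoiser as $\Psi(\sigma_t^2,\gamma_t^2)$; along the SE trajectory $\gamma_t^2$ tracks $\sigma_t^2$, so this equals $\Psi(\sigma_t^2,\sigma_t^2)$.

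The recursion then follows by applying the same concentration to the residual: after the Onsager term removes the bias, $\bmr^t$ behaves like $\bmn$ plus $\bH$ acting on the current error $\bmx-\hat{\bmx}^{t+1}$, so its empirical second moment converges to $\No$ plus $\beta$ times the error MSE, i.e.\ to $\No+\beta\Psi(\sigma_t^2,\sigma_t^2)$. This limiting variance is by definition $\sigma_{t+1}^2$, which is \fref{eq:SErecursion}. The initialization is immediate: at $t=1$ we have $\hat{\bmx}^1_\ell=\Exop_X[X]$ and $\bmr^1=\bmy=\bH\bmx+\bmn$, so the effective noise variance at the first iteration is $\No+\beta\Varop_X[X]$.

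The step I expect to be the main obstacle is making the complex-to-real reduction rigorous: a circularly-symmetric complex Gaussian matrix does not become an i.i.d.\ \emph{real} Gaussian matrix but one with correlated $2\times2$ blocks, so one must verify that the Bayati--Montanari conditioning argument — which ultimately uses the conditional law of the matrix given the past iterates rather than full entrywise independence — still applies, and that the complex Onsager term with its factor $\beta$ and scalar $\langle\mathsf{G}\rangle$ is the correct analogue of the real-valued one. This is precisely the content of the complex state-evolution result of \cite{JGMS2015} that the present theorem restates, so in practice I would carry out the reduction, cite \cite{JGMS2015,bayatimontanari} for the concentration statement, and verify directly only the bookkeeping leading to \fref{eq:SErecursion} and its initialization.
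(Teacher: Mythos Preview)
Your proposal is correct in spirit and, in fact, goes beyond what the paper does: the paper does not prove \fref{thm:CSE} at all. It is introduced with the sentence ``The following theorem repeats the complex state evolution (cSE) for complex AMP (cB-AMP)~\cite{JGMS2015},'' and no proof appears in the appendices. The theorem is simply quoted as an existing result from \cite{JGMS2015}, which in turn builds on \cite{bayatimontanari}.

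Your sketch---reduce the complex model to a real $2M\times 2N$ system, invoke the Bayati--Montanari conditioning argument under the pseudo-Lipschitz hypothesis, read off the recursion from the concentration of $\phi(u,x)=\abs{\mathsf{F}(u,\gamma_t^2)-x}^2$, and check the Onsager term and initialization---is exactly the standard route to establish such a result, and you correctly identify the block-structure subtlety in the real embedding as the only nontrivial point. Since you already note at the end that you would ``cite \cite{JGMS2015,bayatimontanari} for the concentration statement,'' you have essentially arrived at the paper's treatment, only with more of the underlying mechanism spelled out.
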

\begin{rem}\label{rem:dependence}
	The posterior mean function $\mathsf{F}$ and the MSE function $\Psi(\sigma_t^2)$  in \fref{eq:Psi} depend on the effective input signal prior $p(\bmx)$, which, as shown in \fref{eq:effectiveprior}, is a function of the input signal prior $p(\bms)$ and the conditional probability $p(\bmx|\bms)$ that models the transmit-side impairments. 
Furthermore, \fref{thm:CSE} assumes perfect knowledge of the noise variance $\No$; \cite[Thm.~1]{JGMS2015} analyzes the case of  a mismatch in the noise variance. 
\end{rem}

\section{Optimality of AMPI } \label{sec:Opt}
We now analyze the optimality of AMPI for the model introduced in \fref{sec:intro} using the cSE framework.

%
\subsection{Optimality of AMPI Within the AMP Framework}
%
In \fref{sec:AMPI}, we have derived AMPI using message-passing. However, there exists a broader class of algorithms for the same task. Specifically, our version of AMPI performs sum-product message passing using  the posterior mean function $\mathsf{F}$ as defined in \eqref{eq:Ffunc}. One can potentially modify $\mathsf{F}$ (or even use different functions at different iterations) to obtain estimates $\hat{x}_\ell$, $\ell=1,\ldots,N$, and perform MAP data detection or MMSE estimation on these estimates. Such algorithms can still be analyzed through the state evolution framework. The optimality question we ask here is whether it is possible to improve  AMPI by choosing functions different to the ones introduced in \eqref{alg:AMPI}. As we will show in \fref{thm:F_MAPMMSE}, the functions we used in first and second step of AMPI algorithm are indeed optimal.  

Suppose we run AMPI for $\tmax$ iterations. Consider a generalization of \mbox{AMPI}, where, in the first step, the posterior mean function $\mathsf{F}$ in \fref{eq:Ffunc} is replaced with a general pseudo-Lipschitz function $\mathsf{F}_t$ that may depend on the iteration index~$t$; the MAP or MMSE function in the second is replaced with another function $\mathsf{F}_{\tmax+1}$. More specifically, we consider 
\begin{align}\label{eq:generalFs}
&\!\!\!\hat{\bmx}^{t+1} = \textstyle \mathsf{F}_t(\bmz^{t},\gamma_t^2), \,\, t=1,\dots,\tmax\\
&\!\!\!\hat{\bms} \!=\! \textstyle \mathsf{F}_{\tmax+1}\!(\bmz^{\tmax+1}\!,\!\gamma_{\tmax+1}^2), \,
\bmz^{\tmax+1} \!=\! \hat\bmx^{\tmax+1}\!\!+\!\bH^\Herm\bmr^{\tmax+1}\!.
\end{align}
We require the sequence of functions $\{\mathsf{F}_1, \mathsf{F}_2, \ldots,\mathsf{F}_{\tmax+1}\}$ so that \fref{thm:CSE} holds. Now, the question is whether there exists a sequence of functions $\{\mathsf{F}_1, \mathsf{F}_2, \ldots,\mathsf{F}_{\tmax+1}\}$, such that given the application, the resulting algorithm achieves lower probability of error or lower MSE than AMPI. 
%
\fref{thm:F_MAPMMSE} shows that if the solution to the fixed-point equation of \fref{eq:SErecursion} is unique, then AMPI is optimal within AMP framework.

The fixed-point equation of \fref{eq:SErecursion} is computed by letting the number of iterations $\tmax \to \infty$ in \fref{eq:SErecursion}, which yields
%
%
\begin{align}\label{eq:FPsimplified}
\sigma^2 &  
= \No +\beta\Psi(\sigma^2,\sigma^2).
\end{align}
%

\begin{thm}\label{thm:F_MAPMMSE}
	Assume the system model in \fref{sec:intro} with uniform linear measurements and the large-system limit. Suppose that we use \mbox{AMPI} with an arbitrary set of pseudo-Lipschitz functions $\mathsf{F}_1,\ldots,\mathsf{F}_{\tmax+1}$ as described in \fref{eq:generalFs}.
	If the solution to the fixed-point equation in \fref{eq:FPsimplified} is unique, then the choice of $\mathsf{F}_1,\ldots,\mathsf{F}_{\tmax+1}$ that achieves optimal performance according to \fref{eq:IOproblem} and \fref{eq:MMSEproblem} are as introduced in \fref{alg:AMPI}.
	%
	%
\end{thm}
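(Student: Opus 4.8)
The plan is to reduce the optimality question to a statement about the effective noise variance: show that, among all admissible sequences $\mathsf{F}_1,\ldots,\mathsf{F}_{\tmax}$, the one in \fref{alg:AMPI} minimizes the state-evolution variance at every iteration, and that the last-step function $\mathsf{F}_{\tmax+1}$ is optimal given that variance. First I would observe that, by \fref{thm:CSE} (applied with general pseudo-Lipschitz $\mathsf{F}_t$), each admissible sequence produces a deterministic variance recursion $\sigma_{t+1}^2 = \No + \beta\,\Exop_{\xrv,Z}[\,|\mathsf{F}_t(\xrv+\sigma_tZ,\gamma_t^2)-\xrv|^2\,]$, and the decoupling property \fref{eq:decoupling} says the algorithm's output $\bmz^{\tmax}$ is, in the large-system limit, a bank of $N$ independent AWGN channels $z_\ell^{\tmax} = x_\ell + w_\ell^{\tmax}$ with $w_\ell^{\tmax}\sim\setC\setN(0,\sigma_{\tmax}^2)$. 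On such a channel, the probability of symbol error (for the MIMO case) and the MSE (for the estimation case) are both monotonically increasing functions of the noise variance $\sigma_{\tmax}^2$. Hence it suffices to show (i) that $\mathsf{F}$ from \fref{eq:Ffunc} is the per-iteration minimizer of $\Psi(\sigma_t^2,\gamma_t^2)$, and then argue by induction that this drives $\sigma_t^2$ to the smallest possible value at every $t$; and (ii) that, given the final variance, the optimal second-step function is exactly the MAP (resp.\ MMSE) estimator of $s_\ell$ from $p(s_\ell\mid z_\ell^{\tmax})$.

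For step (i), the key fact is that $\mathsf{F}(z,\sigma^2) = \Exop[X\mid z,\sigma^2]$ is by definition the conditional-mean (MMSE) estimator of $X$ given the observation $z = X + \sigma Z$, so for any other function $\mathsf{F}_t$ one has the pointwise/integrated inequality $\Exop_{\xrv,Z}[\,|\mathsf{F}_t(\xrv+\sigma_tZ,\gamma_t^2)-\xrv|^2\,] \ge \Exop_{\xrv,Z}[\,|\mathsf{F}(\xrv+\sigma_tZ,\sigma_t^2)-\xrv|^2\,] = \Psi(\sigma_t^2,\sigma_t^2)$; this is just the orthogonality principle for conditional expectation, and it holds with $\gamma_t^2$ set consistently to $\sigma_t^2$ along the SE trajectory. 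Then I would run an induction on $t$: assuming the AMPI trajectory $\sigma_t^{2,\star}$ is pointwise below that of any competing trajectory $\sigma_t^2$, combine the above MMSE-optimality of $\mathsf{F}$ with the monotonicity of $\Psi(\cdot,\cdot)$ in its first (and second) argument to conclude $\sigma_{t+1}^{2,\star} = \No + \beta\Psi(\sigma_t^{2,\star},\sigma_t^{2,\star}) \le \No + \beta\Psi(\sigma_t^2,\sigma_t^2) = \sigma_{t+1}^2$. For step (ii), with $\sigma_{\tmax}^2$ fixed, minimizing $\mathbb{P}(\hat s_\ell \ne s_\ell)$ over decision functions of $z_\ell^{\tmax}$ is exactly the Bayes-classifier problem solved by the MAP rule on the posterior in \fref{eq:MAP_posterior}, and minimizing $\frac1N\|\hat\bms-\bms\|^2$ is solved by the posterior mean — so $\mathsf{F}_{\tmax+1}$ as in \fref{alg:AMPI} is optimal.

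The role of the uniqueness hypothesis on the fixed point of \fref{eq:FPsimplified} is to make the induction above actually bite as $\tmax\to\infty$: without it, a competing sequence might converge to a different (smaller) fixed point reachable from the same initialization, and one could not conclude that AMPI's limiting variance is the smallest. With uniqueness, every SE trajectory started from $\sigma_1^2 = \No + \beta\Varop_X[X]$ — and in particular AMPI's — converges to the same $\sigma^2$ solving \fref{eq:FPsimplified}, and the monotone-induction argument shows AMPI's iterates are sandwiched below any competitor's while both converge to this common limit; so AMPI attains the best achievable error/MSE within the AMP framework.

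The main obstacle I anticipate is making the per-iteration comparison rigorous when $\gamma_t^2$ (the estimated threshold parameter driving $\mathsf{F}_t$) is allowed to differ from the true SE variance $\sigma_t^2$: the clean orthogonality inequality $\Psi(\sigma_t^2,\gamma_t^2)\ge\Psi(\sigma_t^2,\sigma_t^2)$ requires that $\mathsf{F}(\cdot,\gamma^2)=\Exop[X\mid\cdot,\gamma^2]$ is the MMSE estimator only when $\gamma^2$ equals the actual noise level; so I would need to argue (as in \cite{donoho2011design}) that along any admissible SE-compatible trajectory the consistent choice $\gamma_t^2=\sigma_t^2$ is forced, or handle the mismatch explicitly as in \cite[Thm.~1]{JGMS2015}. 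The monotonicity of $\Psi$ in its arguments — needed to propagate the induction — is the other point requiring a short lemma; it should follow from a stochastic-dominance / data-processing argument on the effective AWGN channel, since a noisier observation cannot yield a smaller MMSE.
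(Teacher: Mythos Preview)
Your plan matches the paper's proof almost step for step: reduce to minimizing the SE variance, show the last function must be MAP/MMSE given the final variance (the paper's \fref{lem:FMAP}), show error probability is monotone in $\sigma$ (\fref{lem:MAP_smallsigma}), show the MMSE is monotone in $\sigma$ (\fref{lem:MMSE_Auxiliary}), and run the induction (\fref{lem:MMSE}). Two small corrections: in your displayed induction chain the final ``$=$'' should be ``$\le$'', since the competitor's update is $\No+\beta\,\Exop[|\mathsf{F}_t(X+\sigma_tZ,\gamma_t^2)-X|^2]$, not $\No+\beta\Psi(\sigma_t^2,\sigma_t^2)$---you need the MMSE-optimality inequality there as a separate step before applying monotonicity; and your reading of the uniqueness hypothesis is inverted---your own induction already rules out any competitor reaching a \emph{smaller} variance, so that is not the concern. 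The paper invokes uniqueness of the fixed point to conclude uniqueness of the optimal sequence $\mathsf{F}_1,\ldots,\mathsf{F}_{\tmax}$ itself (i.e., that no other sequence can tie AMPI), not to control competitors from below.
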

\fref{thm:F_MAPMMSE} shows that it is impossible to improve upon the original choice of \mbox{AMPI}. The proof is given in \fref{app:F_MAPMMSE}. 
The fixed-point equation \fref{eq:FPsimplified} can in general have one or more fixed points. If it has more than one fixed point, then \mbox{AMPI} may converge to different solutions, depending on the initialization~\cite{ZMWL2015}. 
As it is clear from the proof of \fref{thm:F_MAPMMSE} in \fref{app:F_MAPMMSE}, even in cases where \mbox{AMPI} does not have a unique fixed point, one of its fixed points corresponds to the optimal solution in AMP framework.  
Hence,  to provide precise conditions for optimality of AMPI, will analyze the fixed point equation \fref{eq:FPsimplified} for a unique solution only.
To establish conditions under which the fixed point equation \eqref{eq:FPsimplified} has a unique solution, we use the following quantities from~\cite[Defs.~1--4]{JGMS2015conf}.
\begin{defi}\label{def:betaN0}
Fix the input signal prior $p(\vecs)$ and input noise distribution $p(\bmx|\bms)$. Then, the exact recovery threshold (ERT) $\betamax$ and the minimum recovery threshold (MRT) $\betamin$ are
\begin{align}
\!\betamax \!& =\! \min_{\!\sigma^2>0}{\textstyle \!\left\{\!\!\left(\!\frac{\Psi(\sigma^2,\sigma^2)}{\sigma^2}\!\right)^{\!\!-1}\!\right\}},\, 
\betamin \!=\!\min_{\!\sigma^2>0} \textstyle\!\left\{\!\!\left(\!\frac{\textnormal{d}\Psi(\sigma^2,\sigma^2)}{\textnormal{d}\sigma^2}\!\right)^{\!\!-1}\!\right\}\!.
\end{align}
The minimum critical noise $\Nomin$ is defined as 
\begin{align}
\Nomin &=  \min_{\sigma^2>0}\textstyle \!\left\{\sigma^2-\beta\Psi(\sigma^2,\sigma^2):\beta\frac{\textnormal{d}\Psi(\sigma^2,\sigma^2)}{\textnormal{d}\sigma^2}=1\right\}\!,
\end{align}
and the maximum guaranteed noise $\Nomax$ is defined as  
\begin{align}
\Nomax&= \max_{\sigma^2>0}\textstyle\!\left\{\sigma^2 - \beta\Psi(\sigma^2,\sigma^2):\beta\frac{\textnormal{d}\Psi(\sigma^2,\sigma^2)}{\textnormal{d}\sigma^2} = 1\right\}\!.
\end{align}
\end{defi}	

Using \fref{def:betaN0}, the following theorem establishes three regimes for which fixed-point equation \fref{eq:FPsimplified} has an unique solution. The proof follows from \cite[Sec.~\uppercase\expandafter{\romannumeral 4}-D, \uppercase\expandafter{\romannumeral 4}-E]{JGMS2015}.
\begin{lem} \label{lem:betaN0}
Let the assumptions of \fref{thm:CSE} hold and $\tmax\to\infty$. Fix $p(\vecs)$ and $p(\vecx|\vecs)$.  If the variance of the receive noise $\No$ and system ratio $\beta$ are in one of the following three regimes:
\begin{enumerate}
\item $\beta \in \left(0,\betamin\right]$ and $\No \in \reals^+$
\item $\beta \in \left(\betamin,\betamax\right)$ and $\No \in \left[0,\Nomin\right) \cup \left(\Nomax,\infty\right)$
\item $\beta \in \left[\betamax,\infty\right)$ and $\No \in \left(\Nomax,\infty\right)$,
\end{enumerate}
then the fixed point equation \fref{eq:FPsimplified} has a unique solution.
\end{lem}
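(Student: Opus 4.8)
The plan is to reduce the statement to a careful analysis of the scalar fixed-point equation~\fref{eq:FPsimplified}, viewing it as $\sigma^2 = \No + \beta\Psi(\sigma^2,\sigma^2)$ where the relevant object is the function $g(\sigma^2) \triangleq \beta\Psi(\sigma^2,\sigma^2)$. First I would recall the structural properties of $\Psi(\sigma^2,\sigma^2)$ that are established in~\cite{JGMS2015}: it is a non-negative, differentiable function of $\sigma^2$ on $\reals^+$ that is concave (this concavity, coming from the MMSE/posterior-variance interpretation of $\Psi$, is the workhorse of the whole argument), and one needs its behavior as $\sigma^2 \to 0^+$ and $\sigma^2 \to \infty$ (in particular $\Psi(\sigma^2,\sigma^2) \to \Varop_X[X]$ as $\sigma^2\to\infty$, so $g$ is bounded and its slope tends to zero). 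A solution of~\fref{eq:FPsimplified} is a crossing of the line $y = \sigma^2 - \No$ with the curve $y = g(\sigma^2)$; because $g$ is concave, the difference $\phi(\sigma^2) \triangleq \sigma^2 - \No - g(\sigma^2)$ is \emph{convex}, so $\phi$ has at most two zeros, and uniqueness fails only in a thin intermediate regime. The definitions of $\betamax$, $\betamin$, $\Nomin$, $\Nomax$ in \fref{def:betaN0} are precisely the thresholds that quantify when that intermediate regime is avoided.

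The three cases would then be handled as follows. For case~(1), $\beta \le \betamin = \min_{\sigma^2>0}\{(\textnormal{d}\Psi/\textnormal{d}\sigma^2)^{-1}\}$ means $\beta\,\textnormal{d}\Psi(\sigma^2,\sigma^2)/\textnormal{d}\sigma^2 \le 1$ for all $\sigma^2>0$, i.e.\ $g'(\sigma^2)\le 1$ everywhere, so $\phi'(\sigma^2) = 1 - g'(\sigma^2) \ge 0$: $\phi$ is nondecreasing (strictly increasing wherever $g'<1$), hence has a unique zero for any $\No\in\reals^+$ — here one also uses $\phi(0^+) = -\No - g(0^+) < 0$ and $\phi(\sigma^2)\to\infty$ to guarantee existence. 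For case~(3), $\beta\ge\betamax = \min_{\sigma^2>0}\{(\Psi(\sigma^2,\sigma^2)/\sigma^2)^{-1}\}$ forces $\beta\Psi(\sigma^2,\sigma^2)/\sigma^2 \ge 1$ somewhere, so the ``low-noise'' fixed point can disappear; the condition $\No > \Nomax$ pushes the line $\sigma^2-\No$ far enough down (to the right) that it meets the concave curve $g$ exactly once — this is where the definition of $\Nomax$ as the \emph{maximum} over critical points of $\sigma^2 - \beta\Psi$ does its job, ensuring the line sits strictly above the curve's relevant tangency. Case~(2) is the genuine bifurcation regime $\betamin < \beta < \betamax$: for $\No < \Nomin$ the line lies entirely below the tangent branch (only the low-$\sigma^2$ crossing survives) and for $\No > \Nomax$ it lies entirely above (only the high-$\sigma^2$ crossing survives), while for $\No\in[\Nomin,\Nomax]$ there can be two or three fixed points — this is exactly the excluded interval.

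Concretely, the steps in order: (i) cite from \cite[Sec.~IV-D, IV-E]{JGMS2015} the concavity and limiting behavior of $\Psi(\cdot,\cdot)$ along the diagonal and the convexity of $\phi$; (ii) observe that fixed points of~\fref{eq:FPsimplified} are zeros of $\phi$ and, by convexity, $\phi$ has one, two, or a degenerate repeated zero; (iii) in case~(1), show $\phi' \ge 0$ globally so the zero is unique; (iv) in cases~(2) and~(3), characterize the critical $\sigma^2$ solving $\beta\,\textnormal{d}\Psi/\textnormal{d}\sigma^2 = 1$ (the unique inflection-type point where $\phi'=0$, i.e.\ where $\phi$ attains its minimum in $\sigma^2$), evaluate $\phi$ there as $\phi_{\min} = \sigma^2_\star - \No - \beta\Psi(\sigma^2_\star,\sigma^2_\star)$, and note uniqueness holds iff $\phi_{\min} \ne 0$; (v) translate $\phi_{\min}\ne 0$ into the noise conditions using $\Nomin = \min\{\sigma^2 - \beta\Psi : \beta\,\textnormal{d}\Psi/\textnormal{d}\sigma^2 = 1\}$ and $\Nomax = \max\{\cdots\}$: if $\No < \Nomin$ then $\phi_{\min} > 0$ at every critical point, if $\No > \Nomax$ then $\phi_{\min} < 0$ at every critical point, and in the boundary regime $\beta \ge \betamax$ only the $\No > \Nomax$ branch remains admissible because the other crossing has moved off $\reals^+$.

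The main obstacle — and the reason the proof is mostly a pointer to \cite{JGMS2015} rather than self-contained — is establishing the concavity of $\sigma^2 \mapsto \Psi(\sigma^2,\sigma^2)$ and the monotonicity/limit facts for the \emph{effective} prior $p(\bmx)$ of~\fref{eq:effectiveprior}, which differs from the ordinary AMP setting of \cite{JGMS2015} only in that $p(\bmx)$ is now the input-noise-marginalized prior rather than a bare signal prior. One must check that $p(\bmx)$ still has finite second moment (so that $\Varop_X[X] < \infty$ and $\Psi$ is bounded), that $\mathsf{F}$ remains pseudo-Lipschitz so \fref{thm:CSE} applies, and that $\Psi(\sigma^2,\sigma^2)$ — as an MMSE-type quantity for estimating $X\sim p(\bmx)$ from $X + \sigma Z$ — inherits the standard concavity of MMSE in the signal-to-noise ratio; once these regularity properties are in hand for $p(\bmx)$, the rest of the argument is a verbatim transcription of the convex-analysis reasoning in \cite[Sec.~IV-D, IV-E]{JGMS2015}, and the case split of the lemma is an immediate consequence.
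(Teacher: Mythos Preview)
The paper itself does not give a proof: it simply points to \cite[Sec.~IV-D, IV-E]{JGMS2015}. Your outline is broadly the right reconstruction of that argument --- study the scalar function $\phi(\sigma^2)=\sigma^2-\No-\beta\Psi(\sigma^2,\sigma^2)$, use $\phi(0^+)<0$ and $\phi(\sigma^2)\to\infty$, and control the zeros via the critical set $\{\sigma^2:\beta\,\textnormal{d}\Psi/\textnormal{d}\sigma^2=1\}$ --- and this is indeed how the cited reference proceeds.

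There is, however, a genuine error in the structural hypothesis you build the argument on. You assert that $\sigma^2\mapsto\Psi(\sigma^2,\sigma^2)$ is concave (so that $\phi$ is convex and has at most two zeros, with a \emph{unique} critical point in step~(iv)). This is false for the priors relevant here. For a discrete prior such as QPSK (the paper's own example right after \fref{lem:betalimits}), the MMSE vanishes super-linearly as $\sigma^2\to 0$, so $\Psi'(0^+)=0$; it also satisfies $\Psi'(\sigma^2)\to 0$ as $\sigma^2\to\infty$; yet $\Psi'>0$ in between. Hence $\Psi'$ is not monotone and $\Psi$ is neither concave nor convex. This is precisely why \fref{def:betaN0} defines $\Nomin$ and $\Nomax$ as a \emph{minimum} and a \emph{maximum} over the critical set --- the set can contain several points, and the two thresholds generically differ. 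Your step~(iv) (``the unique inflection-type point'') directly contradicts your own step~(v) (``at every critical point''), and the contradiction traces back to the concavity assumption.

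The fix is simple and you already have the right pieces in step~(v): drop concavity entirely. The argument only needs (a)~$\phi(0^+)\le 0$ and $\phi(\sigma^2)\to\infty$, and (b)~the sign of $\phi$ at \emph{all} critical points. If $\No<\Nomin$ then $\phi>0$ at every critical point, so once $\phi$ crosses zero it cannot dip back (any subsequent local minimum would still be positive); if $\No>\Nomax$ then $\phi<0$ at every critical point, so $\phi$ cannot cross zero more than once from below (any local maximum before the final crossing is still negative). Case~(1) is the degenerate situation where the critical set is empty and $\phi$ is monotone. This is the argument in \cite[Sec.~IV-D, IV-E]{JGMS2015}; no concavity is invoked there, and none is needed.
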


For AMPI, the quantities in \fref{def:betaN0} depend on  $p(\bmx)$, which is a function of $p(\vecs)$ and  $p(\vecx|\vecs)$ (cf.~\fref{rem:dependence}). These quantities can be computed either numerically or in closed-form (see \cite[Sec.~\uppercase\expandafter{\romannumeral 3}]{JGMS2015conf}).
In many applications, the effective input signal prior $p(\bmx)$ is continuous and bounded which results in certain properties for ERT and MRT as discussed in the following lemma. The proof is given in \fref{app:betalimits}.
\begin{lem}\label{lem:betalimits}
Suppose the probability density $p(\bmx)$ of the effective input signal $\bmx$ is continuous and bounded. Furthermore, let the assumptions made in \fref{thm:CSE} hold. Then, the ERT and MRT satisfy $\betamax=1$ and $\betamin \leq 1$.
\end{lem}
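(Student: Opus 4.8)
The plan is to work throughout with the \emph{matched} MSE function $\psi(\sigma^2)\define\Psi(\sigma^2,\sigma^2)$, which is what both $\betamax$ and $\betamin$ in \fref{def:betaN0} are built from. Since $\mathsf{F}(\cdot,\gamma^2)$ in \fref{alg:AMP} is the posterior mean of $X\sim p(\bmx)$ given a complex-Gaussian observation of noise variance $\gamma^2$, the matched quantity $\psi(\sigma^2)$ is exactly the Bayes-optimal (minimum) mean-squared error of estimating $X\sim p(\bmx)$ from $Y=X+\sigma Z$ with $Z\sim\setC\setN(0,1)$. Two elementary facts drive everything. (a) Using the suboptimal estimator $g(Y)=Y$ gives $\psi(\sigma^2)\le\Exop[\,|\sigma Z|^2\,]=\sigma^2$ for all $\sigma^2>0$; in particular $\psi(\sigma^2)\to 0$ as $\sigma^2\downarrow 0$, so setting $\psi(0)=0$ makes $\psi$ continuous on $[0,\infty)$. (b) A probability density is non-degenerate, so $\psi(\sigma^2)>0$ for $\sigma^2>0$, and $\psi$ is differentiable on $(0,\infty)$ (smoothness of the Gaussian-channel MMSE in the noise variance, as already presupposed by \fref{def:betaN0}).

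The crux is the limit $\psi(\sigma^2)/\sigma^2\to 1$ as $\sigma^2\downarrow 0$, which is the only place continuity and boundedness of $p(\bmx)$ are used. I would prove it by the substitution $x=y-\sigma w$ in
\begin{equation}\label{eq:mmsedimproof}
\psi(\sigma^2)=\int\!\!\int |\,x-\mathsf{F}_\sigma(y)\,|^2\,p(x)\,\phi_{\sigma^2}(y-x)\,\dd x\,\dd y,
\end{equation}
where $\phi_{\sigma^2}$ is the $\setC\setN(0,\sigma^2)$ density and $\mathsf{F}_\sigma(y)=\Exop[\,X\mid Y=y\,]$. A short computation gives $\mathsf{F}_\sigma(y)=y-\sigma\, m_\sigma(y)$ and $|x-\mathsf{F}_\sigma(y)|^2=\sigma^2|w-m_\sigma(y)|^2$, so $\psi(\sigma^2)/\sigma^2=\int f_\sigma(y)\,\dd y$ with $f_\sigma(y)=q_\sigma(y)\,\Varop_{\pi_\sigma(\cdot\mid y)}[\,w\,]\ge 0$, where $q_\sigma$ is the density of $Y=X+\sigma Z$, $\pi_\sigma(w\mid y)\propto p(y-\sigma w)\,\varphi(w)$ with $\varphi$ the $\setC\setN(0,1)$ density, and $m_\sigma(y)=\Exop_{\pi_\sigma(\cdot\mid y)}[\,w\,]$. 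Boundedness of $p$ lets dominated convergence (in the $w$-integral) pass $\sigma\downarrow 0$ through $q_\sigma$ and through the first two moments of $\pi_\sigma$, so at every $y$ with $p(y)>0$ one gets $q_\sigma(y)\to p(y)$, $m_\sigma(y)\to 0$ and $\Varop_{\pi_\sigma}[\,w\,]\to 1$ (continuity of $p$ is used for $p(y-\sigma w)\to p(y)$), whence $f_\sigma(y)\to p(y)$; at $y$ with $p(y)=0$ we trivially have $\liminf_\sigma f_\sigma(y)\ge 0=p(y)$. Fatou's lemma then gives $\liminf_{\sigma^2\downarrow 0}\int f_\sigma\ge\int p=1$, while fact (a) gives $\int f_\sigma=\psi(\sigma^2)/\sigma^2\le 1$; the squeeze yields $\psi(\sigma^2)/\sigma^2\to 1$. (Alternatively this is the statement that a bounded density has full MMSE dimension and can be cited from that literature.)

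Given the limit, both claims follow quickly. For $\betamax$: fact (a) gives $(\psi(\sigma^2)/\sigma^2)^{-1}\ge 1$ for every $\sigma^2>0$, hence $\betamax\ge 1$; and $(\psi(\sigma^2)/\sigma^2)^{-1}\to 1$ as $\sigma^2\downarrow 0$, so $\betamax=\inf_{\sigma^2>0}(\psi(\sigma^2)/\sigma^2)^{-1}\le 1$, giving $\betamax=1$. For $\betamin$: pick $\sigma_n^2\downarrow 0$; by the mean value theorem on $[0,\sigma_n^2]$ there is $\xi_n\in(0,\sigma_n^2)$ with $\psi'(\xi_n)=(\psi(\sigma_n^2)-\psi(0))/\sigma_n^2=\psi(\sigma_n^2)/\sigma_n^2\to 1$, so $(\psi'(\xi_n))^{-1}\to 1$ and therefore $\betamin=\inf_{\sigma^2>0}(\psi'(\sigma^2))^{-1}\le 1$.

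The main obstacle is the limit $\psi(\sigma^2)/\sigma^2\to 1$: the change-of-variables reduction is routine, but the interchange of limits is legitimate only through the uniform upper bound of fact (a) together with boundedness of $p$ (pointwise convergence $f_\sigma\to p$ alone does not give $\int f_\sigma\to 1$, which is exactly why a Fatou-plus-upper-bound squeeze, rather than dominated convergence, is needed). A secondary, purely cosmetic point is that the ``$\min$'' in \fref{def:betaN0} must be read as ``$\inf$'' here, since for typical priors the optimizer is approached only as $\sigma^2\downarrow 0$.
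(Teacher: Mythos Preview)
Your proof is correct and follows the same skeleton as the paper's: establish $\psi(\sigma^2)\le\sigma^2$, show $\psi(\sigma^2)/\sigma^2\to 1$ as $\sigma^2\downarrow 0$, combine for $\betamax=1$, then deduce $\betamin\le 1$. The difference is that the paper outsources each of these three ingredients to the literature, whereas you supply self-contained arguments for all of them. Specifically: for the bound $\psi(\sigma^2)\le\sigma^2$ the paper cites \cite[Prop.~4]{GWSS2011} while you use the trivial estimator; for the high-SNR limit the paper invokes the MMSE-dimension result of Wu--Verd\'u \cite[Thm.~12]{WV2010} while you rederive it via the change of variables $x=y-\sigma w$ and a Fatou-plus-squeeze argument (you correctly anticipate this and note the citation alternative); and for $\betamin\le 1$ the paper cites the general inequality $\betamin\le\betamax$ from \cite[Lem.~4]{JGMS2015conf} while you obtain it directly by the mean value theorem on $[0,\sigma_n^2]$. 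Your MVT step is in fact a proof of $\betamin\le\betamax$ in the present setting (since for every $\sigma^2>0$ it produces a $\xi$ with $\psi'(\xi)=\psi(\sigma^2)/\sigma^2$, hence $\sup\psi'\ge\sup\psi/\sigma^2$), so nothing is lost. What your route buys is a fully self-contained argument; what the paper's route buys is brevity.
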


From \fref{lem:betalimits}, we conclude that for a system with a continuous effective input signal prior $p(\bmx)$ we have $\betamin \leq 1$. As noted in \fref{lem:betaN0}, $\betamin$ determines the values of system ratio $\beta$ under which AMPI can be optimal for any noise variance. In other words, $\beta \leq \betamin \leq 1$ implies that the system should not be under-determined. As an example, consider a massive MIMO system that uses QPSK constellations. In the absence of input noise, $\betamin\approx2.9505$ (see \cite[Tbl.~\uppercase\expandafter{\romannumeral 1}]{JGMS2015conf}). However, by adding the slightest amount of input noise with a continuous probability distribution, such as Gaussian input noise, $\betamin$ abruptly decreases to values no larger than $1$.

\section{Application 1: Massive MIMO}
\label{sec:app1MIMO}

As discussed in \fref{sec:intro}, impairment-aware data detection is an important  part of practical massive MIMO systems.
In reference \cite{GJMS_2015_conf}, we provided an impairment-aware data detection algorithm called LAMA-I (short for \underline{la}rge-\underline{M}IMO \underline{a}pproximate message passing with transmit \underline{i}mpairments). LAMA-I which is a low-complexity data detection algorithm, is AMPI derived for massive MIMO systems. In this section, we briefly revisit the signal and system model for massive MIMO and LAMA-I from \cite{GJMS_2015_conf}. We then provide an optimality analysis of LAMA-I which was not shown in \cite{GJMS_2015_conf}. In particular, we prove that besides optimality in the AMP framework, LAMA-I is able to achieve the same error-rate performance as the IO data detector.
\subsection{LAMA-I: AMPI for Massive MIMO Data Detection}\label{sec:MIMO_model}
Consider an input signal $\vecs \in \complexset^{N}$ sent through an impaired MIMO channel with input-output relation \fref{eq:sysmodel} introduced in \fref{sec:intro} with the following assumptions. The entries of~$\vecs$ are chosen from a constellation set $\setO$, e.g., QAM, and $\vecs$ is assumed to have i.i.d. prior distribution $p(\vecs)=\prod_{\ell=1}^{N} p(s_{\ell})$ with the following distribution for each transmit symbol:
\begin{align} 
p\!\left(s_{\ell}\right) = \sum_{a\in \setO}p_a\delta\!\left(s_{\ell} - a\right)\!.
\end{align}
The received vector is $\vecy \in \complexset ^{M}$ is the received vector, and $N$ and $M$ denote the number of user equipments and base-station antennas, respectively.
The MIMO channel matrix $\bH \in \complexset^{M \times N}$ is assumed to be perfectly known at the receiver. 

As shown in \fref{alg:AMPI}, we first use cB-AMP to compute the Gaussian output~$\vecz^{\tmax}$ and the effective noise variance $\sigma^2_{\tmax}=\gamma^2_{\tmax}$ at iteration $t$. The MIMO system is decoupled into a set of~$\MT$ parallel and independent additive white Gaussian noise (AWGN) channels. \fref{fig:decouple_b} shows the equivalent decoupled system. Using \fref{eq:MAP_posterior}, we can compute the marginal posterior distribution $p(s_\ell | \bmy,\bH) = p(s_\ell | z_\ell^{\tmax})$ from the Gaussian output. 
The marginal posterior distribution allows us to compute the MAP estimate for each data symbol independently as 
\begin{align}
\hat{s}_\ell^{\tmax}=D(z_\ell^\tmax,\sigma^2_\tmax)=\argmax_{s_\ell\in\setO}p(s_\ell | z_\ell^{\tmax}).
\end{align}
We call this procedure the LAMA-I algorithm in \cite{GJMS_2015_conf}.
Note that \cite[Sec.~\uppercase\expandafter{\romannumeral4}]{GJMS_2015_conf} details the derivation of LAMA-I for a Gaussian input noise model $p(x_\ell|s_\ell) = \setC \setN (s_\ell,\NT), \forall \ell={1,\ldots,N}$.

\subsection{Individually Optimal (IO) Data Detection}

We now show that for uniform linear measurements and the large-system limit, \mbox{LAMA-I} is able to achieve the error-rate performance of the IO data detector~\fref{eq:IOproblem}, if the fixed-point equation~\fref{eq:FPsimplified} has a unique solution. There has been some work in the past focusing on optimality of AMP in the absence of input noise such as \cite{barbier2020mutual} (see \cite[Sec.~\uppercase\expandafter{\romannumeral4}]{JGMS2015} for a survey). The core of our optimality analysis is the performance of IO data detection in the presence of input noise based on the replica analysis presented in \cite{GV2005}. 
To prove individual optimality, we first introduce the following definition. The replica analysis for IO data detection makes the following assumption about~$\hat{s}^\IO_\ell$.
\begin{defi}\label{def:hardsoft}
The IO solution $\hat{s}^\IO_\ell$ is said to satisfy hard-soft assumption, if and only if there exist a function $D: \mathbb{R} \rightarrow \mathcal{O}$, with the following properties: (i) $\hat{s}^\IO_\ell = D(\mathbb{E} (s_\ell | \mathbf{y, H}) )$ and (ii) for every $s\in \setO$ the $D^{-1}(s)$ is Borel measurable and its boundary has Lebesgue measure zero. 
\end{defi}


We can prove that the hard-soft assumption is in fact true for equiprobable BPSK constellation points, i.e., we have
\[
\mathbb{E} ({s}_\ell | \mathbf{y, H}) = \mathbb{P} ({s}_\ell = +1 | \mathbf{y, H}) - \mathbb{P} ({s}_\ell = -1 | \mathbf{y, H}),
\] 
and hence, $\hat{s}^\IO_\ell  = \sign(\mathbb{E} ({s}_\ell | \mathbf{y, H})) $. However, it is an interesting open problem whether the assumption in \fref{def:hardsoft} holds for other, more general, constellations as well.  

To simplify our proofs, we make an extra assumption:
\begin{defi}\label{def:fixedD}
The IO solution $\hat{s}^\IO_\ell$ is said to satisfy the fixed-D assumption, if in addition to satisfying the hard-soft assumption, the function $D$ from \fref{def:hardsoft} is only a function of $\beta$, $p(s_\ell)$, $p(x_\ell|s_\ell)$, and $p(n_\ell)$. In particular, $D$ does not depend on the dimension $N$ of the input signal. 
 \end{defi}
Note that for equiprobable BPSK symbols, we have $\hat{s}^\IO_\ell  = \sign(\mathbb{E} ({s}_\ell | \mathbf{y, H}))$ and the fixed-D assumption clearly holds. 
 Intuitively speaking, when the dimensions are large, we do not expect the function~$D$ to change with the dimension $N$.
 
 Before we can establish individual optimality of LAMA-I, we introduce \fref{thm:IOfixedD}, which analyzes the error probability of the IO solution and provides an equivalent relation, which enables us to compute the error probability of an IO data detector and consequently compare it with other detectors.

\begin{thm}\label{thm:IOfixedD}
Suppose that the IO solution satisfies both the hard-soft and fixed-D assumptions  in Definitions \ref{def:hardsoft} and \ref{def:fixedD}. Furthermore, assume that the assumptions underlying  the replica symmetry in \cite{GV2005} are correct. Then, $\mathbb{P} ( \hat{s}^\IO_\ell \neq s_\ell)$ converges to $\mathbb{P} (D(Q) \neq S)$ in probability. Here $Q= X+ \tilde{\sigma}Z$ with
$p(S,X)=p(s_\ell) p(x_\ell|s_\ell)$, $Z \sim \setC \setN(0,1)$ being independent of $(S,X)$ and $\tilde{\sigma}$ satisfying the following equation:
\begin{eqnarray}\label{eq:fpIOAMP}
\tilde{\sigma}^2 = \No+ \beta \Psi(\tilde{\sigma}^2).
\end{eqnarray}
\end{thm}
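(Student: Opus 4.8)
\emph{Proof idea.} The plan is to reduce the $N$-dimensional IO detection problem to a scalar estimation problem via the replica decoupling principle of \cite{GV2005}, transport that reduction across the impairment kernel $p(\bmx\,|\,\bms)$, and then commute the large-system limit with the deterministic decision rule $D$, using the hard-soft and fixed-D assumptions.

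First, I would apply the decoupling principle of \cite{GV2005} to the effective system $\bmy=\bH\bmx+\bmn$ with uniform linear measurements, treating $\bmx$ as a signal with i.i.d.\ prior $p(x_\ell)$ from \fref{eq:effectiveprior}. Under the stated replica-symmetry assumptions, this asserts that in the large-system limit the posterior marginal $p(x_\ell\,|\,\bmy,\bH)$, jointly with the true $x_\ell$, is asymptotically distributed as the posterior of $X$, jointly with $X$, in the scalar channel $Q=X+\tilde{\sigma} Z$ with $X\sim p(x_\ell)$ and $Z\sim\setC\setN(0,1)$ independent of $X$, where $\tilde{\sigma}^2$ is the free-energy--minimizing solution of $\tilde{\sigma}^2=\No+\beta\Psi(\tilde{\sigma}^2)$. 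Here one uses that at a matched fixed point the MSE function $\Psi$ of \fref{eq:Psi} equals the scalar MMSE $\mathbb{E}[\,|\mathbb{E}(X\,|\,Q)-X|^2\,]$, so that \fref{eq:fpIOAMP} is exactly the state-evolution fixed point \fref{eq:FPsimplified}; when the latter has a unique solution, $\tilde{\sigma}$ is that solution.

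Next, I would lift this statement from $\bmx$ to the hidden input $\bms$. The generative model makes $s_\ell\to x_\ell\to\bmy$ a Markov chain with $\{x_j\}_{j\neq\ell}$ independent of $(s_\ell,x_\ell)$, so $s_\ell\perp\bmy\mid x_\ell$ and hence $p(s_\ell\,|\,\bmy,\bH)=\int_\complexset p(s_\ell\,|\,x_\ell)\,p(x_\ell\,|\,\bmy,\bH)\,\mathrm{d}x_\ell$. Substituting the scalar-channel limit shows $p(s_\ell\,|\,\bmy,\bH)$ converges to $p(S\,|\,Q)$ in the chain $S\to X\to Q$ with $p(S,X)=p(s_\ell)p(x_\ell\,|\,s_\ell)$; since $s_\ell$ is conditionally independent of the posterior given $x_\ell$, this upgrades to joint convergence $\big(s_\ell,\,\mathbb{E}(s_\ell\,|\,\bmy,\bH)\big)\overset{d}{\rightarrow}\big(S,\,g(Q)\big)$ with $g(Q)\define\mathbb{E}(S\,|\,Q)$. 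Now invoke the hard-soft assumption (\fref{def:hardsoft}) to write $\hat{s}^\IO_\ell=D(\mathbb{E}(s_\ell\,|\,\bmy,\bH))$. The map $(m,s)\mapsto\mathsf{1}(D(m)\neq s)$ is discontinuous only where $m\in\bigcup_{a\in\setO}\partial D^{-1}(a)$, a Lebesgue-null set; since $Q$ has a density and $g$ is a non-constant real-analytic function of $Q$ (a ratio of Gaussian-smoothed densities, for any non-degenerate constellation), $g(Q)$ charges no such set, so the continuous-mapping theorem gives $\mathsf{1}(\hat{s}^\IO_\ell\neq s_\ell)\overset{d}{\rightarrow}\mathsf{1}(D(g(Q))\neq S)$. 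Bounded convergence then yields $\mathbb{P}(\hat{s}^\IO_\ell\neq s_\ell)\to\mathbb{P}(D(g(Q))\neq S)$, and the empirical symbol-error rate $\tfrac1N\sum_\ell\mathsf{1}(\hat{s}^\IO_\ell\neq s_\ell)$ concentrates to the same constant by exchangeability in $\ell$. Finally, the fixed-D assumption (\fref{def:fixedD}) ensures the composite $Q\mapsto D(g(Q))$ is a fixed function of $\beta$, $p(s_\ell)$, $p(x_\ell\,|\,s_\ell)$, $p(n_\ell)$ alone (not of $N$), so relabeling it as $D$ gives the claimed form $\mathbb{P}(D(Q)\neq S)$.

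The load-bearing step is the first, but it is simply imported from \cite{GV2005}, which is why the statement is explicitly conditioned on those replica-symmetry assumptions; within the part of the argument under my control the obstacles are (i) transporting the scalar equivalence across the impairment kernel $p(\bmx\,|\,\bms)$ --- here one has to verify the conditional-independence relations so that the limit of $p(s_\ell\,|\,\bmy,\bH)$ is genuinely the posterior of the two-stage scalar chain $S\to X\to Q$ --- and (ii) commuting the distributional limit with the piecewise-constant decision function $D$; it is precisely to make (ii) go through that the hard-soft and fixed-D assumptions are imposed, since without the measure-zero-boundary condition the error probability need not depend continuously on the posterior mean. Identifying \fref{eq:fpIOAMP} with \fref{eq:FPsimplified} is routine once $\Psi$ is recognized as the scalar MMSE.
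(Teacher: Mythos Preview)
Your proof takes essentially the same route as the paper's: invoke the replica decoupling of \cite{GV2005} on the effective system $\bmy=\bH\bmx+\bmn$, use the Markov structure $s_\ell\to x_\ell\to\bmy$ to transport the scalar-channel equivalence from $\bmx$ to $\bms$, and then push the distributional limit through the decision map $D$ via a continuous-mapping argument enabled by the measure-zero boundary condition in the hard-soft assumption. The paper organizes the replica step slightly differently---it writes $\mathbb{E}(s_\ell\,|\,\bmy,\bH)=\mathbb{E}(L(x_\ell)\,|\,\bmy,\bH)$ with $L(x_\ell)\define\mathbb{E}(s_\ell\,|\,x_\ell)$ and applies Claim~1 of \cite{GV2005} directly to the pair $(x_\ell,\mathbb{E}(L(x_\ell)\,|\,\bmy,\bH))$, rather than first characterizing the full posterior marginal $p(x_\ell\,|\,\bmy,\bH)$ as you do---but this is cosmetic; both land on the same limiting pair $(X,\mathbb{E}(S\,|\,X+\tilde\sigma Z))$ and the same Portmanteau-type passage through $D^{-1}$.

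One point to correct: your reading of the fixed-$D$ assumption is off. Its role is \emph{not} to license ``relabeling'' $D\circ g$ as $D$; rather, it guarantees that the decision map does not drift with $N$, so that the \emph{same} fixed function sits inside the distributional limit and the continuous-mapping / Portmanteau step is legitimate. Once you have reached $\mathbb{P}(D(g(Q))\neq S)$ with $g(Q)=\mathbb{E}(S\,|\,Q)$, that \emph{is} the answer; the paper's expression $\mathbb{P}(D(Q)\neq S)$ is to be read with $D$ now denoting the induced scalar decision rule acting on the Gaussian output $Q$ (i.e., exactly $D\circ g$, which is what AMPI applies in Step~2). This is a notational identification rather than an additional deduction, and it is not something the fixed-$D$ hypothesis supplies.
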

We now provide conditions for which LAMA-I algorithm achieves the error-rate performance of IO data detector. The proof is given in \fref{app:IOptimality}.
\begin{thm} \label{thm:IOptimality}
	Assume the system model in \fref{eq:sysmodel} with uniform linear measurements. Suppose that the assumptions made in \fref{thm:IOfixedD} hold.
	 Furthermore, assume that the fixed-point equation \eqref{eq:fpIOAMP} has a unique solution. Let us call the estimate of LAMA-I after $t$ iterations $\hat{s}_\ell^t$. Then in large-system limit, for any $\epsilon\geq0$ there exists an iteration number $t_0$ such that
	\[
	 \lim_{N \rightarrow \infty} \frac{1}{N} \sum_{\ell=1}^N \mathbb{P} (\hat{s}_\ell^{t_0} \neq s_\ell) \leq \lim_{N \rightarrow \infty} \frac{1}{N} \sum_{\ell=1}^N \mathbb{P} (\hat{s}_\ell^\IO \neq s_\ell)+ \epsilon,
	\]
	where the limits are taken in probability.\footnote{If the limit in $\lim_{n \to \infty} X_n =X$ is taken in probability, it means the probability of  $X_n$ being far from $X$ should go to zero when $n$ increases.}
\end{thm}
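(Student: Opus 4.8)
The plan is to express both the LAMA-I error rate and the IO error rate through the same scalar (decoupled) channel and then match their parameters. Write $\Psi(\sigma^2)\define\Psi(\sigma^2,\sigma^2)$, so that the AMPI fixed-point equation \fref{eq:FPsimplified} and the replica equation \fref{eq:fpIOAMP} are literally the same equation; by hypothesis it has a unique solution, call it $\tilde\sigma^2$. Since the AMPI choice of $\mathsf{F}$ is the posterior mean, $\Psi$ is the scalar MMSE, hence continuous, nondecreasing, and bounded by $\Varop_X[X]$; therefore the state-evolution recursion $\sigma_{t+1}^2=\No+\beta\Psi(\sigma_t^2)$ of \fref{eq:SErecursion}, started at $\sigma_1^2=\No+\beta\Varop_X[X]$, is monotone (non-increasing) and bounded, so $\sigma_t^2$ converges, and its limit, being a fixed point, equals $\tilde\sigma^2$.

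The first substantive step is to pin down the finite-$t$ error rate of LAMA-I. For fixed $t$, the decoupling property \fref{eq:decoupling} and \fref{thm:CSE} give that, in the large-system limit, the empirical joint law of $\{(s_\ell,z_\ell^t)\}_{\ell=1}^N$ converges to the law of $(S,\,X+\sigma_t Z)$ with $p(S,X)=p(s_\ell)p(x_\ell|s_\ell)$ and $Z\sim\setC\setN(0,1)$ independent of $(S,X)$ (and $\gamma_t^2$ coincides with $\sigma_t^2$ in this limit, so the algorithm's plug-in of $\gamma_t^2$ is harmless). Under the hard-soft and fixed-D assumptions (Definitions~\ref{def:hardsoft} and~\ref{def:fixedD}), the LAMA-I decision in \fref{alg:AMPI} is $\hat s_\ell^t=D(z_\ell^t)$ for a fixed measurable map $D$ independent of $N$ whose decision-region boundaries $\partial D^{-1}(s)$, $s\in\setO$, are Lebesgue-null; this $D$ is the MAP rule on the decoupled channel, which by \fref{thm:F_MAPMMSE} is also the optimal choice within the AMP family (so there is nothing to gain from a different last-step function). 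The target of this step is the identity
\begin{align}\label{eq:lamaT_limit}
\lim_{N\to\infty}\frac1N\sum_{\ell=1}^N\mathbb{P}\!\left(\hat s_\ell^t\neq s_\ell\right)=\mathbb{P}\!\left(D(X+\sigma_t Z)\neq S\right)\!,
\end{align}
with the limit taken in probability over $\bH,\bms,\bmx,\bmn$.

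Proving \fref{eq:lamaT_limit} is the main obstacle, since $\mathsf{1}(D(\cdot)\neq s)$ is bounded but not pseudo-Lipschitz, so \fref{thm:CSE} does not apply to it verbatim. I would bridge this with a portmanteau-type sandwich: for $\sigma_t>0$ the law of $X+\sigma_t Z$ is absolutely continuous and thus puts zero mass on the null set $\bigcup_{s\in\setO}\partial D^{-1}(s)$; approximate $\mathsf{1}(D(\cdot)\neq s)$ from above and below by pseudo-Lipschitz functions that agree with it outside a $\delta$-neighborhood of that boundary, apply \fref{thm:CSE} to the approximants, and let $\delta\downarrow0$. The fixed-D assumption is exactly what makes $D$ (and hence the sandwich) independent of $N$, so the bound is uniform in $N$; this is what upgrades the pseudo-Lipschitz state evolution to the discontinuous test function at hand. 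The same argument also yields continuity of $\sigma\mapsto\mathbb{P}(D(X+\sigma Z)\neq S)$ for $\sigma>0$.

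Finally, assemble the pieces. Letting $t\to\infty$ and using $\sigma_t^2\to\tilde\sigma^2$ with the continuity just established, the right-hand side of \fref{eq:lamaT_limit} tends to $\mathbb{P}(D(X+\tilde\sigma Z)\neq S)=\mathbb{P}(D(Q)\neq S)$, which by \fref{thm:IOfixedD} equals $\lim_{N\to\infty}\frac1N\sum_{\ell=1}^N\mathbb{P}(\hat s_\ell^\IO\neq s_\ell)$. Hence, given $\epsilon>0$, first choose $t_0$ with $\mathbb{P}(D(X+\sigma_{t_0}Z)\neq S)\le\mathbb{P}(D(Q)\neq S)+\epsilon$, and then take $N\to\infty$ at that fixed $t_0$ via \fref{eq:lamaT_limit}; the asserted inequality follows. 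Note that no interchange of the $N\to\infty$ and $t\to\infty$ limits is needed — the statement only requires ``$t$ large, then $N$ large'', and the $\epsilon$ absorbs the finite-$t_0$ slack.
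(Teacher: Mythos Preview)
Your proposal is correct and follows essentially the same route as the paper: identify the AMPI and replica fixed-point equations, use uniqueness to get $\sigma_t\to\tilde\sigma$, pass the empirical error rate to $\mathbb{P}(D(X+\sigma_t Z)\neq S)$ via weak convergence and the null-boundary property of $D^{-1}$, and finish with continuity in $\sigma$. The paper packages the indicator-limit step as a separate lemma (your portmanteau sandwich is exactly its proof via \cite[Lem.~2.2]{van2000asymptotic}) and asserts continuity in $\sigma$ directly from the integral representation, but otherwise the arguments coincide; your explicit monotone-convergence justification for $\sigma_t^2\to\tilde\sigma^2$ is a detail the paper leaves implicit.
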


\fref{thm:IOptimality} proves individual optimality of LAMA-I algorithm given certain conditions are met on system size and ratio. The inequality in this theorem shows how LAMA-I with an infinite number of iterations achieves the same error-rate as that of IO data detector. While LAMA-I requires the large-system limit and an infinite number of iterations to achieve the performance of IO data detector, \fref{fig:MIMO} and simulation results in~\cite{GJMS_2015_conf} demonstrate that LAMA-I achieves near-IO performance in realistic, finite dimensional large-MIMO systems.


 \section{Application 2: Compressive Sensing}
 \label{sec:app2CS}
 
We now apply  AMPI to compressive sensing signal recovery in the presence of input noise. We first introduce the system model and then derive AMPI for this system. We conclude with simulation results that compare AMPI to existing methods.

 \subsection{System Model}\label{sec:CS_model}
 The noiseless version of CS signal recovery can be solved perfectly under certain conditions on the system dimension and the sparsity level~\cite{donoho2006,donoho2009}. Recovery under measurement noise has been analyzed extensively; see, e.g.,  \cite{andreaGMCS,candes2006stable}. 
However, practical CS systems may be affected by input noise~\cite{arias2011noise,treichler2009application}.
For example, the input signals may not be perfectly sparse or might be affected by noise that appears prior to the measurement process. 
In what follows, we introduce AMPI to recover the sparse input signal from measurements contaminated with both input and measurement noise. 
Let $\vecs \in \reals^{N}$ be the signal of interest we want to reconstruct from the noisy measurements $\bmy \in \reals^{M}$ with the system model in \fref{eq:sysmodel} introduced in \fref{sec:intro}. Since the system model for compressive sensing is assumed to be under-determined, we have $M \leq N$ (or equivalently $\beta \geq 1$). Moreover, the input signal $\bms$ is a sparse vector with at most $K$ non-zero entries. 

\subsection{AMPI for Compressive Sensing}
%
%
%
To apply \fref{alg:AMPI} for CS, we first need to derive the effective input signal prior $p(\bmx)$ in~\fref{eq:effectiveprior}, which requires   the input signal prior $p(\bms)$. 
As explained in~\cite{andreaGMCS}, a practical solution to capture sparsity in $\bms$ is to assume an i.i.d.\ Laplace prior. We assume $p(\bms)=\left( \frac{\lambda}{2}\right)^N \exp \left( -\lambda \|\bms\|_1\right)$ with a regularization parameter $\lambda>0$ that can be tuned to best model the sparsity of the input signal $\bms$. For optimal performance, one should tune~$\lambda$ to minimize the MSE of AMPI. 
%
Besides the parameter $\lambda$, AMPI requires a threshold parameter $\gamma_t^2$ that must be tuned in each iteration. To attain optimal performance, both of these parameters should be tuned in each algorithm iteration. 
To this end, we will use an iteration index subscript for $\gamma_t^2$ and for $\lambda_t$. 
 
As noted in \fref{sec:path-AMPI}, there exist different methods to tune the threshold parameter $\gamma_t^2$. In the paper \cite{mousavi2015consistent}, the authors propose an asymptotically optimal tuning approach using Stein's unbiased risk estimate (SURE)~\cite{stein1981estimation} for the threshold parameter $\gamma_t^2$. Here, we follow a similar approach that optimally tunes both  parameters $\lambda_t$ and $\gamma_t^2$.

First, run Step~1 of \fref{alg:AMPI} for $\tmax$ iterations. Optimal tuning for AMPI can be achieved if $\lambda_1,\dots,\lambda_{\tmax}$ and $\gamma_1^2,\dots,\gamma_{\tmax}^2$ are tuned in such a way that the value of the asymptotic MSE or $ \!\! \lim\limits_{N \to \infty}\frac{1}{N}\|\hat{\bmx}^{\tmax+1}\!-\bmx\|^2$ is minimized. This requires a joint optimization of the asymptotic MSE over all variables $\{\lambda_1,\dots,\lambda_{\tmax},\gamma_1^2,\dots,\gamma_{\tmax}^2\}$. However, such a  joint optimization is not practical as the iterative nature of AMPI does not allow one to write an explicit expression for MSE. The following theorem shows that one can simplify the joint parameter optimization by tuning each pair $(\lambda_t,\gamma_t^2)$ at iteration~$t$ starting from $t=1$ to $\tmax$. The proof for this theorem follows from \cite[Thm.~3.7]{mousavi2015consistent} with minor modifications.


\begin{thm}\label{thm:opt-tuning}
Suppose that the parameters $\lambda_1^,\dots,\lambda_{\tmax}^,,$ $\gamma_1^{2},\dots,\gamma_{\tmax}^{2}$ are optimally tuned for iteration $\tmax$ of AMPI. Then, $\forall t<\tmax$, the parameters $\lambda_1,\dots,\lambda_{t},\gamma_1^{2},\dots,\gamma_{t}^{2}$ are also optimally tuned for iteration $t$.
\end{thm}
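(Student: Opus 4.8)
The plan is to exploit the recursive (Markovian) structure of AMPI's state evolution: the effective noise variance $\sigma_t^2$ produced at iteration~$t$ depends on the parameters $(\lambda_1,\dots,\lambda_{t-1},\gamma_1^2,\dots,\gamma_{t-1}^2)$ only \emph{through} $\sigma_{t-1}^2$, via the one-step map $\sigma_t^2 = \No + \beta\,\Psi(\sigma_{t-1}^2;\lambda_{t-1},\gamma_{t-1}^2)$, where I write the dependence on the tuning parameters explicitly. Since $\Psi$ is, for each fixed input, a nonnegative MSE-type quantity and the map $\sigma^2 \mapsto \No + \beta\Psi(\sigma^2;\cdot)$ is monotonically nondecreasing in $\sigma^2$ (larger effective noise at the input of a scalar denoiser cannot decrease its output MSE), the asymptotic MSE at iteration $\tmax$ is a monotone nondecreasing function of $\sigma_{\tmax}^2$, which in turn is a monotone nondecreasing function of $\sigma_{\tmax-1}^2$, and so on down the chain. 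Hence minimizing the iteration-$\tmax$ MSE over $(\lambda_1,\dots,\lambda_{\tmax},\gamma_1^2,\dots,\gamma_{\tmax}^2)$ is equivalent to greedily minimizing $\sigma_t^2$ at each stage $t=1,\dots,\tmax$.

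First I would make the state-evolution recursion of \fref{thm:CSE} explicit in the tuning parameters, defining $\sigma_{t+1}^2 = \No + \beta\,\Psi_t(\sigma_t^2)$ where $\Psi_t(\cdot) \define \Psi(\cdot,\cdot;\lambda_t,\gamma_t^2)$ is the MSE function \fref{eq:Psi} evaluated with the soft-thresholding denoiser parameterized by $(\lambda_t,\gamma_t^2)$ at iteration~$t$; the pseudo-Lipschitz hypothesis of \fref{thm:CSE} must be verified to persist for every admissible parameter choice, which it does since soft thresholding is Lipschitz. Second, I would establish the monotonicity lemma: $\sigma^2 \mapsto \Psi_t(\sigma^2)$ is nondecreasing for each fixed $(\lambda_t,\gamma_t^2)$, and the optimal per-iteration MSE $\min_{\lambda_t,\gamma_t^2}\Psi_t(\sigma^2)$ is also nondecreasing in $\sigma^2$ — this follows because increasing the input noise level of a scalar estimation problem cannot help any denoiser, and the minimum of nondecreasing functions is nondecreasing. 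Third, I would run an induction on the iteration index: assuming $\sigma_t^2$ has been minimized over $(\lambda_1,\dots,\lambda_{t-1},\gamma_1^2,\dots,\gamma_{t-1}^2)$ by the greedy choice, the composition with the nondecreasing map shows $\sigma_{t+1}^2 = \No + \beta\Psi_t(\sigma_t^2)$ is minimized by first minimizing $\sigma_t^2$ greedily and then choosing $(\lambda_t,\gamma_t^2)$ to minimize $\Psi_t$ at that value — establishing that the stagewise-greedy tuning coincides with the globally optimal tuning for iteration $\tmax$, and therefore also for every $t<\tmax$ as a by-product of the induction.

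The main obstacle I expect is the monotonicity lemma in the form actually needed: it is not the raw $\Psi_t(\sigma^2)$ but $\min_{\lambda_t}\Psi(\sigma^2,\gamma_t^2;\lambda_t)$ with $\gamma_t^2$ itself a free (coupled) parameter that must be shown nondecreasing in $\sigma^2$, and one must be careful that the SURE-based tuning of \cite{mousavi2015consistent} for $\gamma_t^2$ is consistent (asymptotically recovers the oracle-optimal threshold) so that the empirically-tuned algorithm tracks the state-evolution prediction — this is precisely where the ``minor modifications'' of \cite[Thm.~3.7]{mousavi2015consistent} enter, since that reference tunes a single threshold whereas here we jointly tune $(\lambda_t,\gamma_t^2)$. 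One also needs to confirm that the reparameterization $(\lambda_t,\gamma_t^2)$ does not enlarge the effective search space in a way that breaks the scalar-channel interpretation; concretely, $\lambda_t$ enters only through the denoiser $\mathsf{F}_t$ applied to the decoupled channel output, so the decoupling property \fref{eq:decoupling} is unaffected and the per-iteration problem remains a genuine scalar denoising problem whose MSE is governed by $\Psi$. Once these points are secured, the remainder is the routine induction sketched above.
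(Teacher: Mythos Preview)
Your proposal is correct and follows essentially the same approach as the paper's (and the cited reference's): the paper does not spell out a proof of \fref{thm:opt-tuning} but defers to \cite[Thm.~3.7]{mousavi2015consistent}, whose argument is precisely the monotonicity-plus-induction scheme you describe, and the paper already contains the two key ingredients you identify — the monotonicity of the optimally-denoised MSE in $\sigma$ (your ``monotonicity lemma'' is \fref{lem:MMSE_Auxiliary}) and the stagewise induction (your induction is \fref{lem:MMSE}) — in \fref{app:F_MAPMMSE} for the closely related statement \fref{lem:FMMSE}. The only adaptation needed, which you correctly flag, is that here the infimum is taken over the parametric family indexed by $(\lambda_t,\gamma_t^2)$ rather than over all pseudo-Lipschitz denoisers, and that SURE consistency must be invoked to transfer the oracle statement to the empirically-tuned algorithm.
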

 
\fref{thm:opt-tuning} implies that instead of jointly tuning all parameters $\{\lambda_1,\dots,\lambda_{\tmax},\gamma_1^{2},\dots,\gamma_{\tmax}^{2}\}$, we can tune $(\lambda_1,\gamma_1^2)$ at iteration 1. Given the parameters $(\lambda_1,\gamma_1^2)$, we can then tune $(\lambda_2,\gamma_2^2)$ at iteration 2, and repeat this  process for $\tmax$ iterations.
 
The missing piece is to minimize the MSE at iteration $t$ with appropriate parameters $(\lambda_t,\gamma_t^2)$. As cSE in \fref{thm:CSE} suggest, the asymptotic MSE at iteration $t$ is given by the function $\Psi(\sigma_t^2,\gamma_t^2,\lambda_t)=\Exop_{\xrv,Z}\!\left[\abs{ \mathsf{F}\!\left(\xrv + \sigma_t Z,\gamma_t^2,\lambda_t\right) - \xrv}^2 \right]$, with $\xrv\sim p(x_\ell)$, $Z \sim\setC\setN(0,1)$, and $\mathsf{F}$ as the posterior mean function introduced in \fref{eq:Ffunc}. Notice that all functions $\Psi$, $\mathsf{F}$, and $\mathsf{G}$ will also be a function of $\lambda_t$. We use a similar tuning approach as in \cite{mousavi2015consistent} and since the MSE function $\Psi(\sigma_t^2,\gamma_t^2,\lambda_t)$ depends on the unknown signal prior $p(\xrv)$, we estimate it using SURE in each iteration. 
For AMPI, SURE is given by
\begin{align}
\hat{\Psi}(\sigma_t^2,\lambda_t,\gamma_t^2) =\, & \textstyle \frac{1}{N}\|\mathsf{F}(\bmz^{t},\gamma_t^2,\lambda_t)-\bmz^{t}\|^2+\sigma_t^2 \notag \\
&\textstyle  +\frac{2\sigma_t^2}{\gamma_t^2}\left\langle\mathsf{G}(\bmz^{t},\gamma_t^2,\lambda_t)-1\right\rangle, \label{eq:Psihat}
\end{align}
where we estimate $\sigma_t^2$ by ${\|\resid^{t}\|^2}/{M}$ as in \cite{andreaGMCS}.
We now modify AMPI as in \fref{alg:AMPI} for CS applications as follows.

\begin{oframed}	
\vspace{-0.3cm}
\begin{alg}[\bf{AMPI-SURE}]\label{alg:AMPI-SURE} 
	Set $\hat{\bmx}^1=0$ and \mbox{$\resid^1= \bmy$}.
	\begin{enumerate}
		\item For $t=1,2,\ldots,\tmax$ compute
	\begin{align*}
	\bmz^{t}&=\hat\bmx^t+\bH^\Herm\bmr^t\\	
	(\lambda_t,\gamma_t^{2}) &= \argmin \limits_{\lambda \geq 0, \gamma^2 \geq 0} \hat{\Psi}(\sigma_t^2,\lambda,\gamma^{2})\\
	\hat{\bmx}^{t+1} &=  \mathsf{F}(\bmz^{t},\gamma_t^2,\lambda_t)\\
	\resid^{t+1}  &= \textstyle   \bmy-\bH\hat{\bmx}^{t+1}+\beta \frac{\resid^{t}}{\gamma_t^2} \left\langle\mathsf{G}(\bmz^{t},\gamma_t^2,\lambda_t)\right\rangle.
	\end{align*}
	Here, $\hat{\Psi}$ is given by \fref{eq:Psihat}, and $\mathsf{F}$ and $\mathsf{G}$ are the posterior mean and variance given by \fref{eq:Ffunc}, where $p(x_\ell)=\int_{\complexset} p(x_\ell | s_\ell) p(s_\ell) \dd s_\ell$ and $p(s_\ell)= \frac{\lambda_t}{2} \exp \left( -\lambda_t |s_\ell|\right)$.
		\item Compute the MMSE estimate for $t=\tmax$ with the posterior PDF $p(s_\ell | z_\ell^\tmax)$ as defined in \fref{eq:MAP_posterior} and $p(w_\ell^\tmax) \sim \setN(0,\sigma^2_\tmax)$. The effective noise variance $\sigma^2_\tmax$ and signal prior distribution $p(s_\ell)$ are estimated using $\gamma_{\tmax}^2$ and $ \frac{\lambda_{\tmax}^*}{2} \exp\! \left( -\lambda_{\tmax}^* \|\s_\ell\|_1\right)$.
	\end{enumerate}
\end{alg}
\vspace{-0.3cm}
\end{oframed}

\fref{alg:AMPI-SURE} summarizes AMPI for CS. The only difference to \fref{alg:AMPI} is the presence of the extra parameter $\lambda_t$  that is optimally tuned using SURE depending on the signal sparsity.

 \subsection{AMPI Sparse Signal Recovery Under Gaussian Input Noise}
AMPI for CS recovery as in \fref{alg:AMPI-SURE} is defined for a general class of input noise distributions $p(\bmx|\bms)$. We now provide a derivation of AMPI for the specific case of Gaussian input noise, i.e., where   $p(\bmx|\bms)=\setC \setN(\bms,\NT \bI_M)$ \cite{treichler2009application,arias2011noise}.
The following lemma details the prior $p(\bmx)$ and the functions $\mathsf{F}$ and~$\mathsf{G}$ needed in Steps 1 and 2 of \fref{alg:AMPI-SURE}. The derivations are  given in \fref{app:FG_CS} in supplementary.

\begin{lem}\label{lem:FG_CS}
	
	For a CS system as defined by \fref{sec:CS_model}, the prior $p(\bmx)$, the posterior mean $\mathsf{F}$ and variance $\mathsf{G}$ defined in \fref{alg:AMPI-SURE} are given by:
	\begin{align}
		p(\bmx)
		&= \prod\limits_{i=1}^{N}\frac{\lambda}{2}\exp\!\left(\frac{\lambda^2\NT}{2}\right) 
		\!\bigg(\!
		\exp(\lambda x_i)Q\!\left(\frac{x_i+\lambda\NT}{\sqrt{\NT}}\right)\! \notag
		 \\
		& \quad + \exp(-\lambda x_i)\!\left(1-Q\!\left(\frac{x_i-\lambda\NT}{\sqrt{\NT}}\right)\right)\!
		\bigg) \\
		\mathsf{F}(\hat{x},\tau,\lambda)&=\hat{x}+\lambda \tau \eta(\hat{x},\tau)\\
		\mathsf{G}(\hat{x},\tau,\lambda)&= \tau + \lambda^2 \tau^2 (1- (\eta(\hat{x},\tau))^2) \notag \\ 
		& \quad  - \frac{4}{\gamma(\hat{x},\tau)} \frac{\lambda \tau^2}{\sqrt{2\pi (\NT+\tau)}},
	\end{align}
	where we define
	\begin{align}
	 \eta(\hat{x},\tau) & = \frac{\mathrm{erfcx}(\alpha)-\mathrm{erfcx}(\beta)}{\mathrm{erfcx}(\alpha)+\mathrm{erfcx}(\beta)} \label{eq:eta} \\
	\gamma(\hat{x},\tau)&=\mathrm{erfcx}(\alpha)+\mathrm{erfcx}(\beta) \\
		\alpha& =\frac{\hat{x}+\lambda(\NT+\tau)}{\sqrt{2(\NT+\tau)}} \\ 		 
		\beta& =\frac{-\hat{x}+\lambda(\NT+\tau)}{\sqrt{2(\NT+\tau)}}.
	\end{align}
The Q-function is $Q(x)=\int_x^\infty\frac{1}{\sqrt{2\pi}}\exp\left(\!-\frac{t^2}{2}\right)\dd t$, the error function $\mathrm{erfc}(x)=2Q(\sqrt{2}x)$, and $\mathrm{erfcx}(x)=x^2 \mathrm{erfc}(x)$.
\end{lem}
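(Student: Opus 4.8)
The plan is to compute the three objects $p(\bmx)$, $\mathsf{F}$, $\mathsf{G}$ directly from their defining integrals, exploiting the fact that the Laplace prior splits into two one-sided exponentials and that the Gaussian-input-noise kernel $p(x_\ell\vert s_\ell)=\setC\setN(s_\ell,\NT)$ (here effectively real, so $\setN(s_\ell,\NT)$) is conjugate to exponentials. First I would derive the effective prior: substituting $p(s_\ell)=\frac{\lambda}{2}\exp(-\lambda|s_\ell|)$ and the Gaussian kernel into $p(x_i)=\int_{\reals} p(x_i\vert s_i)\,p(s_i)\,\dd s_i$, split the $s_i$-integral at $0$ into a $(-\infty,0)$ piece with weight $e^{\lambda s_i}$ and a $(0,\infty)$ piece with weight $e^{-\lambda s_i}$. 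On each piece complete the square in $s_i$ inside the Gaussian; the cross term produces the overall factor $\frac{\lambda}{2}\exp(\lambda^2\NT/2)$ together with $\exp(\pm\lambda x_i)$, and the remaining Gaussian integral over a half-line is exactly a $Q$-function evaluated at $(x_i\pm\lambda\NT)/\sqrt{\NT}$. Taking the product over $i$ yields the stated $p(\bmx)$.

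Next I would compute $\mathsf{F}(\hat x,\tau,\lambda)=\int_{\reals} x\, p(x\vert \hat x,\tau)\,\dd x$ where, by Algorithm~\ref{alg:AMPI-SURE}, $p(x\vert \hat x,\tau)\propto \exp(-(\hat x-x)^2/(2\tau))\,p(x)$ with $p(x)$ the single-coordinate effective prior just derived. Multiplying the two Gaussians-in-exponent and again completing the square, the posterior becomes a normalized mixture of two truncated/shifted Gaussians; the normalization constant is $\gamma(\hat x,\tau)=\mathrm{erfcx}(\alpha)+\mathrm{erfcx}(\beta)$ with $\alpha,\beta$ as stated (the $\mathrm{erfcx}$ appears precisely because completing the square leaves a factor $e^{\alpha^2}$ multiplying $\mathrm{erfc}(\alpha)$). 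Computing the first moment of this mixture gives $\hat x$ plus a correction proportional to $\lambda\tau$ times the ratio $(\mathrm{erfcx}(\alpha)-\mathrm{erfcx}(\beta))/(\mathrm{erfcx}(\alpha)+\mathrm{erfcx}(\beta))=\eta(\hat x,\tau)$, which is the claimed soft-threshold-type expression $\mathsf{F}=\hat x+\lambda\tau\,\eta(\hat x,\tau)$. One clean shortcut here is to note $\mathsf{F}(\hat x,\tau,\lambda)=\hat x+\tau\,\partial_{\hat x}\log Z(\hat x,\tau)$ where $Z$ is the normalizer, which reduces the first-moment computation to differentiating $\log\gamma(\hat x,\tau)$ and using $\mathrm{erfcx}'(u)=2u\,\mathrm{erfcx}(u)-2/\sqrt{\pi}$.

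For $\mathsf{G}$, the cleanest route is the identity $\mathsf{G}(\hat x,\tau,\lambda)=\tau\,\partial_{\hat x}\mathsf{F}(\hat x,\tau,\lambda)$ (standard for the MMSE/conditional-variance functions in AMP under a Gaussian observation channel). Differentiating $\mathsf{F}=\hat x+\lambda\tau\,\eta$ in $\hat x$ gives $1+\lambda\tau\,\partial_{\hat x}\eta$; writing $\eta$ as a ratio and using the $\mathrm{erfcx}'$ formula, the derivative of $\eta$ produces a term proportional to $\lambda(1-\eta^2)$ plus a term with a $1/\gamma$ factor and a $1/\sqrt{2\pi(\NT+\tau)}$ factor coming from the $2/\sqrt{\pi}$ in $\mathrm{erfcx}'$ together with the normalization $\sqrt{2(\NT+\tau)}$ hidden in $\alpha,\beta$. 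Multiplying by $\tau$ recovers exactly $\mathsf{G}=\tau+\lambda^2\tau^2(1-\eta^2)-\frac{4}{\gamma(\hat x,\tau)}\frac{\lambda\tau^2}{\sqrt{2\pi(\NT+\tau)}}$.

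The main obstacle is bookkeeping rather than conceptual: keeping the scaling constants straight across the square-completions (the factor $\exp(\lambda^2\NT/2)$ in $p(\bmx)$ versus the $(\NT+\tau)$ appearing after convolving with the AMP observation noise, and the $\sqrt{2}$'s relating $\mathrm{erfc}$, $Q$, and the Gaussian normalizations), and verifying that the $\mathrm{erfcx}$ form is the numerically stable regrouping of $e^{\alpha^2}\mathrm{erfc}(\alpha)$ that actually matches the claimed $\alpha,\beta$. A secondary care point is justifying $\mathsf{G}=\tau\,\partial_{\hat x}\mathsf{F}$ in this setting (it follows from Stein-type integration by parts against the Gaussian kernel, valid since $p(x)$ here is continuous and the relevant integrals converge), after which the $\mathsf{G}$ computation is purely mechanical differentiation.
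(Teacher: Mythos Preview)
Your proposal is correct, and for $p(\bmx)$ it coincides with the paper's derivation (split the Laplace integral at $0$, complete the square, read off $Q$-functions). For $\mathsf{F}$ and $\mathsf{G}$, however, you take a genuinely different route. The paper computes both by brute-force integration: it writes out $\int_\reals x\,p(x\vert z,\sigma^2)\,\dd x$ and $\int_\reals x^2\,p(x\vert z,\sigma^2)\,\dd x$ explicitly, splits each into two pieces according to the two terms of $p(x)$, and evaluates them via the tabulated identity $\int_\reals \frac{1}{\sqrt{2\pi\tau}}Q(b+x/\sigma)\exp(-x^2/(2\tau))\,\dd x=Q\!\left(b\sigma/\sqrt{\sigma^2+\tau}\right)$ together with an integration-by-parts step for the $x$- and $x^2$-weighted versions; only at the very end are the exponential--$Q$ products regrouped into $\mathrm{erfcx}$. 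Your approach instead exploits the exponential-family structure of the Gaussian observation channel: the normalizer $Z(\hat x,\tau)$ has the same functional form as $p(x)$ with $\NT\to\NT+\tau$ (whence $\gamma=\mathrm{erfcx}(\alpha)+\mathrm{erfcx}(\beta)$ directly), and then $\mathsf{F}=\hat x+\tau\,\partial_{\hat x}\log Z$ and $\mathsf{G}=\tau\,\partial_{\hat x}\mathsf{F}$ reduce everything to differentiating $\mathrm{erfcx}$ via $\mathrm{erfcx}'(u)=2u\,\mathrm{erfcx}(u)-2/\sqrt{\pi}$. Your route is shorter and mechanically safer (no second-moment integral to track); the paper's route is more elementary in that it does not invoke the Stein/Tweedie identity and keeps every step as an explicit Gaussian integral. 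One small imprecision in your sketch: the posterior $p(x\vert\hat x,\tau)$ is not literally a two-component mixture of truncated Gaussians in $x$ (since $p(x)$ is already a Laplace--Gaussian convolution), but this does not matter because you immediately pass to the $\partial_{\hat x}\log Z$ shortcut, which only needs the normalizer.
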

Before providing simulation results for AMPI-SURE, we next summarize two baseline algorithms used as a comparison.

\subsection{Two Alternative Algorithms}\label{sec:2otherAlg}

\subsubsection{Noise Whitening} 
Noise whitening has been proposed for data detection and compressive sensing in \cite{Studer_Tx_OFDM} and \cite{arias2011noise}, respectively. This approach relies on the Gaussian input-noise model, which enables one to ``whiten'' the impaired system model~\fref{eq:sysmodel} by multiplying the vector $\bmy$ with the whitening matrix $\bW=\No\bQ^{-\frac{1}{2}}$, where $\bQ=\NT\bH \bH^\Herm+\No\bI_M$ is the covariance matrix of the effective input and receive noise $\bmn+\bH \bme$. Whitening resultsin  a statistically equivalent input-output relation
%
$\tilde{\bmy}=\tilde{\bH}\bms+\tilde{\bmn}$,
where $\tilde{\bmy}=\bW\bmy$, $\tilde{\bH}=\bW\bH$ and $\tilde{\bmn} \sim \setC \setN(0,\No\bI_M)$ is independent of $\bms$ \cite{Studer_Tx_OFDM}. Thus, signal recovery can be performed with conventional algorithms, such as AMP~\cite{DMM10a,DMM10b}. 
The drawback of noise whitening is in computing the whitening matrix $\bW$, whose dimensions may be extremely large (e.g., in imaging applications). AMPI avoids computation of $\bW$, which reduces complexity. Furthermore, AMPI supports more general  input noise models---in contrast, noise whitening requires a Gaussian input-noise model. 

\subsubsection{Convex Optimization}

Consider the system model in \fref{sec:CS_model}. If the input noise is zero, i.e. $\bmx=\bms$, then for a sparse signal $\bms$ or equivalently~$\bmx$, sparse signal recovery can be performed by solving~\cite{donoho2011design}
\begin{align*}
\hat{\bms}=\argmin_{\bms} \textstyle \frac{1}{2}\|\bmy-\bH\bms\|^2_2+\lambda\|\bms\|_1.
\end{align*}

If the input noise is non-zero, i.e., $\bmx \neq \bms$, then we can solve the following optimization problem:
\begin{align}\label{eq:CS_ConvexProblem}
\hat\bmx = \argmin_\bmx \frac{1}{2 \No} \|\bmy-\bH\bmx\|_2^2 - \log p(\bmx) 
\end{align}
If the term $- \log p(\bmx) $ is convex and differentiable, then we can use efficient algorithms that guarantee convergence to an optimal solution. 
The following result  establishes convexity for the Gaussian input-noise model and provides the gradient, which we use to solve  \fref{eq:CS_ConvexProblem}. The proof is given in \fref{app:CS_convexity} in supplementary derivations.
\begin{lem}\label{lem:CS_convexity}
The objective function $q(\bmx)=\frac{1}{2 \No} \|\bmy-\bH\bmx\|_2^2 - \log p(\bmx) $ is convex and its gradient is given by 
%
\begin{align}
\nabla_\bmx q(\bmx) = \frac{1}{\No}(\bH \bmx -\bmy)^\Tran \bH - \nabla_\bmx \left[ \log p(\bmx) \right]\!,
\end{align}
and,
%
$\!\nabla_\bmx \!\left[ \log p(\bmx) \right] \!\!=\!\! \lambda 
[\eta(x_1,\!0),...,\!\eta(x_N,\!0)]^\Tran
$
%
with $\eta(\hat{x},\!\tau\!)$ in \fref{eq:eta}.
\end{lem}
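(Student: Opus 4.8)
The plan is to split $q(\bmx)=\frac{1}{2\No}\|\bmy-\bH\bmx\|_2^2-\log p(\bmx)$ into its two summands. The quadratic part is convex because its Hessian $\frac{1}{\No}\bH^\Tran\bH$ is positive semidefinite, and a direct differentiation gives its gradient as $\frac{1}{\No}(\bH\bmx-\bmy)^\Tran\bH$ in the row-vector convention used in the statement. So the whole lemma reduces to two claims about the second summand: that $-\log p(\bmx)$ is convex, and that its gradient equals $-\lambda[\eta(x_1,0),\dots,\eta(x_N,0)]^\Tran$. Because the effective prior from \fref{lem:FG_CS} factorizes, $p(\bmx)=\prod_{i=1}^{N}p(x_i)$ with $p(x_i)=\int p(x_i\mid s_i)\,p(s_i)\,\dd s_i$ the same scalar marginal for every coordinate, both claims can be established one coordinate at a time.

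For convexity I would show the scalar marginal $p(x_i)$ is log-concave. It is the convolution of the Laplace density $\frac{\lambda}{2}e^{-\lambda|s|}$, which is log-concave since $-\lambda|s|$ is concave, with the Gaussian kernel $p(x_i\mid s_i)$, which is log-concave; and log-concavity of probability densities is preserved under convolution (Pr\'ekopa's theorem, via the Pr\'ekopa--Leindler / Brascamp--Lieb inequality). Convolving with a Gaussian also makes $p(x_i)$ strictly positive and $C^\infty$ on $\reals$, so $\log p(x_i)$ is well defined and twice differentiable; hence $-\log p(\bmx)=-\sum_i\log p(x_i)$ is convex and, adding the convex quadratic, so is $q$. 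If a self-contained argument is preferred to citing Pr\'ekopa, one can instead differentiate twice and use $(\log p(x_i))''=\NT^{-2}(\mathrm{Var}[S_i\mid X_i=x_i]-\NT)$ together with the Brascamp--Lieb bound $\mathrm{Var}[S_i\mid X_i=x_i]\le\NT$, which holds because the log-posterior of $s_i$ has curvature at least $1/\NT$.

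For the gradient of $\log p(\bmx)$ I would compute $(\log p(x))'=p'(x)/p(x)$ directly from the closed form in \fref{lem:FG_CS}. Writing $p(x)=c\big(e^{\lambda x}Q(u)+e^{-\lambda x}(1-Q(v))\big)$ with $c=\frac{\lambda}{2}e^{\lambda^2\NT/2}$, $u=(x+\lambda\NT)/\sqrt{\NT}$, $v=(x-\lambda\NT)/\sqrt{\NT}$, and $Q'(t)=-\frac{1}{\sqrt{2\pi}}e^{-t^2/2}$, the two terms produced by differentiating the $Q$'s cancel, because $e^{\lambda x}e^{-u^2/2}=e^{-x^2/(2\NT)-\lambda^2\NT/2}=e^{-\lambda x}e^{-v^2/2}$, leaving $p'(x)=c\lambda\big(e^{\lambda x}Q(u)-e^{-\lambda x}(1-Q(v))\big)$. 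Dividing by $p(x)$ gives $p'(x)/p(x)=\lambda\,\big(e^{\lambda x}Q(u)-e^{-\lambda x}(1-Q(v))\big)\big/\big(e^{\lambda x}Q(u)+e^{-\lambda x}(1-Q(v))\big)$. The last step is to recognize this ratio as $\eta(x,0)$: using $Q(t)=\frac12\mathrm{erfc}(t/\sqrt2)$, $\mathrm{erfc}(-t)=2-\mathrm{erfc}(t)$, $\mathrm{erfcx}(t)=e^{t^2}\mathrm{erfc}(t)$, and the $\tau=0$ identities $u/\sqrt2=\alpha$, $v/\sqrt2=-\beta$, one checks that $e^{\lambda x}Q(u)=\frac12 e^{-x^2/(2\NT)-\lambda^2\NT/2}\,\mathrm{erfcx}(\alpha)$ and $e^{-\lambda x}(1-Q(v))=\frac12 e^{-x^2/(2\NT)-\lambda^2\NT/2}\,\mathrm{erfcx}(\beta)$, so the common prefactor cancels and the ratio is exactly $(\mathrm{erfcx}(\alpha)-\mathrm{erfcx}(\beta))/(\mathrm{erfcx}(\alpha)+\mathrm{erfcx}(\beta))=\eta(x,0)$ from \fref{eq:eta}. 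Stacking over coordinates yields $\nabla_\bmx[\log p(\bmx)]=\lambda[\eta(x_1,0),\dots,\eta(x_N,0)]^\Tran$, and combining with the quadratic term gives the stated $\nabla_\bmx q(\bmx)$.

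The main obstacle is the bookkeeping in the last paragraph: getting all exponential factors, the $Q$-to-$\mathrm{erfcx}$ conversions, and the sign $v/\sqrt2=-\beta$ exactly right so that numerator and denominator carry the \emph{same} prefactor and the ratio collapses to $\eta(x,0)$ --- the cancellation of the derivative-of-$Q$ terms is the single place where an algebraic slip would spoil the clean form. I would also track one convention mismatch: \fref{sec:CS_model} takes $\bms$ and $\bmy$ real, whereas \fref{lem:FG_CS} writes the Gaussian input noise as $\mathcal{CN}$; the scalar computations above are carried out formally as in \fref{lem:FG_CS}. The convexity half is routine once preservation of log-concavity under convolution is invoked, so I expect no real difficulty there.
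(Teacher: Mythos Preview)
Your proposal is correct and follows essentially the same route as the paper: log-concavity of the Laplace and Gaussian factors, preservation under convolution to get convexity of $-\log p(\bmx)$, and then direct differentiation of the closed form in \fref{lem:FG_CS} followed by a $Q\to\mathrm{erfcx}$ rewrite to recognize $\eta(x_i,0)$. Your write-up is in fact slightly more careful than the paper's, which jumps directly to the ratio without exhibiting the cancellation of the $Q'$ contributions; your alternative Brascamp--Lieb argument for log-concavity is extra and not in the paper.
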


Hence, we propose to use the non-linear conjugate gradient method of Polak-Ribiere~\cite{polak1969note} to solve for \fref{eq:CS_ConvexProblem}; see \cite[alg. 4.4]{frandsen1999unconstrained} for the algorithm details.
The downside of such an approach is that there is no known fast approach to set the parameter $\lambda$. In contrast, AMPI in \fref{alg:AMPI-SURE} can be tuned optimally.

\subsection{Results and Comparison}
\begin{figure}
	\centering
	\includegraphics[width=0.7\columnwidth]{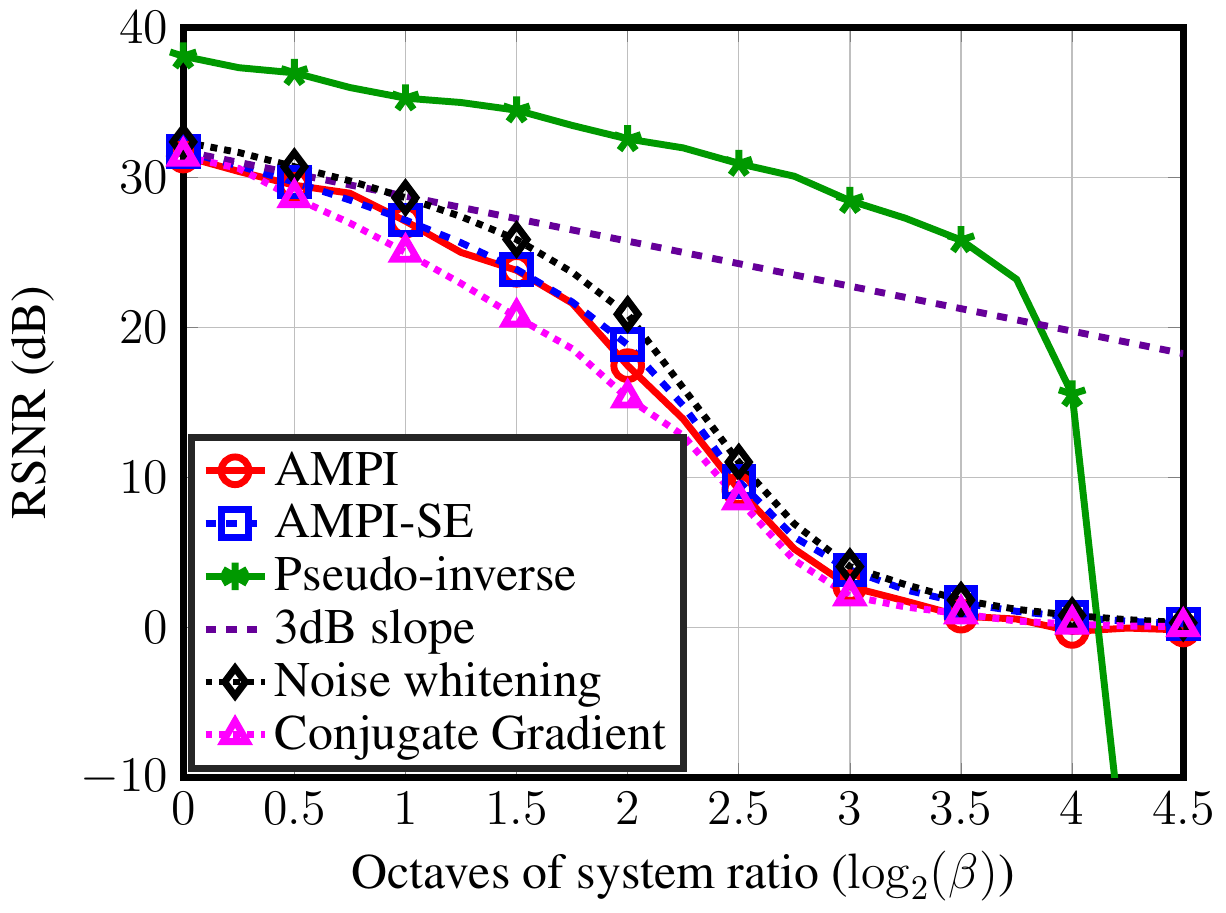}
	\vspace{-0.2cm}
	\caption{Reconstruction SNR of AMPI and other algorithms for sparse signal recovery with input noise. The signal has sparsity of $5\%$, SNR of $30$\,dB, and is affected by input noise with $\textit{EVM}$ of $-30$\,dB. AMPI achieves the same performance as noise whitening and nonlinear conjugate gradients but at much lower computational complexity.}
	\label{fig:RSNRcomp}
\end{figure}
\fref{fig:RSNRcomp} shows simulation results for sparse signal recovery with a sparsity rate of $\frac{K}{N}=5\%$ and signal dimension of $N=1000$. The input signal is generated with a Bernoulli-Gaussian distribution. The indices of the non-zeros entries are selected  from an equiprobable Bernoulli distribution and each entry is generated from a standard normal distribution. The reconstruction signal to noise ratio (RSNR) is plotted as a function of the octaves of system ratio $\beta$ (also referred to as sub-sampling ratio). Furthermore, we consider an average SNR of $\SNR={\Exop\left[\|\bH \vecs\|^2\right]}/{\Exop\left[ \|\vecn\|^2\right]}=\beta \frac{\Es}{\No}=30$\,dB and an error vector magnitude of $\textit{EVM}={\Exop\left[ \|\vece\|^2 \right]}/{\Exop\left[ \|\vecs\|^2 \right]}=\frac{\NT}{\Es}=-30$\,dB.
In \fref{fig:RSNRcomp}, the RSNR results of each algorithm is averaged over 20 different randomly-created input signals. The performance of AMPI with 100 iterations is depicted as a solid circle-marked red curve. AMPI's performance almost perfectly matches the cSE predictions in  \fref{eq:SErecursion} (the dashed square-marked blue curve). As a comparison, we show the performance of  noise whitening and convex optimization with non-linear conjugate gradients. The dotted diamond-marked black curve shows the performance of noise whitening followed by AMP. While whitening achieves the same performance as AMPI, it entails significantly higher complexity as one has to first compute the whitening matrix. 
The dotted triangle-marked magenta curve shows the performance of nonlinear conjugate gradients with $100$ iterations. This method requires a computationally expensive grid search per each iteration to tune the parameter $\lambda$. Since we set this tuning parameter  using the \emph{optimal} ones from AMPI  (solid circle-marked red curve), the conjugate gradients method performs very well.
%
The solid star-marked green curve corresponds to the oracle-based approach of taking the pseudo-inverse assuming the support is known. As \cite{davenport2012proscons} suggests, the RSNR of this method decays with a slope of $3$\,dB per octave. However, none of the proposed algorithms follows the $3$dB per octave slope.
%

 
 \section{Conclusion}
\label{sec:conclusion}

We have introduced AMPI (short for approximate message passing with input noise), a novel  data detection and estimation algorithm for  systems that are corrupted by input noise. AMPI is computationally efficient and can be used for a broad range of input-noise models. Furthermore, the complex state-evolution (cSE) framework enables a theoretical analysis of AMPI in the large system limit and for i.i.d.\ Gaussian measurement matrices. 
Under these conditions, we have investigated optimality conditions of AMPI for data detection and signal estimation.
We have shown that AMPI is optimal within the AMP framework and, under additional assumptions, achieves individually-optimal error-rate performance in massive MIMO applications. 
For the Gaussian input-noise model, we have used numerical results to show that AMPI outperforms methods that ignore input noise and performs on-par with whitening and optimization-based methods, but at (often significantly) lower complexity.

%

\appendices

%
%
%
%
%
%
%

\section{Proof of \fref{thm:F_MAPMMSE}}\label{app:F_MAPMMSE}

\subsection{Proof Outline}

We provide an optimality proof for an application where AMPI solves  the IO problem in \fref{eq:IOproblem}. This means that in Step 2 of AMPI, we use the MAP estimate. The optimality proof where AMPI is supposed to minimize the MSE follows analogously.
%
%
Suppose that we use \mbox{AMPI} with an arbitrary set of pseudo-Lipschitz functions $\mathsf{F}_1,...,\mathsf{F}_{\tmax+1}$ as described in \fref{eq:generalFs}.
In this proof, we will establish that AMPI in \fref{alg:AMPI} chooses these functions such that the outputs
 $\hat{s}_\ell$ for $\ell\!=\!1,...,N$, correspond to the solution of the IO problem~\fref{eq:IOproblem} in the large-system limit.
%
%
%
We start by writing down the optimality criterion in  \fref{eq:IOproblem} as
\begin{align} \label{eq:Opt_criterion}
\mathsf{F}_{\!\tmax\!+\!1}\!(\!z_\ell^{\tmax+1}\!\!,\!\sigma_{\tmax+1}^2\!) 
\!=\!\argmin\limits_{\mathsf{F}} \mathbb{P} \!\left(\!\mathsf{F}(z_\ell^{\tmax\!+\!1}\!\!,\!\sigma_{\tmax\!+\!1}^2) \!\neq\! s_\ell \!\right)\!\!.
\end{align}
The estimate generated by the function $\mathsf{F}_{\tmax+1}(z^{\tmax+1}_\ell,\sigma_{\tmax+1}^2)$ during Step 2 at iteration ${\tmax+1}$ minimizes the per-entry symbol-error probability. Note that the functions $\mathsf{F}_1,\ldots,\mathsf{F}_{\tmax+1}$ operate element-wise on vectors.
\begin{rem}
	The criterion \fref{eq:Opt_criterion} appears to only consider the $\ell$th entry. Since, however, the probability in~\fref{eq:Opt_criterion} is taken w.r.t.\ the randomness in the matrix $\bH$, the vector $\bms$, input and receive noise, the criterion is in fact affected by all other entries.
\end{rem}	

We now establish the optimality proof by the following two lemmas, with proofs in \fref{app:FMAP} and \ref{app:FMMSE}. 
In what follows, we assume the random variables $S \sim p(s_\ell), X|S \sim p(x_\ell|s_\ell)$, and $Z\sim\setC\setN(0,1)$ to be independent of $X$ and $S$.
\begin{lem}\label{lem:FMAP}
	Let the assumptions of Thm.~\ref{thm:F_MAPMMSE} hold.
	For the criterion \fref{eq:Opt_criterion} to hold for $\ell = 1,\ldots,N$, $\mathsf{F}_{\tmax+1}(z^{\tmax+1}_\ell,\sigma_{\tmax+1}^2)$ at iteration $\tmax+1$ must be the MAP estimator
	\begin{align}\label{eq:maptmax1}
	\mathsf{F}_{\!\tmax\!+\!1\!}(\!z^{\tmax\!+\!1}_\ell\!\!,\!\sigma_{\tmax\!+\!1}^2\!) \!=\! \argmax \limits_{s_\ell\in\setO} {p_{S | X+\sigma_{\tmax\!+\!1}\!Z}(\!s_\ell|z_\ell^{\tmax+1}\!)}
	\end{align}	
\end{lem}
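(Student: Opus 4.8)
The plan is to reduce the optimization over the final-step function $\mathsf{F}_{\tmax+1}$ to a textbook Bayesian decision problem on the decoupled channel, and then invoke the fact that the classifier minimizing the probability of error is the MAP rule. First I would fix the (pseudo-Lipschitz) functions $\mathsf{F}_1,\ldots,\mathsf{F}_{\tmax}$, which, together with the cSE recursion \fref{eq:SErecursion}, pin down the deterministic effective noise variance $\sigma_{\tmax+1}^2$. By the decoupling property described around \fref{eq:decoupling} and justified by \fref{thm:CSE}, in the large-system limit the pair $(s_\ell,z_\ell^{\tmax+1})$ has the same limiting law as $(S,Q)$ with $Q=X+\sigma_{\tmax+1}Z$, $p(S,X)=p(s_\ell)p(x_\ell|s_\ell)$, and $Z\sim\setC\setN(0,1)$ independent of $(S,X)$. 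Hence, for any admissible measurable candidate $\mathsf{F}:\complexset\to\setO$, the averaged per-entry error $\frac1N\sum_{\ell=1}^N\mathbb{P}\!\left(\mathsf{F}(z_\ell^{\tmax+1},\sigma_{\tmax+1}^2)\neq s_\ell\right)$ converges to $\mathbb{P}\!\left(\mathsf{F}(Q,\sigma_{\tmax+1}^2)\neq S\right)$; I would obtain this by sandwiching the indicator $\mathsf{1}(\mathsf{F}(\cdot)\neq\cdot)$ between pseudo-Lipschitz functions and applying the state-evolution convergence of \cite{bayatimontanari}, with the assumption that the decision regions of $\mathsf{F}$ have null boundaries (in the spirit of \fref{def:hardsoft}) controlling the approximation gap.

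Second, the criterion \fref{eq:Opt_criterion} then amounts to minimizing $\mathbb{P}\!\left(\mathsf{F}(Q,\sigma_{\tmax+1}^2)\neq S\right)$ over all admissible $\mathsf{F}$. Conditioning on $Q$ gives
\[
\mathbb{P}\!\left(\mathsf{F}(Q,\sigma_{\tmax+1}^2)\neq S\right) = \mathbb{E}_Q\!\left[\,1 - p_{S\mid Q}\!\left(\mathsf{F}(Q,\sigma_{\tmax+1}^2)\,\middle|\,Q\right)\right]\!,
\]
and for each fixed value $q$ the integrand is minimized by any $\mathsf{F}(q,\sigma_{\tmax+1}^2)\in\argmax_{s\in\setO}p_{S\mid Q}(s\mid q)$. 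Since $\setO$ is finite, such a selection can be made measurably by breaking ties through a fixed ordering of $\setO$, so the pointwise minimizer is exactly the MAP rule in \fref{eq:maptmax1}; moreover any $\mathsf{F}_{\tmax+1}$ attaining the minimum must coincide with it for $Q$-almost every $q$, and since the error probability is unchanged under modifications on $Q$-null sets this is precisely the assertion of the lemma. To close, I would observe that $p_{S\mid X+\sigma_{\tmax+1}Z}(s_\ell\mid z_\ell^{\tmax+1})$ equals the marginal posterior $p(s_\ell\mid z_\ell^{\tmax+1})$ of \fref{eq:MAP_posterior}, so the rule in \fref{eq:maptmax1} is the function $D$ employed in Step~2 of \fref{alg:AMPI}.

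The main obstacle is the first step: transferring the finite-$N$ per-entry error rate to the limiting decoupled model. The state-evolution results are stated for pseudo-Lipschitz test functions, whereas the $0/1$ loss is discontinuous, so the approximation argument genuinely needs the null-boundary hypothesis on the competing decision regions; one must also be careful that the relevant variance is the deterministic SE value $\sigma_{\tmax+1}^2$ rather than the empirical threshold $\gamma_{\tmax+1}^2$, so that the limiting channel $Q=X+\sigma_{\tmax+1}Z$ is well defined. The remaining pieces — measurability of the $\argmax$, tie-breaking over the finite constellation $\setO$, and the pointwise optimality of the MAP decision — are routine and can be dispatched quickly.
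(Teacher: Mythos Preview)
Your proposal is correct and follows essentially the same route as the paper: reduce the per-entry error probability to the decoupled-channel error $\mathbb{P}(\mathsf{F}(X+\sigma_{\tmax+1}Z,\sigma_{\tmax+1}^2)\neq S)$ via state evolution, then invoke the standard Bayesian fact that the MAP rule minimizes this. The only cosmetic difference is in the treatment of the discontinuous $0/1$ loss: the paper first deduces weak convergence of the empirical joint law of $(z_\ell^{\tmax+1},s_\ell)$ from the Bayati--Montanari pseudo-Lipschitz result (using the Markov chain $s_\ell\to x_\ell\to\bmy$) and then applies a Portmanteau-type lemma for Borel sets with null boundary, together with permutation symmetry in $\ell$ and bounded convergence to pass to the single-index probability, whereas you propose to sandwich the indicator directly by pseudo-Lipschitz approximants under the same null-boundary hypothesis---these are equivalent devices for the same technical step.
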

\begin{lem}\label{lem:FMMSE}
	Let the assumptions of Thm.~\ref{thm:F_MAPMMSE} hold.
	For the criterion \fref{eq:Opt_criterion} to hold for $\ell = 1,\ldots,N$, the functions $\mathsf{F}_t(z^t_\ell,\sigma_t^2)$, $t=1,\ldots,{\tmax}$, must be the unique set of MMSE estimators, i.e., 
	$\mathsf{F}_t(z^t_\ell,\sigma_t^2)=\Exop_{X | X+\sigma_t Z}[x_\ell | z_\ell^t]$. 
\end{lem}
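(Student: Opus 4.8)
\textbf{Proof proposal for \fref{lem:FMMSE}.}

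The plan is to argue by backward induction on the iteration index, using the monotonicity of the one-step state-evolution map together with the uniqueness of the fixed point. First I would establish the basic fact, immediate from \fref{thm:CSE} and the definition of $\Psi$ in \fref{eq:Psi}, that for a fixed noise level $\sigma_t^2$ the choice of $\mathsf{F}_t$ that minimizes the effective noise variance at the next iteration, namely $\sigma_{t+1}^2 = \No + \beta\,\Exop_{X,Z}[|\mathsf{F}_t(X+\sigma_t Z,\gamma_t^2) - X|^2]$, is exactly the conditional-mean (MMSE) estimator $\mathsf{F}_t(z_\ell^t,\sigma_t^2) = \Exop_{X\mid X+\sigma_t Z}[x_\ell\mid z_\ell^t]$, since the MMSE estimator minimizes the mean-squared error among \emph{all} measurable functions, a fortiori among pseudo-Lipschitz ones. (One checks that the MMSE estimator here is pseudo-Lipschitz because the effective prior $p(x_\ell)$ leads to a smooth, Lipschitz posterior mean; alternatively one invokes that it is the limit of pseudo-Lipschitz functions and $\Psi$ is continuous in the relevant sense, so the infimum is attained.)

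Next I would set up the induction. Denote by $\sigma_t^{2,\star}$ the sequence of effective noise variances produced by the genuine AMPI (MMSE functions at every step) and by $\sigma_t^2$ the sequence produced by an arbitrary admissible choice $\mathsf{F}_1,\dots,\mathsf{F}_{\tmax}$; both start from the same $\sigma_1^2 = \No + \beta\Varop_X[X]$. The key structural observation is that the full SE map $\sigma^2 \mapsto \No + \beta\,\Psi_{\mathsf{F}}(\sigma^2)$ is \emph{monotone nondecreasing} in $\sigma^2$ for the MMSE function (this is the standard monotonicity of MMSE in the channel noise level — more noise cannot decrease the Bayes risk), and moreover for \emph{every} admissible $\mathsf{F}_t$ one has the pointwise lower bound $\Psi_{\mathsf{F}_t}(\sigma^2) \ge \Psi_{\mathsf{F}^{\mathrm{MMSE}}}(\sigma^2)$. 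Combining these two facts with an easy induction gives $\sigma_{\tmax+1}^2 \ge \sigma_{\tmax+1}^{2,\star}$: at each step the MMSE branch has both a smaller $\Psi$ and a map that is monotone, so it stays below. Then I would argue that since the fixed point of \fref{eq:FPsimplified} is unique, the MMSE SE sequence $\sigma_t^{2,\star}$ converges monotonically to that unique fixed point $\sigma^2$ as $\tmax\to\infty$, and no admissible sequence can drive $\sigma_{\tmax+1}^2$ below it in the limit; hence the MMSE choice is the unique (up to the induced equivalence on outputs) minimizer of the terminal effective noise variance.

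Finally I would connect ``minimal terminal $\sigma_{\tmax+1}^2$'' to ``criterion \fref{eq:Opt_criterion}''. By the decoupling property \fref{eq:decoupling}, in the large-system limit $z_\ell^{\tmax+1} = x_\ell + w_\ell^{\tmax+1}$ with $w_\ell^{\tmax+1}\sim\setC\setN(0,\sigma_{\tmax+1}^2)$ independent of $x_\ell$, and by \fref{lem:FMAP} the last-stage error probability, for the optimal MAP choice of $\mathsf{F}_{\tmax+1}$, is the Bayes symbol-error probability of estimating $s_\ell$ from $x_\ell+w_\ell^{\tmax+1}$; this Bayes error probability is a nondecreasing function of the channel noise variance $\sigma_{\tmax+1}^2$ (a data-processing / garbling argument: a channel with larger Gaussian noise is a degraded version of one with smaller noise). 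Therefore minimizing $\sigma_{\tmax+1}^2$ over the admissible $\mathsf{F}_1,\dots,\mathsf{F}_{\tmax}$, which by the induction is achieved uniquely by the MMSE functions, simultaneously minimizes the per-entry symbol-error probability, so \fref{eq:Opt_criterion} forces $\mathsf{F}_t = \Exop_{X\mid X+\sigma_t Z}[\,\cdot\,]$ for $t=1,\dots,\tmax$; uniqueness of the fixed point rules out any alternative minimizer landing at a different effective noise level. I expect the main obstacle to be the monotonicity step — precisely, showing that the MMSE state-evolution map is monotone and that its iterates converge to the unique fixed point (rather than oscillating or getting stuck), and controlling the interchange of the $\tmax\to\infty$ limit with the large-system limit; the garbling/degradation argument for monotonicity of Bayes risk in Gaussian noise level is classical but needs to be invoked carefully for the complex constellation case. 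The pseudo-Lipschitz regularity of the MMSE estimator is a secondary technical point that I would handle by the smoothness of the Gaussian-convolved posterior.
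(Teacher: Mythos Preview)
Your proposal is correct and follows essentially the same route as the paper: the paper also reduces \fref{eq:Opt_criterion} to minimizing the terminal effective noise $\sigma_{\tmax+1}^2$ via a garbling argument that $\mathbb{P}(\mathsf{F}_\sigma^{\text{MAP}}(X+\sigma Z)\neq S)$ is nondecreasing in $\sigma$ (your data-processing step), then proves by forward induction that the MMSE denoisers yield the smallest $\sigma_{\tmax+1}^2$, using as the key ingredient the monotonicity of $\inf_{\mathsf{F}}\Exop[|\mathsf{F}(X+\sigma Z)-X|^2]$ in $\sigma$ (your ``more noise cannot decrease Bayes risk'' step), with uniqueness of the fixed point invoked at the end to secure uniqueness of the minimizing sequence. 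One cosmetic remark: you call it ``backward induction'' but the argument you actually sketch is forward (from $\sigma_1^2$ upward), matching the paper.
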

\fref{lem:FMAP} suggests that for optimality to hold, the function $\mathsf{F}_{\tmax+1}(z^{\tmax+1}_\ell,\sigma_{\tmax+1}^2)$ must be the MAP estimator as provided in Step 2 of \fref{alg:AMPI}. Furthermore, \fref{lem:FMMSE} suggest that if the solution to the fixed-point equation of \fref{eq:FPsimplified} is unique, then the set of functions $\mathsf{F}_1,\ldots,\mathsf{F}_{\tmax}$ that satisfy the optimality criterion are given by the unique set of MMSE functions $\mathsf{F}_t(z^t_\ell,\sigma_t^2)=\Exop_{X | X+\sigma_t Z}[x_\ell | z_\ell^t]$ for $t=1,\ldots,\tmax$. These functions match the posterior mean function $\mathsf{F}$ (defined by \fref{eq:Ffunc}) in Step 1 of \fref{alg:AMPI}.
Thus, from \fref{lem:FMAP} and \fref{lem:FMMSE}, we conclude that \fref{alg:AMPI} solves the IO problem \fref{eq:IOproblem} given that the fixed point equation \fref{eq:FPsimplified} is unique.

\subsection{Proof of \fref{lem:FMAP}}\label{app:FMAP}
We start with the following lemma.
\begin{lem}\label{lem:zeta_limit}
	Define $\zeta_N \!  = \!\frac{1}{N}\sum_{\ell=1}^N \!  \mathsf{1}(\mathsf{F}_{\tmax+1}(z^{\tmax+1}_\ell\!\!\!,\sigma_{\tmax+1}^2 ) \! \neq \! s_\ell)$. Fix the system ratio $\beta=N/M$ and let $N\rightarrow\infty$. Then, for a given $\bH$, we have
	\begin{align}\label{eq:prob-convergence}
	\zeta_{N}  \!& \stackrel{\text{a.s.}}{\to}  \Exop_{X,S,Z}
	\!\left[\mathsf{1}(\mathsf{F}_{\tmax+1}(X+\sigma_{\tmax+1} Z,\sigma_{\tmax+1}^2) \neq S)\right]\!=\!\zeta_{\infty},
	\end{align}
\end{lem}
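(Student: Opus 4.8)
The plan is to reduce the claim to the standard approximate-message-passing state-evolution (SE) convergence of empirical averages, and then repair the one point that is not covered by it, namely that the last map $\mathsf{F}_{\tmax+1}$ is discontinuous. First I would invoke \fref{thm:CSE}, and more precisely the rigorous SE result of Bayati and Montanari \cite{bayatimontanari} underlying it: since $\mathsf{F}_1,\ldots,\mathsf{F}_{\tmax}$ are pseudo-Lipschitz, the modified recursion \fref{eq:generalFs} satisfies SE, so that $\sigma_{\tmax+1}^2>0$ is the deterministic SE value, the Gaussian output after $\tmax$ iterations obeys the decoupling picture $z_\ell^{\tmax+1}=x_\ell+\sigma_{\tmax+1}w_\ell^{\tmax+1}$, and --- the statement I actually use --- for every pseudo-Lipschitz (order two) function $\psi$ one has, almost surely over $\bH$, $\bms$, and the input/receive noise,
\[
\tfrac{1}{N}\sum_{\ell=1}^{N}\psi\!\left(z_\ell^{\tmax+1},x_\ell,s_\ell\right)\;\longrightarrow\;\Exop_{X,S,Z}\!\left[\psi\!\left(X+\sigma_{\tmax+1}Z,X,S\right)\right]
\]
where $(X,S)\sim p(s_\ell)p(x_\ell|s_\ell)$ and $Z\sim\setC\setN(0,1)$ is independent; ``for a given $\bH$'' I read as: fix $\bH$ inside the probability-one event on which the above holds. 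If the integrand $(z,x,s)\mapsto\mathsf{1}(\mathsf{F}_{\tmax+1}(z,\sigma_{\tmax+1}^2)\neq s)$ were itself pseudo-Lipschitz the lemma would be immediate, but it is not, because $\mathsf{F}_{\tmax+1}$ is $\setO$-valued, hence piecewise constant. (If instead $\mathsf{F}_{\tmax+1}$ is continuous --- the MMSE case --- this obstruction is absent and the lemma follows directly from the display.)

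The bulk of the work is then a Portmanteau-type approximation. Set $A_a=\{z:\mathsf{F}_{\tmax+1}(z,\sigma_{\tmax+1}^2)=a\}$, a deterministic set since $\sigma_{\tmax+1}^2$ is deterministic, and split
\[
\zeta_N=\sum_{a\in\setO}\tfrac{1}{N}\sum_{\ell=1}^{N}\mathsf{1}(s_\ell=a)\,\mathsf{1}_{A_a^{c}}\!\left(z_\ell^{\tmax+1}\right).
\]
For each $a\in\setO$ and each $\delta>0$ I would sandwich the set indicator between bounded Lipschitz functions, $g^{-}_{a,\delta}\le\mathsf{1}_{A_a^{c}}\le g^{+}_{a,\delta}$, with $g^{+}_{a,\delta}-g^{-}_{a,\delta}$ supported inside the $\delta$-neighborhood of $\partial A_a$ (built from truncated distances to $A_a$ and $A_a^{c}$). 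Each map $(z,x,s)\mapsto\mathsf{1}(s=a)\,g^{\pm}_{a,\delta}(z)$ is bounded and Lipschitz in $z$ (and trivially under control in the finite variable $s$), hence pseudo-Lipschitz, so the display applies to it; taking $\limsup_{N}$ with $g^{+}_{a,\delta}$ and $\liminf_{N}$ with $g^{-}_{a,\delta}$ traps the $a$-th block between $\Exop[\mathsf{1}(S=a)\,g^{-}_{a,\delta}(X+\sigma_{\tmax+1}Z)]$ and $\Exop[\mathsf{1}(S=a)\,g^{+}_{a,\delta}(X+\sigma_{\tmax+1}Z)]$.

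Finally I would close the sandwich by letting $\delta\downarrow0$: because $\sigma_{\tmax+1}^2>0$ and $Z$ is Gaussian, the law of $X+\sigma_{\tmax+1}Z$ is absolutely continuous with respect to Lebesgue measure, so if $\partial A_a$ is Lebesgue-null then dominated convergence gives $\Exop[\mathsf{1}\{X+\sigma_{\tmax+1}Z\in(\partial A_a)^{\delta}\}]\to0$, and both bounds converge to $\Exop[\mathsf{1}(S=a)\,\mathsf{1}_{A_a^{c}}(X+\sigma_{\tmax+1}Z)]$; summing over the finite set $\setO$ then yields $\zeta_N\to\zeta_\infty$ almost surely. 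The hard part will be the Lebesgue-nullity of the decision boundaries $\partial A_a$ --- the only place a genuine assumption on $\mathsf{F}_{\tmax+1}$ is needed. In the situation that matters here $\mathsf{F}_{\tmax+1}$ is the MAP quantizer on $\setO$, so $\partial A_a$ is contained in the union of the level sets $\{z:p(a\mid z,\sigma_{\tmax+1}^2)=p(a'\mid z,\sigma_{\tmax+1}^2)\}$, $a'\neq a$, which are Lebesgue-null unless two posteriors coincide identically; this is precisely the regularity condition (ii) of \fref{def:hardsoft}, which is in force. A secondary technical item is to make rigorous the inclusion of the pair $(x_\ell,s_\ell)$ as the ``signal'' variable inside the Bayati--Montanari framework (routine, the pairs being i.i.d.\ from $p(s_\ell)p(x_\ell|s_\ell)$); beyond that the estimates are standard.
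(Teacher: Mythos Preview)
Your proposal is correct and takes essentially the same approach as the paper: both start from the Bayati--Montanari convergence for pseudo-Lipschitz test functions, recognize it as weak convergence of the relevant empirical law, and then pass to the non-Lipschitz indicator via a Portmanteau-type argument that requires the decision boundaries to be Lebesgue-null. The only differences are stylistic: the paper cites the Portmanteau lemma directly (\cite[Lem.~2.2]{van2000asymptotic}) rather than writing out the Lipschitz sandwich $g_{a,\delta}^{\pm}$ by hand, and it reaches the joint $(z_\ell^t,s_\ell)$ limit via the Markov chain $s_\ell\to x_\ell\to\bmy$ instead of embedding the i.i.d.\ pairs $(x_\ell,s_\ell)$ as the signal variable from the start --- your version of that last step is arguably cleaner.
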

\begin{proof}
The proof follows from \cite[Thm.~1]{bayatimontanari} and we briefly outline the main ideas. Reference \cite{bayatimontanari} established that for any Pseudo-Lipschitz function $\psi$ in the large-system limit we have 
\begin{align}\label{eq:Bayati}
\frac{1}{N} \sum_{\ell=1}^N \psi (z_\ell^t, x_\ell) \rightarrow \Exop\left[{ \psi (X+ \sigma_t Z, X)}\right].
\end{align}
Note that the expression on the left in \fref{eq:Bayati} is the expectation under the empirical distribution of the joint random variables $(z_\ell^t, x_\ell)$. Hence, we can say this equation corresponds to one of the forms of convergence in distribution as pointed out in \cite[Lem.~2.2]{van2000asymptotic}. Based on this lemma, \fref{eq:Bayati} suggests that the empirical distribution of $(z_\ell^t, x_\ell)$ also converges weakly to the distribution of $(X+ \sigma_t Z, X)$. 
Furthermore, since $s_\ell \rightarrow x_\ell \rightarrow \bmy$ forms a Markov chain, $z_\ell^t$ (which is a function of $\bmy$) is independent of $s_\ell$ given $x_\ell$; this implies that the empirical distribution of $(z_\ell^t, s_\ell)$ converges weakly to the distribution of $(X+ \sigma_t Z, S)$. Hence, based on the same Lemma 2.2 in \cite{van2000asymptotic} we can conclude that if $\mathcal{B}$ is a Borel measurable set, whose boundary has Lebesgure measure zero (and if $X+ \sigma_t Z$ is absolutely continuous with respect the Lebesgue measure) then
$\frac{1}{N} \sum_{\ell=1}^N  \mathsf{1} \left( (z_\ell^t,s_\ell) \in \mathcal{B} \right) \stackrel{d}{\rightarrow} \mathbb{P} \left( (X+ \sigma_t Z,S) \in \mathcal{B}\right).$
From this result, we conclude that \fref{eq:prob-convergence} holds. 
%
\end{proof}

Notice from \fref{lem:zeta_limit} that $\zeta_N$ is bounded. As a consequence, in the large-system limit, we have 
\begin{align}\label{eq:zeta_limit}
\Exop_{\bmy,\bH}\!\left[\zeta_{N}\right] \stackrel{\text{a.s}}{\to} \Exop_{\bmy,\bH}\!\left[\zeta_{\infty}\right].
\end{align}
Let us compute 
\begin{align}\nonumber
&\!\! \Exop_{\bmy,\bH}\left[ \zeta_{N} \right] 
 = \Exop_{\bmy,\bH}\!\left[ \frac{1}{N} \sum_{\ell=1}^N \mathsf{1}(\mathsf{F}_{\tmax+1}(z^{\tmax+1}_\ell,\sigma_{\tmax+1}^2) \!\neq\! s_\ell) \right] \\ 
& \quad= \mathbb{P}\left(\mathsf{F}_{\tmax+1}(z^{\tmax+1}_\ell,\sigma_{\tmax+1}^2) \neq s_\ell\right)\! ,\,  \ell=1,\ldots,N. \label{eq:zeta_Exop}
\end{align}
Here, the last equality holds 
because under the permutations of the entries in $\bms$, the distribution does not change and thus, $\mathbb{P} \left(\mathsf{F}(z_\ell^{\tmax+1},\sigma_{\tmax+1}^2) \neq s_\ell \right)$ does not depend on the index $\ell$. Hence, using \fref{eq:zeta_limit} and \fref{eq:zeta_Exop}, in the large system limit we have 
%
\begin{align}
&\mathbb{P} \left(\mathsf{F}_{\tmax+1}(z^{\tmax+1}_\ell,\sigma_{\tmax+1}^2) \neq s_\ell\right) \notag \\
& \!\!\stackrel{\text{a.s}}{\to} \Exop_{\bmy,\bH}\!\left[\zeta_{\infty}\right] \!= \!\mathbb{P} \left(\mathsf{F}_{\tmax+1}(X+\sigma_{\tmax+1} Z,\sigma_{\tmax+1}^2) \!\neq\! S\right)\!.
\end{align}
Hence, instead of minimizing $\mathbb{P} \left(\mathsf{F}(z^{\tmax+1}_\ell,\sigma_{\tmax+1}^2) \neq s_\ell\right)$ in  \fref{eq:Opt_criterion}, we can minimize $\mathbb{P} \left(\mathsf{F}(X+\sigma_{\tmax+1} Z,\sigma_{\tmax+1}^2) \neq S \right)$. 
Thus, the optimal choice of $\mathsf{F}$ at iteration ${\tmax+1}$ is the MAP estimator in \fref{eq:maptmax1}.


\subsection{Proof of \fref{lem:FMMSE}}\label{app:FMMSE}
We next show that satisfying \fref{eq:Opt_criterion}, requires the functions  $\mathsf{F}_t(z^t_\ell,\sigma_t^2)$, $t=1,\ldots,{\tmax}$, to be the MMSE estimators. 
Let us call the MAP estimator $\mathsf{F}_{\tmax+1}$ at iteration ${\tmax+1}$ from \fref{lem:FMAP} as  $\mathsf{F}_\sigma^{\text{MAP}}$.  Then, the following lemma holds.
\begin{lem}\label{lem:MAP_smallsigma}
	$\mathbb{P}\left(\mathsf{F}_\sigma^{\textnormal{MAP}}(x_\ell+\sigma Z,\sigma^2) \neq s_\ell\right)$ is a non-decreasing function in $\sigma$.
\end{lem}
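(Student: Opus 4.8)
The plan is to view $R(\sigma) \define \mathbb{P}\!\left(\mathsf{F}_\sigma^{\textnormal{MAP}}(X+\sigma Z,\sigma^2)\neq S\right)$ as the Bayes (minimum) error probability for guessing $S$ from the scalar observation $Q_\sigma = X + \sigma Z$, where $S\to X$ is the input-noise channel, $Z\sim\setC\setN(0,1)$ is independent of $(S,X)$, and $\sigma^2$ is carried along as the threshold argument; monotonicity then follows from a stochastic-degradation / data-processing argument: the observation at a larger noise level is a physically degraded version of the observation at a smaller noise level, and the Bayes error for $0$--$1$ loss cannot decrease under degradation. Recall that, by \fref{lem:FMAP}, $\mathsf{F}_\sigma^{\textnormal{MAP}}$ is exactly the rule $q\mapsto\argmax_{s\in\setO} p_{S | X+\sigma Z}(s | q)$, i.e., the Bayes-optimal estimator for that scalar channel.

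Concretely, I would fix $0\le\sigma_1<\sigma_2$ and introduce two independent $\setC\setN(0,1)$ variables $Z$ and $Z'$, both independent of $(S,X)$; set $Q_1 = X+\sigma_1 Z$ and $Q_2 = Q_1 + \sqrt{\sigma_2^2-\sigma_1^2}\,Z'$. Since a sum of independent circularly-symmetric complex Gaussians is again such a Gaussian, the term $\sigma_1 Z + \sqrt{\sigma_2^2-\sigma_1^2}\,Z'$ has the same law as $\sigma_2\tilde Z$ for a $\setC\setN(0,1)$ variable $\tilde Z$ independent of $(S,X)$; hence the joint law of $(S,Q_2)$ equals that of $(S, X+\sigma_2\tilde Z)$, and therefore $\mathbb{P}\!\left(\mathsf{F}_{\sigma_2}^{\textnormal{MAP}}(Q_2,\sigma_2^2)\neq S\right) = R(\sigma_2)$. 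Moreover $S\to Q_1\to Q_2$ is a Markov chain, and $Z'$ is independent of $(S,Q_1)$ because $Q_1$ is a function of $(X,Z)$ only.

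The conclusion then follows from optimality of the MAP rule among \emph{all} (possibly randomized) estimators based on $Q_1$: by \fref{lem:FMAP}, $\mathsf{F}_{\sigma_1}^{\textnormal{MAP}}(Q_1,\sigma_1^2)$ attains the smallest error probability $R(\sigma_1)$ over deterministic measurable functions of $Q_1$, and conditioning on auxiliary randomness independent of $(S,Q_1)$ shows that a randomized rule is a mixture of deterministic ones and hence also has error at least $R(\sigma_1)$. The estimator $\tilde S \define \mathsf{F}_{\sigma_2}^{\textnormal{MAP}}(Q_2,\sigma_2^2) = \mathsf{F}_{\sigma_2}^{\textnormal{MAP}}\!\big(Q_1+\sqrt{\sigma_2^2-\sigma_1^2}\,Z',\sigma_2^2\big)$ is precisely such a randomized rule driven by $Q_1$ and the fresh randomness $Z'$, so $R(\sigma_2) = \mathbb{P}(\tilde S\neq S)\ge R(\sigma_1)$, which is the claimed monotonicity.

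The step I expect to be the main obstacle is not conceptual but measure-theoretic housekeeping: one must ensure $\mathsf{F}_\sigma^{\textnormal{MAP}}$ is a genuine (measurable) selection of the $\argmax$ over the finite set $\setO$ and that the set of observations with a non-unique maximizer has probability zero --- which holds because for $\sigma>0$ the law of $X+\sigma Z$ is absolutely continuous (a Gaussian convolution), so the tie-breaking convention is immaterial, and the $\sigma=0$ endpoint is covered whenever the effective input prior $p(x)$ is itself continuous (the regime of interest, cf.\ \fref{lem:betalimits}). The distributional identity $\sigma_1 Z + \sqrt{\sigma_2^2-\sigma_1^2}\,Z' \distas \sigma_2\tilde Z$ together with the stated independence is routine. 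None of this affects the degradation argument, which carries the whole proof.
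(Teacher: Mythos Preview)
Your proposal is correct and uses essentially the same idea as the paper: construct the degraded observation $Q_2 = Q_1 + \sqrt{\sigma_2^2-\sigma_1^2}\,Z'$, note that $\mathsf{F}_{\sigma_2}^{\textnormal{MAP}}(Q_2,\sigma_2^2)$ is a randomized estimator of $S$ based on $Q_1$, and invoke Bayes optimality of $\mathsf{F}_{\sigma_1}^{\textnormal{MAP}}$. The only difference is packaging: the paper argues by contradiction and extracts a single deterministic value $\bar Z$ of the auxiliary randomness via the average argument, whereas you give the direct inequality $R(\sigma_2)\ge R(\sigma_1)$ by observing that a randomized rule is a mixture of deterministic ones; this is purely stylistic and your direct version is in fact a bit cleaner.
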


The proof follows by contradiction. In particular, we try to show that exists two quantities $\sigma_1\!<\!\sigma_2$ such that
	\begin{align}
	&\mathbb{P}\left(\mathsf{F}_{\sigma_1}^{\text{MAP}}(x_\ell+\sigma_1 Z,\sigma_1^2) \!\neq\! s_\ell\right) \notag \\
	&\quad \qquad >\mathbb{P}\left(\mathsf{F}_{\sigma_2}^{\text{MAP}}(x_\ell+\sigma_2 Z,\sigma_2^2) \!\neq\! s_\ell\right)\!.
	\end{align}
	Based on $x_\ell+\sigma_1 Z$, we consider the randomized estimator $\mathsf{F}_{\sigma_2}^{\text{MAP}}(x_\ell+\sigma_1 Z + \sqrt{\smash[b]{\sigma_2^2- \sigma_1^2}} \tilde{Z},\sigma_2^2)$, where $\tilde{Z} \sim \setC \setN(0,1)$ independent of $Z$.
	It is easy to see that since $\sigma_1 Z + \sqrt{\sigma_2^2- \sigma_1^2} \tilde{Z}$ is distributed according to $\setC \setN(0, \sigma_2^2)$, we have 
	\begin{align}
	&  \mathbb{P}\!\left( \mathsf{F}_{\sigma_2}^{\text{MAP}}(x_\ell+\sigma_1 Z + \sqrt{\smash[b]{\sigma_2^2- \sigma_1^2}} \tilde{Z},\sigma_2^2) \neq s_\ell\right) \notag \\
	 & \qquad \qquad = \mathbb{P}\left(\mathsf{F}_{\sigma_2}^{\text{MAP}}(x_\ell+\sigma_2 Z,\sigma_2^2) \neq s_\ell\right)\!. \label{eq:cont_assum}
	\end{align}
	Hence,
	\begin{align}
	& \mathbb{E}_{\tilde{Z}}\!\left[ \mathbb{P}\!\left( \!\left. \mathsf{F}_{\sigma_2}^{\text{MAP}}(x_\ell+\sigma_1 Z + \sqrt{\smash[b]{\sigma_2^2- \sigma_1^2}} \tilde{Z},\sigma_2^2) \neq s_\ell  \right\vert  {\tilde{Z}} \right)\right] \notag \\
	& \qquad \qquad =\mathbb{P}\left(\mathsf{F}_{\sigma_2}^{\text{MAP}}(x_\ell+\sigma_2 Z,\sigma_2^2) \neq s_\ell\right)\!.
	\end{align}
	Hence, there exists a value of $\tilde{Z}$, call it $\bar{Z}$, for which 
	\begin{align}
	& \mathbb{P} \left( \mathsf{F}_{\sigma_2}^{\text{MAP}}(x_\ell+\sigma_1 Z + \sqrt{\smash[b]{\sigma_2^2- \sigma_1^2}} \bar{Z},\sigma_2^2) \neq s_\ell \right) \notag \\
	& \qquad \qquad <\mathbb{P}\left(\mathsf{F}_{\sigma_2}^{\text{MAP}}(x_\ell+\sigma_2 Z,\sigma_2^2) \neq s_\ell \right)\!.
	\end{align}
	Note that this estimator is the non-randomized estimator
	of $x_\ell+ \sigma_1 Z$. Consequently, we have  
	\begin{align}
& \mathbb{P} \left( \mathsf{F}_{\sigma_2}^{\text{MAP}}(x_\ell+\sigma_1 Z + \sqrt{\smash[b]{\sigma_2^2- \sigma_1^2}} \tilde{\theta},\sigma_2^2) \neq s_\ell\right) \notag \\
& \qquad \qquad \leq \mathbb{P}\left(\mathsf{F}_{\sigma_2}^{\text{MAP}}(x_\ell+\sigma_2 \theta,\sigma_2^2) \neq s_\ell\right)\!,
	\end{align}
	which is in contradiction with \fref{eq:cont_assum}.

%
\fref{lem:MAP_smallsigma} shows that in order for $\mathsf{F}_\sigma^{\text{MAP}}$ to provide the smallest probability of error in \fref{eq:Opt_criterion}, the function sequence $\{\mathsf{F}_1,\ldots,\mathsf{F}_{{\tmax}}\}$ should lead to the minimum possible $\sigma_{\tmax+1}^2$. In \fref{lem:MMSE}, we prove that $\sigma^2_{\tmax+1}$ is minimal only if $\{\mathsf{F}_1,\ldots,\mathsf{F}_{{\tmax}}\}$ are the MMSE estimators.
We first provide \fref{lem:MMSE_Auxiliary}, which is required in the proof for \fref{lem:MMSE}. 
\begin{lem}\label{lem:MMSE_Auxiliary}
	$\inf \limits_{\mathsf{F}} \Exop_{X,Z}\!\left[\left| \mathsf{F}(X+\sigma Z,\sigma^2)-X \right|^2\right] $ is a nondecreasing function in $\sigma$.
\end{lem}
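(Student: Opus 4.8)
The plan is to recognize that $\inf_{\mathsf{F}}\Exop_{X,Z}\!\left[|\mathsf{F}(X+\sigma Z,\sigma^2)-X|^2\right]$ is simply the minimum mean-squared error of estimating $X$ from the noisy observation $Y_\sigma=X+\sigma Z$, and is attained by the conditional mean $\Exop[X\mid Y_\sigma]$. I would then prove monotonicity in $\sigma$ by a stochastic-degradation (data-processing) argument that parallels the proof of \fref{lem:MAP_smallsigma}: a larger $\sigma$ corresponds to a strictly noisier channel, and no estimator of a more-degraded observation can beat the conditional mean of the less-degraded one.

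Concretely, fix $0<\sigma_1<\sigma_2$. Since $Z\sim\setC\setN(0,1)$ and variances add for circularly-symmetric complex Gaussians, I can write $\sigma_2 Z \overset{d}{=} \sigma_1 Z+\sqrt{\sigma_2^2-\sigma_1^2}\,\tilde Z$ with $\tilde Z\sim\setC\setN(0,1)$ independent of $(X,Z)$; hence $Y_{\sigma_2}$ has the same joint law with $X$ as $Y_{\sigma_1}+\sqrt{\sigma_2^2-\sigma_1^2}\,\tilde Z$, i.e.\ it is a degraded version of $Y_{\sigma_1}$. Given any candidate $\mathsf{F}_2$ for the $\sigma_2$-problem, I would form the randomized estimator of $X$ based on $Y_{\sigma_1}$ defined by $g(Y_{\sigma_1},\tilde Z)=\mathsf{F}_2(Y_{\sigma_1}+\sqrt{\sigma_2^2-\sigma_1^2}\,\tilde Z,\sigma_2^2)$, and the distributional identity gives
\[
\Exop_{X,Z,\tilde Z}\!\left[|g(Y_{\sigma_1},\tilde Z)-X|^2\right] = \Exop_{X,Z}\!\left[|\mathsf{F}_2(X+\sigma_2 Z,\sigma_2^2)-X|^2\right].
\]

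Next I would invoke that randomization cannot beat the conditional mean. Since $\tilde Z$ is independent of $(X,Y_{\sigma_1})$, conditioning on $(Y_{\sigma_1},\tilde Z)$ fixes $g$ to a deterministic value $a$ while leaving the conditional law of $X$ equal to its law given $Y_{\sigma_1}$ alone, so $\Exop\!\left[|a-X|^2\mid Y_{\sigma_1},\tilde Z\right]=|a-\Exop[X\mid Y_{\sigma_1}]|^2+\Varop(X\mid Y_{\sigma_1})\ge\Varop(X\mid Y_{\sigma_1})$. Taking expectations,
\[
\Exop_{X,Z,\tilde Z}\!\left[|g(Y_{\sigma_1},\tilde Z)-X|^2\right]\ge \Exop_{Y_{\sigma_1}}\!\left[\Varop(X\mid Y_{\sigma_1})\right]=\inf_{\mathsf{F}}\Exop_{X,Z}\!\left[|\mathsf{F}(X+\sigma_1 Z,\sigma_1^2)-X|^2\right].
\]
Chaining the two displays and taking the infimum over $\mathsf{F}_2$ yields the claimed monotonicity.

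I do not expect a serious obstacle here; the proof is short. The only points needing care are (i) the variances-add identity for complex Gaussians, which makes the degradation step exact, and (ii) the measurability of the randomized estimator together with the elementary identity $\Exop[|a-X|^2]=|a-\Exop X|^2+\Varop(X)$ applied to the conditional law — both routine. Finiteness of $\Exop[|X|^2]$ is assumed throughout so that every expectation above is well defined. One could add a remark noting that the same degradation reasoning underlies \fref{lem:MAP_smallsigma}, so that the two proofs are structurally identical.
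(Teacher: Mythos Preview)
Your proposal is correct and follows essentially the same stochastic-degradation idea as the paper: write $Y_{\sigma_2}$ as $Y_{\sigma_1}$ plus independent Gaussian noise and transport the $\sigma_2$-estimator back to a $\sigma_1$-estimator. The only cosmetic differences are that the paper argues by contradiction and averages the transported estimator over $\tilde Z$ before invoking Jensen's inequality, whereas you give a direct proof and use the bias--variance decomposition to show the randomized estimator cannot beat the conditional mean; these are equivalent formulations of the same data-processing step.
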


The proof follows by contradiction. Suppose that the statement of \fref{lem:MMSE_Auxiliary} is not true. Then, there exists two quantities $\hat{\sigma}_1 < \hat{\sigma}_2$ such that 
	\begin{align}
	&\inf \limits_{\mathsf{F}} \Exop _{X,Z}\!\left[\left| \mathsf{F}(X+\hat{\sigma}_1 Z,\hat{\sigma}_1^2)-X \right|^2\right] \notag \\
	& \qquad \qquad > \inf \limits_{\mathsf{F}} \Exop _{X,Z}\! \left[ \left| \mathsf{F}(X+\hat{\sigma}_2 Z,\hat{\sigma}_2^2)-X \right|^2\right]\!. \label{eq:ContradictAssumption}
	\end{align}
	Now suppose that both infima in \fref{eq:ContradictAssumption} are achieved with $\mathsf{F}_{\hat{\sigma}_1}$ and $\mathsf{F}_{\hat{\sigma}_2}$, respectively. Then, we can construct a new estimator $\tilde{\mathsf{F}}_{\hat{\sigma}_1}$ for the variance $\hat{\sigma}_1$ as 
	\begin{align}
	& \tilde{\mathsf{F}}_{\hat{\sigma}_1}(X\!+\!\hat{\sigma}_1 Z,\hat{\sigma}_1^2) \notag \\
	& \qquad = \Exop_{\tilde{Z}} \left[ \!\left.\mathsf{F}_{\hat{\sigma}_2}(X\!+\!\hat{\sigma}_1 Z \!+\! \sqrt{\smash[b]{\hat{\sigma}_2^2\!-\!\hat{\sigma}_1^2}} \tilde{Z} ,\hat{\sigma}_2^2)  \right| Z\right]\!,
	\end{align}
	where $\tilde{Z} \sim \setC \setN(0,1)$. Hence, $\hat{\sigma}_1 Z + \sqrt{\hat{\sigma}_2^2-\hat{\sigma}_1^2} \tilde{Z} \sim \setC \setN(0,\hat{\sigma}_2^2)$. 
	We now prove that $\tilde{\mathsf{F}}_{\hat{\sigma}_1}$ has a lower risk than $\mathsf{F}_{\hat{\sigma}_1}$, which is in contradiction with $\mathsf{F}_{\hat{\sigma}_1}$ achieving infimum of the function $\Exop _{X,Z}\!\left[\left| \mathsf{F}(X+\hat{\sigma}_1 Z,\hat{\sigma}_1^2)-X \right|^2\right]$ for $\hat{\sigma}_1$, i.e.,
	\begin{align}
	&\Exop^{\star}\!\left[\left| \tilde{\mathsf{F}}_{\hat{\sigma}_1}(X+\hat{\sigma}_1 Z,\hat{\sigma}_1^2) - X \right|^2 \right]\\ 
	&=\Exop^\star\!\left[  \!\left. \Big\vert \Exop_{\tilde{Z}} \left[ \mathsf{F}_{\hat{\sigma}_2}(X+\hat{\sigma}_1 Z + \sqrt{\smash[b]{\hat{\sigma}_2^2-\hat{\sigma}_1^2}} \tilde{Z},\hat{\sigma}_2^2) \right\vert Z \right] - X \Big\vert^2\right] \notag  \\
	&\!\!\stackrel{(a)}{\leq}\! \Exop^\star \!\!\left[ \!\left. \Exop_{\tilde{Z}} \!\left[\left|  \mathsf{F}_{\hat{\sigma}_2}(X\!+\!\hat{\sigma}_1 Z \!+\! \sqrt{\smash[b]{\hat{\sigma}_2^2\!-\!\hat{\sigma}_1^2}} \tilde{Z},\hat{\sigma}_2^2 ) \! - \!X\! \right|^2\right|\! Z\right] \right] \\
	&=\!\Exop^\star\!\! \left[\!\left| \mathsf{F}_{\hat{\sigma}_2}\!(X\!+\!\hat{\sigma}_2 Z,\hat{\sigma}_2^2)\!-\!X \right|^2 \!\right] \\
	& \stackrel{(b)}{<} \!\!\Exop^\star\!\!\left[\!\left| \mathsf{F}_{\hat{\sigma}_1}\!(X\!+\!\hat{\sigma}_1 Z,\hat{\sigma}_1^2)\!-\!X \right|^2\!\right]\!\!,
	\end{align}
	where $\Exop^\star[\cdot]$ is the expectation over the random variables $X$ and~$Z$. Here, the two inequalities $(a)$ and $(b)$ come from Jensen's inequality and assumption \fref{eq:ContradictAssumption}, respectively. 

%
\begin{lem}\label{lem:MMSE}
	The sequence of functions $\{\mathsf{F}_1,\ldots,\mathsf{F}_{{\tmax}}\}$ must be the MMSE estimators to lead to the minimum $\sigma_{{\tmax+1}}^2$.
\end{lem}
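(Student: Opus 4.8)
The plan is to run a backward induction on the iteration index, built on two facts. \emph{Fact (i):} at each iteration $t$, once the effective noise variance $\sigma_t^2$ produced by the first $t-1$ steps is fixed, the one-step contribution $\Exop_{X,Z}\big[|\mathsf{F}_t(X+\sigma_t Z,\sigma_t^2)-X|^2\big]$ is minimized — and, up to almost-sure equality, \emph{uniquely} so — by the conditional-mean estimator $\mathsf{F}_t^\star(z,\sigma_t^2)=\Exop_X[X\mid X+\sigma_t Z=z]$; this is just the $L^2$-projection property of conditional expectation, and for the priors under consideration $\mathsf{F}_t^\star$ is pseudo-Lipschitz, so it is an admissible choice to which \fref{thm:CSE} applies. \emph{Fact (ii):} the resulting minimal value $\mathrm{mmse}(\sigma_t^2):=\inf_{\mathsf{F}}\Exop_{X,Z}\big[|\mathsf{F}(X+\sigma_t Z,\sigma_t^2)-X|^2\big]$ is nondecreasing in $\sigma_t^2$ by \fref{lem:MMSE_Auxiliary}.

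First I would recall from \fref{thm:CSE} that an arbitrary admissible sequence $\{\mathsf{F}_1,\dots,\mathsf{F}_{\tmax}\}$ induces the recursion $\sigma_{t+1}^2=\No+\beta\,\Exop_{X,Z}\big[|\mathsf{F}_t(X+\sigma_t Z,\sigma_t^2)-X|^2\big]$ with the common initialization $\sigma_1^2=\No+\beta\Varop_X[X]$, which does not depend on the choice of functions. Writing $\sigma_t^{2,\star}$ for the variances obtained when every $\mathsf{F}_t$ equals $\mathsf{F}_t^\star$, I would then prove by induction on $t$ that $\sigma_t^2\ge\sigma_t^{2,\star}$ for every admissible sequence: the base case $t=1$ is an equality, and for the inductive step
\begin{align*}
\sigma_{t+1}^2 &= \No+\beta\,\Exop_{X,Z}\big[|\mathsf{F}_t(X+\sigma_t Z,\sigma_t^2)-X|^2\big] \\
&\ge \No+\beta\,\mathrm{mmse}(\sigma_t^2) \\
&\ge \No+\beta\,\mathrm{mmse}(\sigma_t^{2,\star}) = \sigma_{t+1}^{2,\star},
\end{align*}
where the first inequality is Fact (i) and the second combines Fact (ii) with the inductive hypothesis $\sigma_t^2\ge\sigma_t^{2,\star}$. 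Taking $t=\tmax$ shows $\sigma_{\tmax+1}^2\ge\sigma_{\tmax+1}^{2,\star}$, and equality is attained by the MMSE sequence, so the MMSE estimators minimize $\sigma_{\tmax+1}^2$.

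For the ``must be'' part I would trace the chain of inequalities backward: if $\sigma_{\tmax+1}^2=\sigma_{\tmax+1}^{2,\star}$, then every inequality above is forced to be an equality for each $t$; equality in the first one forces, by the uniqueness in Fact (i), $\mathsf{F}_t=\mathsf{F}_t^\star$ almost surely, and — provided $\mathrm{mmse}(\cdot)$ is in fact \emph{strictly} increasing for the (non-degenerate) prior $X$ — equality in the second forces $\sigma_t^2=\sigma_t^{2,\star}$, so the induction closes down to $t=1$. The hard part will be exactly this last point: \fref{lem:MMSE_Auxiliary} only asserts that $\mathrm{mmse}$ is nondecreasing, whereas the necessity direction needs strict monotonicity to exclude a sub-optimal early-iteration function whose extra MSE does not actually raise any subsequent MMSE value; establishing strict monotonicity of the MMSE curve (for instance through the I--MMSE identity, or by strengthening the Jensen-based contradiction argument used for \fref{lem:MMSE_Auxiliary}) is the delicate step, and one must also verify that the conditional-mean estimator stays inside the pseudo-Lipschitz class so that the state-evolution description of \fref{thm:CSE} remains valid at every iteration.
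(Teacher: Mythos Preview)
Your argument is correct and mirrors the paper's proof: both run an induction on $t$ using the two ingredients you call Fact~(i) (the conditional mean minimizes the one-step MSE) and Fact~(ii) (\fref{lem:MMSE_Auxiliary}), arriving at the same chain of inequalities $\sigma_{t+1}^2\ge\No+\beta\,\mathrm{mmse}(\sigma_t^2)\ge\No+\beta\,\mathrm{mmse}(\sigma_t^{2,\star})=\sigma_{t+1}^{2,\star}$. Your forward induction is slightly cleaner than the paper's induction-plus-contradiction framing, and you are more explicit than the paper about the gap in the necessity direction---the paper simply adds the caveat ``here we have assumed that at every stage, the MMSE estimator is unique'' without addressing the strict-monotonicity issue you flagged.
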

	The proof follows by induction. Suppose that the functions $\mathsf{F}_1,\ldots,\mathsf{F}_{t-1}$ are MMSE estimators to minimize $\sigma_t^2$. Then, we prove by contradiction that to minimize $\sigma_{t+1}^2$, all functions $\mathsf{F}_1,\ldots,\mathsf{F}_{t}$ must be MMSE estimators. 
	Now, suppose that $\mathsf{F}^*_1,\ldots,\mathsf{F}^*_{t}$ are the optimal functions that lead to the minimum effective noise variance that we call $\sigma_{t+1}^{*2}$. And assume that at least one of these functions is not an MMSE estimator. Then, we prove that if $\bar{\mathsf{F}}_1,\ldots,\bar{\mathsf{F}}_{t}$ are all MMSE estimators, they generate a lower variance $\bar{\sigma}_{t+1}^2$. Let us compute $\bar{\sigma}_{t+1}^2$ from the c-SE framework in \fref{thm:CSE}:
	\begin{align}	
	& \bar{\sigma}_{t+1}^2(\bar{\mathsf{F}}_1,\ldots,\bar{\mathsf{F}}_{t}) \notag \\	
	&\quad \stackrel{(a)}{=}\! \No \!+\! \beta \Exop _{X,Z}\!\left[\left| \bar{\mathsf{F}}_t (X\!+\!\bar{\sigma}_t Z,\bar{\sigma}_t^2) \!-\!X\right|^2\right] \\
	&\quad\stackrel{(b)}{=}\! \No \!+\! \beta \inf\limits_{\mathsf{F}_t} \Exop _{X,Z}\!\left[\left| \mathsf{F}_t (X\!+\!\bar{\sigma}_t Z,\bar{\sigma}_t^2) \!-\!X\right|^2\right] \\
	&\quad\stackrel{(c)}{\leq}\! \No \!+\! \beta \inf\limits_{\mathsf{F}_t}  \Exop _{X,Z}\!\left[ \left| \mathsf{F}_t (X\!+\!\sigma^*_t Z,\sigma^{*2}_t) \!-\!X\right|^2\right]\\
	&\quad\leq \No \!+\! \beta  \Exop _{X,Z}\!\left[\left| \mathsf{F}^*_t (X\!+\!\sigma^*_t Z,\sigma^{*2}_t) \!-\!X\right|^2\right]\\
	&\quad\stackrel{(d)}{=}  \sigma_{t+1}^{*2}(\mathsf{F}^*_1,...,\mathsf{F}^*_{t}).\label{eq:inequality2}
	\end{align}
	Here, $(a)$ and $(d)$ follow from cSE, $(b)$ follows from $\bar{\mathsf{F}}_{t}$ being an MMSE estimator. Lastly, $(c)$ follows from \fref{lem:MMSE_Auxiliary} and the base case  of induction, i.e., $\bar{\sigma}_{t}^2(\bar{\mathsf{F}}_1,...,\bar{\mathsf{F}}_{t-1}) < \sigma_{t}^{*2}(\mathsf{F}^*_1,...,\mathsf{F}^*_{t-1})$.
	By inspecting inequality  \fref{eq:inequality2}, we see that it is in contradiction with the optimality assumption of $\mathsf{F}^*_1,...,\mathsf{F}^*_{t}$ unless $\mathsf{F}^*_{i}=\bar{\mathsf{F}}_{i}$ for $i=1,...,t$. Note that here we have assumed that at every stage, the MMSE estimator is unique. 
	Because otherwise, there can be another set of MMSE estimators $\tilde{\mathsf{F}}_1,...,\tilde{\mathsf{F}}_{t}$ which generates a lower $\tilde{\sigma}_{t+1}^2$. Thus, this lemma proves that if $\mathsf{F}_1,...,\mathsf{F}_{\tmax}$ are a unique set of MMSE estimators, then they generate the minimum $\sigma_{\tmax+1}^2$ by letting $\tmax \to \infty$.
	%

%
From Lemmas \ref{lem:MAP_smallsigma} and \ref{lem:MMSE}, we conclude that to satisfy \fref{eq:Opt_criterion}, the functions $\mathsf{F}_t(z^t_\ell,\sigma_t^2)$, $t=1,\ldots,{\tmax}$, must be the set of MMSE estimators, i.e.,  $\mathsf{F}_t(z^t_\ell,\sigma_t^2)=\Exop_{X | X+\sigma_t Z}[x_\ell | z_\ell^t]$, which is equivalent to the message mean  \fref{eq:Ffunc} for $t=1,\ldots,{\tmax}$ in Step~1 of AMPI. Note that for optimality to hold, we need the MMSE estimators to be unique. If the solution to the fixed-point equation of \fref{eq:FPsimplified}  is unique, then this guarantees uniqueness of the MMSE estimators.

\section{Proof of \fref{lem:betalimits}}\label{app:betalimits}

Starting with \fref{def:betaN0} of $\betamax$, assume that the minimum in this equation is achieved by $\sigma^2=\bar{\sigma}^2$. Thus,
\begin{align}\label{eq:betamaxgeq}
\betamax= \textstyle \!\left(\frac{\Psi(\bar{\sigma}^2,\bar{\sigma}^2)}{\bar{\sigma}^2}\right)^{-1} \geq 1.
\end{align}
Here, the inequality comes from \cite[Prop.~4]{GWSS2011} which provides an upper bound for the MSE function $\Psi(\sigma^2,\sigma^2)$ as follows:
	$\Psi(\sigma^2,\sigma^2) \leq \sigma^2, \quad \forall \sigma^2 \geq 0.$
Recall from \fref{sec:path-AMPI} that AMPI decouples the system into a set of $N$ parallel and independent AWGN channels $z_\ell=x_\ell+\sigma Z$ with $Z\sim\setN(0,1)$ and $\sigma^2_t$ being the effective noise variance computed using state evolution equations in \fref{sec:SE}. Hence, using  \cite[Thm.~12]{WV2010} for each AWGN channels, we conclude that if the prior signal distribution $p(x_\ell)$ is continuous and bounded, then the MMSE dimension $\bD$ as defined below will have the value of 1, i.e., 
$\bD(x_\ell,Z)=\lim\limits_{\sigma^2 \to 0}\frac{\Psi(\sigma^2,\sigma^2)}{\sigma^2}=1.$
Using this equation along with the definition of $\betamax$ in \fref{def:betaN0}, we obtain
\begin{align}\label{eq:betamaxleq}
\betamax\!=\!{\min \limits_{\sigma^2>0}\!\left\{\!\left(\!\frac{\Psi(\!\sigma^2,\sigma^2)}{\sigma^2}\!\right)^{\!\!\!-1}\!\right\}\!\!} \leq\! \left( \lim\limits_{\sigma^2 \to 0}\!\!\frac{\Psi(\sigma^2,\sigma^2)}{\sigma^2}\right)^{\!\!\!-1}\!\!\!\!\!=\!1.
\end{align}
From \fref{eq:betamaxgeq} and \fref{eq:betamaxleq}, we have $\betamax=1$.
Additionally by \cite[Lem.~4]{JGMS2015conf}, $\betamin \leq \betamax=1$ which completes the proof.

\section{Proof of \fref{thm:IOfixedD}}\label{app:IOfixedD}

To simplify notation, we denote the PDF of all distributions by $p$. Now to characterize the error probability of IO data detector, we start with the hard-soft assumption
\begin{align}\label{eq:hs}
\mathbb{P}(\hat{s}_\ell^\text{IO} \neq s_\ell) = \mathbb{P} (D(\mathbb{E} (s_\ell | \mathbf{y} , \mathbf{H})) \neq s_\ell).
\end{align}
Based on this assumption,  we have to characterize the joint distribution of $(s_\ell, \mathbb{E} (s_\ell | \mathbf{y, H}))$. Note that in \cite{GV2005} the limiting distribution of $(x_\ell, \mathbb{E} (x_\ell | \mathbf{y, H}))$ has been characterized.  We will use this limiting distribution  to study $(s_\ell, \mathbb{E} (s_\ell | \mathbf{y, H}))$. 

From \fref{eq:sysmodel}, we have that $s_\ell \rightarrow x_\ell \rightarrow \mathbf{y}$ is a Markov chain. This implies that the random variable $q_\ell = \mathbb{E} (s_\ell | \mathbf{y, H})$ which is a function of $\mathbf{y}$ and $\mathbf{H}$ is independent of $s_\ell$ given $x_\ell$. Hence, instead of $(s_\ell, \mathbb{E} (s_\ell | \mathbf{y, H}))$, we can characterize the limiting distribution of $(s_\ell, x_\ell, \mathbb{E} (s_\ell | \mathbf{y,H}))$ which can be written as:
\begin{align}
p(s_\ell, x_\ell, \mathbb{E} (s_\ell | \mathbf{y,H})) & =p(s_\ell, x_\ell)p(\mathbb{E} (s_\ell | \mathbf{y,H})|x_\ell,s_\ell) \\
&=p(s_\ell, x_\ell)p(\mathbb{E} (s_\ell | \mathbf{y,H})|x_\ell).
\end{align}
Since the joint distribution $p(s_\ell, x_\ell)$ is known, characterizing the distribution of $(s_\ell, x_\ell, \mathbb{E} (s_\ell | \bmy, \bH))$ reduces to characterizing the distribution of $(\mathbb{E} (s_\ell | \mathbf{y,H})|x_\ell)$ (or equivalently $(x_\ell, \mathbb{E} (s_\ell | \mathbf{y,H}))$). 
Let us compute $\mathbb{E} (s_\ell | \bmy,\bH)$, which is 
\begin{align}
\mathbb{E} (s_\ell | \bmy,\bH) & \!=\!\! \int\! s_\ell p(s_\ell | \mathbf{y}, \mathbf{H}) \dd s_\ell \\
& \!=\!\! \int\! \mathbb{E}(s_\ell | x_\ell)  p(x_\ell | \bmy, \bH) \dd x_\ell.
\end{align}
Define $L(x_\ell) \triangleq \mathbb{E} (s_\ell | x_\ell)$. Thus, our original problem of characterizing the limiting distribution of $(s_\ell, \mathbb{E} (s_\ell | \mathbf{y, H}))$ is simplified to characterizing the limiting distribution of $(x_\ell, \mathbb{E} (L(x_\ell)| \mathbf{y,H}))$. This latter problem can be solved by the replica method as explained in \cite{GV2005}. Assuming that the assumptions underlying the replica symmetry in \cite{GV2005} are correct, 
we can argue from claim 1 in this paper that
\begin{align}\label{eq:replica}
(x_\ell ,  \mathbb{E} (L(x_\ell)| \mathbf{y,H}) \overset{d}{\rightarrow} (X, \mathbb{E} (L(X) | X+ \tilde{\sigma}Z)),
\end{align}
where $S \sim p(s_\ell)$, $X | S\sim p(x_\ell | s_\ell )$, $Z \sim N(0,1)$ is independent of $S$ and $X$, and $\tilde{\sigma}$ satisfies the fixed point equation 
\begin{eqnarray}\label{eq:fpreplica}
\tilde{\sigma}^2 = N_0+ \beta \Psi(\tilde{\sigma}^2).
\end{eqnarray}
Note that 
$
\mathbb{E} (L(X) | X+ \tilde{\sigma}Z) = \mathbb{E} (\mathbb{E} (S | X) | X+ \tilde{\sigma}Z) = \mathbb{E} (S | X+ \tilde{\sigma}Z)
$.
In other words, \fref{eq:replica} can be written as
\begin{align}
(x_\ell ,  \mathbb{E} (L(x_\ell)| \mathbf{y,H}) \overset{d}{\rightarrow} (X, \mathbb{E} (S | X+ \tilde{\sigma}Z)).
\end{align}
Next, we use this result to characterize the joint limiting distribution of $(s_\ell, x_\ell, \mathbb{E} (s_\ell | \mathbf{y,H}))$. If we define $q_\ell = \mathbb{E} (s_\ell | \mathbf{y, H})$ and $Q= X+ \tilde{\sigma}Z$, then for every $s,x,q \in \mathbb{R}$ we have
$
p_{q_\ell, x_\ell}(q, x) \rightarrow p_{Q,X}(q, x),
$
and, furthermore,
\begin{align}
\textstyle f_{\!s_\ell, x_\ell, q_\ell}   \!(s, x, q) & \!=\!p_{s_\ell | x_\ell}  \!(s|x) p_{q_\ell, x_\ell} \!(q,\!x) \\
& \!=\!  p_{S | X} \!(s |x) p_{q_\ell, x_\ell} \!(q,\!x),
\end{align}
which will converge to $p_{S | X} (s|x) p_{Q,X} (q,x)$. 
Consequently, $(s_\ell, x_\ell, \mathbb{E} (s_\ell | \mathbf{y,H}))$ converges to $(S, X, \mathbb{E} (S | X+ \tilde{\sigma} Z))$ in distribution, which along with markov chain $s_\ell \rightarrow x_\ell \rightarrow \mathbb{E} (s_\ell | \mathbf{y,H})$ leads to the result
$(s_\ell,q_\ell) \!\overset{d}{\rightarrow} \!(S,Q),$
or equivalently, 
\begin{align}\label{eq:QgivenS}
p_{q_\ell | s_\ell} (q|s) \overset{d}{\rightarrow} p_{Q| S} (q |s).
\end{align}
Next, we will use this result to characterize the error probability of IO data detector $\mathbb{P}(\hat{s}_\ell^\IO \neq s_\ell)$.
To simplify the rest of the proof we make several assumptions that are correct for systems MIMO. Suppose $s_\ell \in \setO$ and that the cardinality of this set is finite.
From \fref{eq:hs},  the IO error probability can be written as $ \mathbb{P} (D(q_\ell) \neq s_\ell)$ which converges as follows for a given $s_\ell=s$:
\begin{align}
\mathbb{P}(D(q_\ell) \neq s_\ell  |  s_\ell = s) & = 1- \mathbb{P} (q_\ell \in D^{-1} (s)  |  s_\ell =s) \\
& \rightarrow 1- \mathbb{P} (Q \in D^{-1} (s)  |  S = s).   
\end{align}
The last claim is a consequence of \cite[Lem.~2.2]{van2000asymptotic} which connects convergence in distribution of~\fref{eq:QgivenS} to the convergence in probability above. This relation holds due to the fact that the boundary of $D^{-1}$ has Lebesgue measure zero. Averaging over all values of $s_\ell \in \setO$, we obtain
\begin{align}
& \mathbb{P}(D(q_\ell) \neq s_\ell) = \sum_{s\in \setO} \mathbb{P}(D(q_\ell) \neq s_\ell  |  s_\ell = s) p(s_\ell=s) \\
&\!\rightarrow\! \sum_{s\in \setO} \mathbb{P}(D(Q) \neq S  |  S = s) p(S=s) \! = \! \mathbb{P} (D(Q) \neq S). 
\end{align}
Hence, we have $\mathbb{P}(\hat{s}_\ell^\IO \neq s_\ell)\rightarrow \mathbb{P} (D(Q) \neq S)$.

\section{Proof of Theorem \ref{thm:IOptimality}}\label{app:IOptimality}
Throughout this section, we assume that the random variables $S \sim p(s_\ell), X|S \sim p(x_\ell|s_\ell)$ and $Z\sim\setC\setN(0,1)$ are independent of $X$ and $S$. We start with the following lemma.
\begin{lem}
$\mathbb{P} (S \neq D(X+ \sigma Z) )$ is continuous in $\sigma$.
\end{lem}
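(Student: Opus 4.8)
The plan is to reduce the statement to weak convergence of the conditional law of $X + \sigma Z$ and then invoke the same Portmanteau-type argument already used above [cf.\ \cite[Lem.~2.2]{van2000asymptotic}]. Since $\setO$ is finite, I would first write
\begin{align}
\mathbb{P}(S \neq D(X + \sigma Z)) = \sum_{s \in \setO} p(S = s)\, \mathbb{P}\big(X + \sigma Z \notin D^{-1}(s) \mid S = s\big),
\end{align}
so that it suffices to show that for each fixed $s \in \setO$ the map $\sigma \mapsto \mathbb{P}\big(X + \sigma Z \in D^{-1}(s) \mid S = s\big)$ is continuous on $(0,\infty)$.

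Fix $s$ and $\sigma_0 > 0$. Conditionally on $S = s$, as $\sigma \to \sigma_0$ we have $X + \sigma Z \to X + \sigma_0 Z$ surely, since $|(X + \sigma Z) - (X + \sigma_0 Z)| = |\sigma - \sigma_0|\,|Z|$, hence also in distribution under the conditional law. The limiting variable $X + \sigma_0 Z$, conditioned on $S=s$, is absolutely continuous with respect to Lebesgue measure: its conditional law is the convolution of the law of $X \mid S = s$ with the non-degenerate Gaussian $\setC\setN(0,\sigma_0^2)$, and hence admits a bounded density. By \fref{def:hardsoft} the boundary $\partial D^{-1}(s)$ has Lebesgue measure zero, so it carries zero mass under this limiting conditional law; that is, $D^{-1}(s)$ is a continuity set of the law of $X+\sigma_0 Z$ given $S=s$. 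The Portmanteau theorem then yields $\mathbb{P}\big(X + \sigma Z \in D^{-1}(s) \mid S = s\big) \to \mathbb{P}\big(X + \sigma_0 Z \in D^{-1}(s) \mid S = s\big)$ as $\sigma \to \sigma_0$. Summing the finitely many terms and multiplying by $p(S=s)$ gives continuity of $\mathbb{P}(S \neq D(X + \sigma Z))$ at $\sigma_0$.

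A fully explicit alternative would avoid the Portmanteau step: write the conditional probability as $\Exop_{X \mid S=s}\big[\int_{D^{-1}(s)} \phi_\sigma(u - X)\,\dd u\big]$, with $\phi_\sigma$ the $\setC\setN(0,\sigma^2)$ density, and pass $\sigma \to \sigma_0$ by dominated convergence --- on any interval $\sigma \in [\sigma_0-\delta,\sigma_0+\delta]$ with $\sigma_0 - \delta > 0$ the family $\{\phi_\sigma(\cdot - x)\}$ is dominated by a fixed integrable Gaussian envelope, so the inner integral is continuous in $\sigma$, and a second application of dominated convergence (the inner integral being bounded by $1$) handles the outer average over $x$.

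The only subtle step will be the absolute continuity of the limiting law, which is exactly what turns the merely-Borel, null-boundary set $D^{-1}(s)$ into a genuine continuity set; this is also the sole place the hypothesis $\sigma_0 > 0$ is needed, and for $\sigma_0 = 0$ the argument breaks unless one additionally assumes $p(x_\ell \mid s_\ell)$ is absolutely continuous. This restriction is harmless, since in the proof of \fref{thm:IOptimality} the lemma is applied only at the positive fixed point $\tilde\sigma$ of \fref{eq:fpIOAMP}.
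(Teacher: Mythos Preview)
Your proposal is correct and follows essentially the same approach as the paper: condition on $S=s$, rewrite the conditional error as $\mathbb{P}(X+\sigma Z \notin D^{-1}(s)\mid S=s)$, and then argue continuity in $\sigma$. The paper merely says ``it is straightforward to write the probability in its integral form and confirm that it is continuous in $\sigma$,'' which is exactly your dominated-convergence alternative; your primary Portmanteau argument is a clean and slightly more explicit way of making the same point.
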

%
Note that $\mathbb{P} (S \neq D(X+ \sigma Z) ) = \sum_{s \in \setO}  \mathbb{P} (S \neq D(X+ \sigma Z) | S= s)p(S= s)$. Hence, if we prove that $\mathbb{P} (S \neq D(X+ \sigma Z) | S= s)$ is continuous then, so is $\mathbb{P} (S \neq D(X+ \sigma Z) ) $. Furthermore, $\mathbb{P} (S \neq D(X+ \sigma Z) | S=s) = \mathbb{P} (X+ \sigma Z \in D^{-1} (s) | S=s)$. It is straightforward to write the probability in its integral form and confirm that it is continuous in $\sigma$. 

Suppose that we run AMPI for $t$ iterations and then apply $D$ to $\mathbf{z}^t$ and $\sigma_{t}$ to obtain the signal estimate $\hat{s}_\ell^t$.
Then, according to Lemma \ref{lem:zeta_limit}, the asymptotic error probability of AMPI is
\begin{align}\label{eq:err_AMPI}
\mathbb{P} (s_\ell \neq \hat{s}^t_\ell ) = \mathbb{P} (S \neq D(X+ \sigma_t Z)).
\end{align}
Also, note that the effective noise variance $\sigma_{t}$ is given by the cSE recursion in \fref{eq:SErecursion}; i.e. for $t \to \infty$, $\sigma_{t}$ converges to the solution of AMPI's fixed-point equation as given in \fref{eq:FPsimplified}. Now, since the fixed-point equation of AMPI in~\fref{eq:FPsimplified} coincides with fixed-point equation of the IO data detector in \fref{eq:fpreplica}, we have $\sigma_t \rightarrow \tilde{\sigma}$. 
The rest of the proof is a simple continuity argument with two statements:
\begin{enumerate}
	\item Since $\mathbb{P} (S \neq D(X+ \sigma Z) )$ is a continuous function in $\sigma$, for every $\epsilon>0$ there exists $\Delta \sigma$ such that if $\bar{\sigma} \in (\tilde{\sigma}- \Delta \sigma, \tilde{\sigma}+ \Delta\sigma)$, then $\mathbb{P} (S \neq D(X+ \bar{\sigma} Z) ) < \mathbb{P} (S \neq D(X+ \tilde{\sigma} Z) ) + \epsilon$.
	\item Since $\sigma^t \rightarrow \tilde{\sigma}$ as $t \rightarrow \infty$, we know that there exists  a $t_0$ such that for $t > t_0$, $\sigma^t < \tilde{\sigma} + \Delta \sigma$. 
\end{enumerate}
By combining these two statements, we conclude that for every $\epsilon>0$, there exists  a $t_0$ such that for $t > t_0$
\begin{align}
& \mathbb{P} (s_\ell \neq \hat{s}^{t_0}_\ell) \stackrel{(a)}{=} \mathbb{P} (S \neq D(X+ \sigma_{t_0} Z)) \\
& \qquad < \mathbb{P} (S \neq D(X+ \tilde{\sigma} Z) ) + \epsilon \stackrel{(b)}{=} \mathbb{P} (s_\ell \neq \hat{s}^\IO_\ell) + \epsilon.
\end{align}
Here, $(a)$ and $(b)$ follow from \fref{eq:err_AMPI} and \fref{thm:IOptimality}, respectively. The proof is complete by averaging over all $\ell=1,\ldots,N$.

\balance

\bibliographystyle{IEEEtran}
\bibliography{VIPconfs-jrnls,publishers,VIP}

\balance

\clearpage

\onecolumn

\begin{center}
\Large Supplementary Derivations for  \\[0.2cm] ``Optimal Data Detection and Signal Estimation  in Systems with Input Noise''
\end{center}

 \section{Derivation of $\mathsf{F}$ and $\mathsf{G}$ in Compressed Sensing}\label{app:FG_CS}
 
 We start by deriving the PDF for  $\bmx=\bms+\bme$, where $s_\ell\sim p(s_\ell)$ and $e_\ell\sim\setN(0,\NT)$ for all $\ell=1,\ldots,N$. For $p(s_\ell)=\frac{\lambda}{2}\exp\left(-\lambda\abs{s_\ell}\right)$, $e_\ell\sim\setN(0,\NT)$, and with the relation $x_\ell=s_\ell+e_\ell$, we have
 \begin{align}
 p(x_\ell)
 = &\, \int_\reals\frac{1}{\sqrt{2\pi\NT}}\exp\left(-\frac{1}{2\NT}(s_\ell-x_\ell)^2\right)\frac{\lambda}{2}\exp\left(-\lambda\abs{s_\ell}\right)\dd s_\ell\\
 = &\, \frac{\lambda}{2}\exp\left(\lambda x_\ell+\frac{\lambda^2\NT}{2}\right)\int_{-\infty}^0\frac{1}{\sqrt{2\pi\NT}}\exp\left(-\frac{1}{2\NT}(x_\ell-(s_\ell+\lambda\NT))^2\right)\dd s_\ell \notag \\
 &+ \frac{\lambda}{2}\exp\left(-\lambda x_\ell+\frac{\lambda^2\NT}{2}\right)\int_0^\infty\frac{1}{\sqrt{2\pi\NT}}\exp\left(-\frac{1}{2\NT}(x_\ell-(s_\ell-\lambda\NT))^2\right)\dd s_\ell.
 \end{align}
 With the Q-function $Q(x)=\int_x^\infty\frac{1}{\sqrt{2\pi}}\exp\left(-\frac{t^2}{2}\right)\dd t$, the above integral ``simplifies'' to
 \begin{align}\label{eq:prior}
 p(x_\ell)
 &= \frac{\lambda}{2}\exp\left(\frac{\lambda^2\NT}{2}\right)
 \left(
 \exp(\lambda x_\ell)Q\left(\frac{x_\ell+\lambda\NT}{\sqrt{\NT}}\right)
 +\exp(-\lambda x_\ell)\left(1-Q\left(\frac{x_\ell-\lambda\NT}{\sqrt{\NT}}\right)\right)
 \right).
 \end{align}
 Note that as $\NT\rightarrow0$ ($\bmx \to \bms$), $Q\left(\frac{x_\ell}{\sqrt{\NT}}\right)\rightarrow\frac{1-\sign(x_\ell)}{2}$, we have that 
 \begin{align}
 \lim_{\NT\rightarrow0}p(x_\ell) &\rightarrow \frac{\lambda}{2}
 \left(
 \exp(\lambda x_\ell)\left(\frac{1-\sign(x_\ell)}{2}\right)
 +\exp(-\lambda x_\ell)\left(\frac{\sign(x_\ell)+1}{2}\right)\right)\\
 &=\frac{\lambda}{2}\exp\left(-\lambda\abs{x_\ell}\right),
 \end{align}
 which is the distribution of $s_\ell$ as expected.
 
 With the PDF of the new prior $p(\bmx)=\prod_\ell^N p(x_\ell)$ where $p(x_\ell)$ is given in \fref{eq:prior}, we now proceed to computing  the functions relevant to the cB-AMP algorithm given in \fref{alg:AMP}.

 We start by computing the posterior $p(x_{\ell}\vert z^t_{\ell},\sigma_t^2)$ in \fref{eq:Ffunc}, where we have $p(z^t_{\ell}\vert x_{\ell},\sigma_t^2)\sim\setC\setN(x_{\ell},\sigma_t^2)$: 
 \begin{align}\label{eq:posterior}
 p(x_{\ell}\vert z^t_{\ell},\sigma_t^2)
 &=\frac{p(z^t_{\ell}\vert x_\ell,\sigma_t^2)p(x_\ell)}{\int_\reals p(z^t_{\ell}\vert x_\ell,\sigma_t^2)p(x_\ell)\dd x_\ell}\\
 &= \frac{1}{\sqrt{2\pi\sigma_t^2}}\frac{\exp\left(-\frac{1}{2\sigma_t^2}(z^t_{\ell}-x_\ell)^2\right)p(x_\ell)}{p(z^t_{\ell})},
 \end{align}
 where $p(z^t_{\ell})$ is derived as follows.
 %
%
 \begin{align} 
p(z^t_{\ell})=&\int_\reals p(z^t_{\ell}\vert x_\ell,\sigma_t^2)p(x_\ell)\dd x_\ell\\
\stackrel{(a)}{=} &\, \frac{\lambda}{2}\exp\left(\frac{\lambda^2\NT}{2}\right)\nonumber
 \int_\reals\exp(\lambda x_\ell)Q\left(\frac{x_\ell+\lambda\NT}{\sqrt{\NT}}\right)
 \frac{1}{\sqrt{2\pi\sigma_t^2}}\exp\left(-\frac{1}{2\sigma_t^2}(x_\ell-z^t_{\ell})^2\right)\dd x_\ell\\ 
 &+ \frac{\lambda}{2}\exp\left(\frac{\lambda^2\NT}{2}\right)
 \int_\reals\exp(-\lambda x_\ell)\left(1-Q\left(\frac{x_\ell-\lambda\NT}{\sqrt{\NT}}\right)\right)
 \frac{1}{\sqrt{2\pi\sigma_t^2}}\exp\left(-\frac{1}{2\sigma_{t}^2}(x_\ell-z^t_{\ell})^2\right)\dd x_\ell\\\nonumber
\stackrel{(b)}{=} &\, \frac{\lambda}{2}\exp\left(\lambda z^t_{\ell}+\frac{\lambda^2(\NT+\sigma_t^2)}{2}\right)Q\left(\frac{z^t_{\ell}+\lambda(\NT+\sigma_t^2)}{\sqrt{\NT+\sigma_t^2}}\right)\\
 &+\frac{\lambda}{2}\exp\left(-\lambda z^t_{\ell}+\frac{\lambda^2(\NT+\sigma_t^2)}{2}\right)\left(1-Q\left(\frac{z^t_{\ell}-\lambda(\NT+\sigma_t^2)}{\sqrt{\NT+\sigma_t^2}}\right)\right),
 \end{align}
 here, (a) follows from replacing $p(x_\ell)$ from \fref{eq:prior} and $p(z^t_{\ell}\vert x_{\ell},\sigma_t^2)\sim\setC\setN(x_{\ell},\sigma_t^2)$ from its definition in \fref{alg:AMP}. (b) follows from a precomputed integral relation in \cite[Eq. 8.259]{gradshteyn2014table} as follows:
 \begin{align}\label{eq:simplification1}
 \int_\reals \frac{1}{\sqrt{2\pi\tau}}
 Q\left(b+\frac{x}{\sigma}\right)
 \exp\left(-\frac{1}{2\tau}x^2\right)\dd x = Q\left(\frac{b\sigma}{\sqrt{\sigma^2+\tau}}\right),
 \end{align}

 We now compute the posterior mean and variance functions $\mathsf{F}(z^t_{\ell},\sigma_t^2)$ and $\mathsf{G}(z^t_\ell,\sigma_t^2)$ as defined in \fref{eq:Ffunc} and \fref{eq:Gfunc}.

\subsubsection{Posterior Mean}
The posterior mean can be derived as follows:
 \begin{align} 
 \mathsf{F}(z^t_{\ell},\sigma_t^2) =&\, \int_\reals x_\ell p(x_\ell \vert z^t_{\ell},\sigma_t^2)\dd x_\ell\\\label{eq:F1}
 = &\, \frac{\lambda\exp\left(\frac{\lambda^2\NT}{2}\right)}{2p(z^t_{\ell})}
 \int_\reals \frac{x_{\ell}\exp(\lambda x_{\ell})}{\sqrt{2\pi\sigma_t^2}}\exp\left(-\frac{1}{2\sigma_t^2}(x_{\ell}-z^t_{\ell})^2\right)Q\left(\frac{x_{\ell}+\lambda\NT}{\sqrt{\NT}}\right)\dd x_\ell \\
 &+ \frac{\lambda\exp\left(\frac{\lambda^2\NT}{2}\right)}{2p(z^t_{\ell})}
 \int_\reals \frac{x_{\ell}\exp(-\lambda x_{\ell})}{\sqrt{2\pi\sigma_t^2}}\exp\left(-\frac{1}{2\sigma_t^2}(x_{\ell}-z^t_{\ell})^2\right)\left(1-Q\left(\frac{x_{\ell}-\lambda\NT}{\sqrt{\NT}}\right)\right)\dd x_{\ell}. \label{eq:F2}
 \end{align}
 We first simplify \fref{eq:F1} with $K=\frac{\lambda\exp\left(\frac{\lambda^2\NT}{2}\right)}{2p(z^t_{\ell})}$ which yields
 \begin{align}
 \fref{eq:F1}= K\exp\left(\lambda z^t_{\ell}+\frac{\lambda^2\sigma_t^2}{2}\right)\int_\reals \frac{x_\ell}{\sqrt{2\pi\sigma_t^2}}\exp\left(-\frac{1}{2\sigma_t^2}(x_\ell-(z^t_{\ell}+\lambda\sigma_t^2))^2\right)Q\left(\frac{x_\ell+\lambda\NT}{\sqrt{\NT}}\right)\dd x_\ell.
 \end{align}
 
 We start with the following simplification, which is obtained by integration by parts:
 \begin{align}\notag
 &-\int_\reals \frac{x-u}{\sigma_t^2}\frac{1}{\sqrt{2\pi\sigma_t^2}}\exp\left(-\frac{1}{2\sigma_t^2}(x-u)^2\right)Q\left(\frac{x-v}{\sqrt{\NT}}\right)\dd x \label{eq:simplification2}
 \\&\qquad\qquad =\frac{\sqrt{\NT}}{\sqrt{2\pi(\sigma_t^2+\NT)}}\exp\left(-\frac{1}{2(\sigma_t^2+\NT)}(u-v)^2)\right).
 \end{align}
 Therefore, we have
 \begin{align}
 &\int_\reals \frac{x_{\ell}}{\sqrt{2\pi\sigma_t^2}}\exp\left(-\frac{1}{2\sigma_t^2}(x_{\ell}-(z^t_{\ell}+\lambda\sigma_t^2))^2\right)Q\left(\frac{x_{\ell}+\lambda\NT}{\sqrt{\NT}}\right)\dd x_{\ell}  \\
 =&\,  (z^t_{\ell}+\lambda\sigma_t^2)\int_\reals\frac{1}{\sqrt{2\pi\sigma_t^2}}\exp\left(-\frac{1}{2\sigma_t^2}(x_{\ell}-(z^t_{\ell}+\lambda\sigma_t^2))^2\right)Q\left(\frac{x_{\ell}+\lambda\NT}{\sqrt{\NT}}\right)\dd x_{\ell} \notag \\
 &+ \sigma_t^2 \int_\reals \frac{x_{\ell}-(z^t_{\ell}+\lambda\sigma_t^2)}{\sigma_t^2}\frac{1}{\sqrt{2\pi\sigma_t^2}}\exp\left(-\frac{1}{2\sigma_t^2}(x_{\ell}-(z^t_{\ell}+\lambda\sigma_t^2))^2\right)Q\left(\frac{x_{\ell}+\lambda\NT}{\sqrt{\NT}}\right)\dd x_{\ell}\\
\stackrel{(a)}{=} & \, (z^t_{\ell}+\lambda\sigma_t^2)Q\left(\frac{z^t_{\ell}+\lambda(\NT+\sigma_t^2)}{\sqrt{\NT+\sigma_t^2}}\right) \notag \\
 &\stackrel{(b)}{-}\frac{\sigma_t^2\sqrt{\NT}}{\sqrt{2\pi(\sigma_t^2+\NT)}}\exp\left(-\frac{1}{2(\sigma_t^2+\NT)}(z^t_{\ell}+\lambda(\NT+\sigma_t^2))^2)\right),
 \end{align}
 where (a) follows from \fref{eq:simplification1} and (b) follows from \fref{eq:simplification2}. Similarly, we have \fref{eq:F2} as
 \begin{align}
 \fref{eq:F2}=K\exp\left(-\lambda z^t_{\ell}+\frac{\lambda^2\sigma_t^2}{2}\right)\int_\reals \frac{x_{\ell}}{\sqrt{2\pi\sigma_t^2}}\exp\left(-\frac{1}{2\sigma_t^2}(x_{\ell}-(z^t_{\ell}-\lambda\sigma_t^2))^2\right)\left(1-Q\left(\frac{x_{\ell}-\lambda\NT}{\sqrt{\NT}}\right)\right)\dd x_{\ell}.
 \end{align}
 This expression can be simplified to
 \begin{align}
 &\int_\reals \frac{x_{\ell}}{\sqrt{2\pi\sigma_t^2}}\exp\left(-\frac{1}{2\sigma_t^2}(x_{\ell}-(z^t_{\ell}-\lambda\sigma_t^2))^2\right)\left(1-Q\left(\frac{x_{\ell}-\lambda\NT}{\sqrt{\NT}}\right)\right)\dd x_{\ell} \notag \\
 = &\, z^t_{\ell}-\lambda\sigma_t^2-  \int_\reals \frac{x_{\ell}}{\sqrt{2\pi\sigma_t^2}}\exp\left(-\frac{1}{2\sigma_t^2}(x_{\ell}-(z^t_{\ell}-\lambda\sigma_t^2))^2\right)Q\left(\frac{x_{\ell}-\lambda\NT}{\sqrt{\NT}}\right)\dd x_{\ell}\\
 = &\,z^t_{\ell}-\lambda\sigma_t^2 - \sigma_t^2 \int_\reals \frac{x_{\ell}-(z^t_{\ell}-\lambda\sigma_t^2)}{\sigma_t^2}\frac{1}{\sqrt{2\pi\sigma_t^2}}\exp\left(-\frac{1}{2\sigma_t^2}(x_{\ell}-(z^t_{\ell}-\lambda\sigma_t^2))^2\right)Q\left(\frac{x_{\ell}-\lambda\NT}{\sqrt{\NT}}\right)\dd x_{\ell}\notag\\
 &-(z^t_{\ell}-\lambda\sigma_t^2)\int_\reals\frac{1}{\sqrt{2\pi\sigma_t^2}}\exp\left(-\frac{1}{2\sigma_t^2}(x_{\ell}-(z^t_{\ell}-\lambda\sigma_t^2))^2\right)Q\left(\frac{x_{\ell}-\lambda\NT}{\sqrt{\NT}}\right)\dd x_{\ell}\\
 = &\, z^t_{\ell}-\lambda\sigma_t^2 + \frac{\sigma_t^2\sqrt{\NT}}{\sqrt{2\pi(\sigma_t^2+\NT)}}\exp\left(-\frac{1}{2(\sigma_t^2+\NT)}(z^t_{\ell}-\lambda(\NT+\sigma_t^2))^2)\right)\notag \\
 &-(z^t_{\ell}-\lambda\sigma_t^2)Q\left(\frac{z^t_{\ell}-\lambda(\NT+\sigma_t^2)}{\sqrt{\NT+\sigma_t^2}}\right).
 \end{align}
 
 Therefore, with $K_1 = \frac{\lambda}{2p(z^t_{\ell})}
 \exp\left(\frac{\lambda^2(\NT+\sigma_t^2)}{2}\right)\exp\left(\lambda z^t_{\ell}\right)$ and $K_2=\frac{\lambda}{2p(z^t_{\ell})}
 \exp\left(\frac{\lambda^2(\NT+\sigma_t^2)}{2}\right)\exp\left(-\lambda z^t_{\ell}\right)$ we have
 \begin{align}
 \mathsf{F}(z^t_{\ell},\sigma_t^2)&=K_1(z^t_{\ell}+\lambda\sigma_t^2)Q\left(\frac{z^t_{\ell}+\lambda(\NT+\sigma_t^2)}{\sqrt{\NT+\sigma_t^2}}\right) \notag \\
 &-K_1\frac{\sigma_t^2\sqrt{\NT}}{\sqrt{2\pi(\sigma_t^2+\NT)}}\exp\left(-\frac{(z^t_{\ell}+\lambda(\NT+\sigma_t^2))^2}{2(\sigma_t^2+\NT)}\right) \notag \\
 &+K_2(z^t_{\ell}-\lambda\sigma_t^2) \notag \\
 &+K_2\frac{\sigma_t^2\sqrt{\NT}}{\sqrt{2\pi(\sigma_t^2+\NT)}}\exp\left(-\frac{(z^t_{\ell}-\lambda(\NT+\sigma_t^2))^2}{2(\sigma_t^2+\NT)}\right) \notag \\
 &-K_2(z^t_{\ell}-\lambda\sigma_t^2)Q\left(\frac{z^t_{\ell}-\lambda(\NT+\sigma_t^2)}{\sqrt{\NT+\sigma_t^2}}\right),
 \end{align}
 which can be simplified to 
 \begin{align}
 \mathsf{F}(z^t_{\ell},\sigma_t^2)=K_1(z^t_{\ell}+\lambda\sigma_t^2)Q\left(\frac{z^t_{\ell}+\lambda(\NT+\sigma_t^2)}{\sqrt{\NT+\sigma_t^2}}\right)+K_2(z^t_{\ell}-\lambda\sigma_t^2)Q\left(\frac{-z^t_{\ell}+\lambda(\NT+\sigma_t^2)}{\sqrt{\NT+\sigma_t^2}}\right).
 \end{align}
Since this function contains a multiplication of the  Q-function with exponentials, it is numerically unstable to compute. In order to compute this quantity one can use the more stable function $\mathrm{erfcx}(x)=x^2 \mathrm{erfc}(x)$, where $\mathrm{erfc}(x)$ is the error function that satisfies
 \begin{align}
 \mathrm{erfc}(x)=2Q(\sqrt{2}x)
 \end{align}
 By replacing the values of $K_1$ and $K_2$ and Q-function, we obtain
 \begin{align}
 \mathsf{F}(z^t_{\ell},\sigma_t^2)=z^t_{\ell}+\lambda \sigma_t^2 \eta,
 \end{align}
 where 
 \begin{align}
 \eta&=\frac{\mathrm{erfcx}(\alpha)-\mathrm{erfcx}(\beta)}{\mathrm{erfcx}(\alpha)+\mathrm{erfcx}(\beta)}\\
 \alpha&=\frac{z^t_{\ell}+\lambda(\NT+\sigma_t^2)}{\sqrt{2(\NT+\sigma_t^2)}}\\
 \beta&=\frac{-z^t_{\ell}+\lambda(\NT+\sigma_t^2)}{\sqrt{2(\NT+\sigma_t^2)}}.
 \end{align}
 
 \subsubsection{Posterior Variance}
The posterior variance can be computed as follows:
 \begin{align}
 \mathsf{G}(z^t_{\ell},\sigma_t^2)&=\int_\reals x_{\ell}^2 p(x_{\ell}\vert z^t_{\ell})\dd x_{\ell} - \mathsf{F}^2(z^t_{\ell},\sigma_t^2).
 \end{align}
Here, we have
 \begin{align}
 \int_\reals x_{\ell}^2 p(x_{\ell}\vert z^t_{\ell})\dd x_{\ell} 
 = &\,
 \frac{\lambda\exp\left(\frac{\lambda^2\NT}{2}\right)}{2p(z^t_{\ell})}
 \int_\reals \frac{x_{\ell}^2\exp(\lambda x_{\ell})}{\sqrt{2\pi\sigma_t^2}}\exp\left(-\frac{1}{2\sigma_t^2}(x_{\ell}-z^t_{\ell})^2\right)Q\left(\frac{x_{\ell}+\lambda\NT}{\sqrt{\NT}}\right)\dd x_{\ell}\label{eq:G1}\\\label{eq:G2}
 &+ \frac{\lambda\exp\left(\frac{\lambda^2\NT}{2}\right)}{2p(z^t_{\ell})}
 \int_\reals \frac{x_{\ell}^2\exp(-\lambda x_{\ell})}{\sqrt{2\pi\sigma_t^2}}\exp\left(-\frac{1}{2\sigma_t^2}(x_{\ell}-z^t_{\ell})^2\right)Q\left(\frac{-x_{\ell}+\lambda\NT}{\sqrt{\NT}}\right)\dd x_{\ell}.
 \end{align}
 Let us first compute \fref{eq:G1}, which yields
 \begin{align}
 \fref{eq:G1}=&\, K_1 \int_\reals \frac{x_{\ell}^2}{\sqrt{2\pi\sigma_t^2}} \exp\left(-\frac{ (x_{\ell}-(z^t_{\ell}+\lambda \sigma_t^2))^2}{2\sigma_t^2}\right) Q\left( \frac{x_{\ell}+\lambda \NT}{\sqrt{\NT}}\right) \dd x_{\ell}\\
 =&\,K_1 \int_{\reals} \frac{(t+(z^t_{\ell}+\lambda \sigma_t^2))^2}{\sqrt{2\pi\sigma_t^2}} \exp \left(-\frac{t^2}{2\sigma_t^2}\right) Q\left( \frac{t+z^t_{\ell}+\lambda(\sigma_t^2+\NT)}{\sqrt{\NT}}\right) \dd t\\\label{eq:G11}
 = &\,K_1 \int_{\reals} \frac{t^2}{\sqrt{2\pi\sigma_t^2}} \exp \left(-\frac{t^2}{2\sigma_t^2}\right) Q\left( \frac{t+z^t_{\ell}+\lambda(\sigma_t^2+\NT)}{\sqrt{\NT}}\right) \dd t\\\label{eq:G12}
 &+K_1 (z^t_{\ell}+\lambda \sigma_t^2)^2 \int_{\reals} \frac{1}{\sqrt{2\pi\sigma_t^2}} \exp \left(-\frac{t^2}{2\sigma_t^2}\right) Q\left( \frac{t+z^t_{\ell}+\lambda(\sigma_t^2+\NT)}{\sqrt{\NT}}\right) \dd t \\\label{eq:G13}
 &+K_1 \times 2 (z^t_{\ell}+\lambda \sigma_t^2) \int_{\reals} \frac{t}{\sqrt{2\pi\sigma_t^2}} \exp \left(-\frac{t^2}{2\sigma_t^2}\right) Q\left( \frac{t+z^t_{\ell}+\lambda(\sigma_t^2+\NT)}{\sqrt{\NT}}\right) \dd t.
 \end{align}
 
 Using the relations \cite[Eq. 8.259]{gradshteyn2014table}, \fref{eq:simplification1} and \fref{eq:simplification2} respectively for \fref{eq:G11}, \fref{eq:G12}  and \fref{eq:G13} we obtain
 \begin{align}
 \fref{eq:G1}=&\,K_1\left( \sigma_t^2+(z^t_{\ell}+\lambda \sigma_t^2)^2\right) Q\left( \frac{z^t_{\ell}+\lambda(\NT+\sigma_t^2)}{\sqrt{(\NT+\sigma_t^2)}} \right) \notag \\
 &+K_1 \left( \frac{\sigma_t^4 (z^t_{\ell}+\lambda(\NT+\sigma_t^2))}{\sqrt{2 \pi}(\NT+\sigma_t^2)^{\frac{3}{2}}} - \frac{2\sigma_t^2 (z^t_{\ell}+\lambda \sigma_t^2)}{\sqrt{2 \pi (\NT+\sigma_t^2)}} \right) \exp \left( -\frac{(z^t_{\ell}+\lambda(\NT+\sigma_t^2))^2}{2(\sigma_t^2+\NT)} \right).
 \end{align}
 Similarly, we have for \fref{eq:G2}  the following expression:
 \begin{align}
 \fref{eq:G2}=&\, K_2\left( \sigma_t^2+(z^t_{\ell}-\lambda \sigma_t^2)^2\right) Q\left( \frac{-z^t_{\ell}+\lambda(\NT+\sigma_t^2)}{\sqrt{(\NT+\sigma_t^2)}} \right) \notag \\
 &-K_2 \left( \frac{\sigma_t^4 (z^t_{\ell}-\lambda(\NT+\sigma_t^2))}{\sqrt{2 \pi}(\NT+\sigma_t^2)^{\frac{3}{2}}} - \frac{2\sigma_t^2 (z^t_{\ell}-\lambda \sigma_t^2)}{\sqrt{2 \pi (\NT+\sigma_t^2)}} \right) \exp \left( -\frac{(z^t_{\ell}-\lambda(\NT+\sigma_t^2))^2}{2(\sigma_t^2+\NT)} \right).
 \end{align}
 Therefore, we have
 \begin{align}
 \mathsf{G}(z^t_{\ell},\sigma_t^2)= &\, K_1\left( \sigma_t^2+(z^t_{\ell}+\lambda \sigma_t^2)^2\right) Q\left( \frac{z^t_{\ell}+\lambda(\NT+\sigma_t^2)}{\sqrt{(\NT+\sigma_t^2)}} \right)\notag\\
 &+K_2\left( \sigma_t^2+(z^t_{\ell}-\lambda \sigma_t^2)^2\right) Q\left( \frac{-z^t_{\ell}+\lambda(\NT+\sigma_t^2)}{\sqrt{(\NT+\sigma_t^2)}} \right)\notag\\
 &+K_1 \left( \frac{\sigma_t^4 (z^t_{\ell}+\lambda(\NT+\sigma_t^2))}{\sqrt{2 \pi}(\NT+\sigma_t^2)^{\frac{3}{2}}} - \frac{2\sigma_t^2 (z^t_{\ell}+\lambda \sigma_t^2)}{\sqrt{2 \pi (\NT+\sigma_t^2)}} \right) \exp \left( -\frac{(z^t_{\ell}+\lambda(\NT+\sigma_t^2))^2}{2(\sigma_t^2+\NT)} \right)\notag\\
 &-K_2 \left( \frac{\sigma_t^4 (z^t_{\ell}-\lambda(\NT+\sigma_t^2))}{\sqrt{2 \pi}(\NT+\sigma_t^2)^{\frac{3}{2}}} - \frac{2\sigma_t^2 (z^t_{\ell}-\lambda \sigma_t^2)}{\sqrt{2 \pi (\NT+\sigma_t^2)}} \right) \exp \left( -\frac{(z^t_{\ell}-\lambda(\NT+\sigma_t^2))^2}{2(\sigma_t^2+\NT)} \right)\notag\\
 &- \mathsf{F}^2(z^t_{\ell},\sigma_t^2)
 \end{align}
By replacing $K_1$, $K_2$, $\mathsf{F}$ and using $\mathrm{erfcx}(x)$ instead of $Q$ for improved numerical stability, we obtain 
 \begin{align*}
 \mathsf{G}(z^t_{\ell},\sigma_t^2)= \sigma_t^2 + \lambda^2 \sigma_t^4 (1- \eta^2) - \frac{4}{\gamma} \frac{\lambda \sigma_t^4}{\sqrt{2\pi (\NT+\sigma_t^2)}}.
 \end{align*}
 where we define
 \begin{align}
 \gamma = \mathrm{erfcx}(\alpha)+\mathrm{erfcx}(\beta).
 \end{align}

 \section{Proof of \fref{lem:CS_convexity}}\label{app:CS_convexity}
 
 We know that $\bmx=\bms+\bme$, where $\bms$ and $\bme$ are two independent random variables and  both $p(\bme)$ and $p(\bms)$ are log-concave \cite{bagnoli2005log}.
 By using properties of log-concavity \cite[Sec. 3.5.2]{boyd2004convex}, the convolution of two log-concave functions, here $p(\bmx)=p(\bms)*p(\bme)$, is also log-concave. Thus, $-\log p(\bmx)$ is convex. Clearly, $\frac{1}{2\No}\|\bmy-\bH \bmx\|_2^2$ is also convex and their sum $q(\bmx)$ remains convex.
 
Now, we can compute the gradient of $q(\bmx)$, which is given by
 \begin{align}
 \nabla_\bmx q(\bmx)=-\nabla_\bmx \left[ \log p(\bmx) \right]+\frac{1}{\No}(\bH \bmx-\bmy)^\Tran \bH,
 \end{align}
where
 \begin{align}
-\nabla_\bmx \left[ \log p(\bmx) \right]=
\begin{bmatrix}
\frac{\partial \log p(\bmx)}{\partial x_1} \\ \vdots \\ \frac{\partial \log p(\bmx)}{\partial x_N} 
\end{bmatrix}.
 \end{align}
The missing piece is to compute $\frac{\partial \log p(\bmx)}{\partial x_i}$. From \fref{lem:FG_CS}, we have
\begin{align}
\log p(\bmx)=N \log \left( \frac{\lambda}{2} \right) +N \frac{\lambda^2 \NT}{2} + \sum_{i=1}^{N} \log \left[ e^{\lambda x_i} Q\left(\frac{x_i+\lambda \NT}{\sqrt{\NT}}\right) + e^{- \lambda x_i} Q\left(\frac{-x_i+\lambda \NT}{\sqrt{\NT}}\right) \right].
\end{align}
As a consequence, we have
\begin{align}
\frac{\partial \log p(\bmx)}{\partial x_i} &= \frac{\lambda e^{\lambda x_i} Q(\frac{x_i+\lambda \NT}{\sqrt{\NT}}) - \lambda e^{- \lambda x_i} Q(\frac{-x_i+\lambda \NT}{\sqrt{\NT}})}{e^{\lambda x_i} Q(\frac{x_i+\lambda \NT}{\sqrt{\NT}}) + e^{- \lambda x_i} Q(\frac{-x_i+\lambda \NT}{\sqrt{\NT}}) } \\
&= \lambda \frac{\mathrm{erfcx(\frac{x_i+\lambda \NT}{\sqrt{2\NT}})}-\mathrm{erfcx(\frac{-x_i+\lambda \NT}{\sqrt{2\NT}})}}{\mathrm{erfcx(\frac{x_i+\lambda \NT}{\sqrt{2\NT}})}+\mathrm{erfcx(\frac{-x_i+\lambda \NT}{\sqrt{2\NT}})}}\\
&= \lambda \eta(x_i,\tau=0),
\end{align}
where $\eta(z^t_{\ell},\tau)$ is defined in \fref{eq:eta}. 

%

\end{document}